\documentclass[journal]{IEEEtran}

\usepackage{bibunits} 
\usepackage{amsmath,graphicx}
\usepackage{amssymb,amsfonts}
\usepackage{amsbsy,bbm}
\usepackage{amsthm}
\usepackage{array}
\usepackage{cite}
\usepackage{dsfont}
\usepackage[utf8]{inputenc}
\usepackage[T1]{fontenc}
\usepackage{hhline}
\usepackage{multicol,float}
\usepackage{pgfplots}
\usetikzlibrary{plotmarks}
\usepackage{tabularx}
\usepackage{tikz}
\usetikzlibrary{arrows.meta, positioning}
\usepackage{stmaryrd}
\usepackage{enumerate}
\pgfplotsset{compat=1.14}
\usepackage{url}

\newtheorem{theorem}{Theorem}
\newtheorem{lemma}{Lemma}

\newtheorem{remark}{Remark}

\newtheorem{assumption}{Assumption}

\begin{document}
\begin{bibunit}[IEEEtran] 

\title{A Robust Framework for Model Order Selection in Correlated Large-Dimensional CES Noise}

\author{\IEEEauthorblockN{Eugénie Terreaux}, \IEEEauthorblockN{Emmanuelle Jay}, \IEEEauthorblockN{Frédéric Pascal} and \IEEEauthorblockN{Jean-Philippe Ovarlez}
\thanks{Eugénie Terreaux is with DEMR, ONERA, Universit\'e Paris-Saclay, F-91120 Palaiseau, France (e-mail: eugenie.terreaux@onera.fr).}
\thanks{Emmanuelle Jay is with LBP AM, 36 Quai Henri IV, F75004, Paris, France (e-mail: emmanuelle.jay@@lbpam.com).}
\thanks{Frédéric Pascal is with L2S, CentraleSupélec, Université Paris-Saclay, F91190 Gif-sur-Yvette, France (e-mail: frederic.pascal@centralesupelec.fr).}
\thanks{Jean-Philippe Ovarlez is with SONDRA, CentraleSup\'elec, Universit\'e Paris-Saclay, F91190, Gif-sur-Yvette, France and also with DEMR, ONERA, Universit\'e Paris-Saclay, F-91123, Palaiseau, France (jean-philippe.ovarlez@onera.fr).}}

\maketitle

\begin{abstract}
This paper addresses model order selection under large-dimensional, correlated, non-Gaussian noise. Sources are assumed to be embedded in additive Complex Elliptically Symmetric (CES) noise with an unknown Toeplitz-structured scatter matrix. We propose a two-stage robust framework: (i) a noise-whitening step based on a Toeplitz-rectified $M$-estimator of the scatter matrix, and (ii) signal subspace rank inference via large-dimensional Random Matrix Theory (RMT). Almost sure consistency of the proposed estimators is established, together with explicit RMT eigenvalue upper bounds separating signal from noise components, in the regime where the observation dimension $m$ and the sample size $N$ grow proportionally. Three estimation branches are derived, based respectively on the sample covariance matrix (SCM), Maronna's $M$-estimator, and the distribution-free Tyler $M$-estimator for whitening. The methodology is validated on synthetic data, real hyperspectral images, EEG recordings, and financial data, with significant gains over AIC and unwhitened methods.
\end{abstract}

\begin{IEEEkeywords}
Model order selection, RMT, correlated noise, CES distribution, Maronna estimator, Tyler estimator, robust estimation.
\end{IEEEkeywords}

\IEEEpeerreviewmaketitle
\markboth{Submitted to IEEE Transactions on Signal Processing - T-SP-35633-2026}%
{A Robust Two-Stage Framework for Model Order Selection in Large-Dimensional Correlated CES Noise}
\section{Introduction}

\IEEEPARstart{M}{odel} order selection---the problem of estimating the
number of statistically significant components in a noisy observation---is fundamental in signal processing, with applications in wireless communications~\cite{Julia13}, array processing~\cite{Nadler10}, and many other areas~\cite{Ottersten92,Nadler11}. In the classical white-noise setting, the model order can be estimated from the eigenvalue structure of the sample covariance matrix via information-theoretic criteria such as the Akaike Information Criterion (AIC)~\cite{Akaike74}, and related subspace methods have been applied to source localization~\cite{schmidt86}, channel identification~\cite{Meraim97}, and waveform estimation~\cite{Liu96}. However, these methods become unreliable in the presence of large-dimensional and correlated data. Although some extensions have been proposed for correlated signals \cite{Bai98, Silverstein95}, they cannot be generalized to arbitrary scenarios, and whitening requires knowledge of the noise covariance structure, which is typically unknown~\cite{Cawse11}. Moreover, in high-dimensional regimes, the sample covariance matrix no longer exhibits the same properties as in the classical setting, leading to poor estimation performance \cite{Combernoux14, Nadler9, Farsi15}. In the large-dimensional regime, where both the number of snapshots $N$ and the signal dimension $m$ grow to infinity with a constant ratio, Random Matrix Theory provides powerful tools for model order selection. In particular, methods based on the asymptotic behavior of the largest eigenvalues of the covariance matrix have been proposed \cite{Couillet13}. RMT offers a rigorous probabilistic framework to analyze large random matrices; see \cite{Couillet11} for a comprehensive review and, e.g., \cite{Nadler9, Hachem13, Pascal16, Couillet15, Cawse10} for applications to detection, MUSIC, radar, and hyperspectral imaging, respectively. When the noise is correlated, model order estimation remains possible, for instance by analyzing the spacing between eigenvalues \cite{Vinogradova13, Vinogradova2015}. Moreover, many practical signal processing applications involve impulsive, heavy-tailed, or non-Gaussian perturbations that cannot be accurately modeled within the classical Gaussian framework. This motivates the use of more general distribution families together with robust covariance estimation techniques. When correlation and heavy-tailed behaviour occur jointly, both covariance estimation and eigenvalue-based model-order selection become significantly more challenging. To mitigate sensitivity to the noise distribution, robust model order selection methods have been proposed, notably in hyperspectral imaging \cite{Breloy16, Halimi16}. Nevertheless, these approaches often depend on unknown parameters \cite{Julia13} or are not well-suited for large-dimensional settings. Recent advances in RMT enable consistent estimation of covariance (or scatter) matrices for textured signals \cite{Couillet15b}. However, these results typically assume that the noise correlation structure is known and that the data can be whitened beforehand, which is rarely the case in practice. The present paper addresses both challenges simultaneously.

\medskip
In this work, we consider noise following a Complex Elliptically Symmetric (CES) distribution \cite{Kelker70, Yao73} (see also \cite{Ollila12} for a comprehensive overview). CES distributions are widely used in signal processing due to their flexibility, as they can model a broad class of random signals beyond the Gaussian assumption. A CES random vector can be decomposed into two components: a texture and a speckle. Its second-order structure is characterized by a scatter matrix, which coincides with the covariance matrix up to a scaling factor. CES models have been successfully applied in various fields, including hyperspectral imaging \cite{Ovarlez11}, radar clutter modeling \cite{Gini97}, and finance, particularly in portfolio optimization \cite{Yang15}. 

\medskip
This paper proposes a robust two-stage framework that combines Toeplitz-structured scatter matrix estimation, whitening, and RMT-based eigenvalue thresholding to estimate the number of sources in correlated CES noise, extending~\cite{Couillet15b, Vinogradova14, Vinogradova2015} to left-sided correlation with an \emph{unknown} scatter matrix under non-Gaussian noise. \\

\noindent\textbf{Contributions:}
\begin{enumerate}[(i)]\setlength{\itemsep}{0pt}\setlength{\parsep}{0pt}
  \item Almost sure consistency of Toeplitz-structured scatter matrix estimators under correlated CES noise, even in the presence of sources: SCM-based and Maronna-based (Theorems~\ref{thm:consist_scm} and~\ref{thm:consist_fp}, Supplementary Material) and Tyler-based (Theorem~\ref{thm:consist_tyl}~\cite{TerreauxICASSP18}).
  \item Robust whitening via Toeplitz rectification of the SCM (Section~\ref{sec::3}-A) or Maronna's $M$-estimator (Section~\ref{sec::4}-A), with asymptotic validity.
  \item Explicit closed-form RMT eigenvalue thresholds: Maronna-based threshold $t = \Phi_\infty(1+\sqrt{c})^2 /  [\xi\,(1-c\,\Phi_\infty)]$ (Theorems~\ref{thm:conv_scm} and~\ref{thm:conv_fp}) and Mar\v{c}enko--Pastur upper edge $(1+\sqrt{c})^2$ for the distribution-free Tyler branch 
  (Theorem~\ref{thm:tyler_est}).
\end{enumerate}
\noindent Detailed proofs are provided in the Supplementary Material
(Sections~S-A and S-B).

\medskip
\textit{Notations}: Matrices are in bold and capital, vectors in bold. Let $\mathbf{X}$ be a square matrix of size $s\times s$, $\left(\lambda_{i}(\mathbf{X})\right)_i$, $i \in \{1,\ldots,s\}$, are the eigenvalues of $\mathbf{X}$. $\operatorname{tr}(\mathbf{X})$ is the trace of the matrix $\mathbf{X}$. $\left\Vert \mathbf{X} \right\Vert$ stands for the spectral norm. Let $\mathbf{A}$ be a matrix, $\mathbf{A}^T$ is the transpose of $\mathbf{A}$ and $\mathbf{A}^H$ the Hermitian transpose of $\mathbf{A}$. $\mathbf{I}_p$ is the $p \times p$ identity matrix. For any $m$-vector $\mathbf{x}$, $\mathcal{L} :\, \mathbf{x} \mapsto \mathcal{L}(\mathbf{x})$ is the $m \times m$ matrix defined as the Toeplitz operator: $\left([\mathcal{L}(\mathbf{x})]_{i,j}\right)_{i\leq j} = x_{i-j}$ and $\left([\mathcal{L}(\mathbf{x})]_{i,j}\right)_{i>j} = x_{i-j}^\ast$. For $\mathbf{A} \in \mathbb{C}^{m \times m}$, the Toeplitz rectification $\mathcal{T}(\mathbf{A})$ is defined as $\mathcal{L}(\tilde{\mathbf{a}})$ with $\tilde{a}_k = \displaystyle\frac{1}{m} \sum_{i-j=k} A_{i,j}$ for $k \in \{0, \dots, m-1\}$. For any complex $z$, $z^{\star}$ is the conjugate of $z$. $\mathcal{R}e(\cdot)$ and $\mathcal{I}m(\cdot)$ denote the real and imaginary parts. The notation $\overset{a.s.}{\longrightarrow}$ means ``converges almost surely''. $C$ generally denotes a real positive constant.

\section{Model and Assumptions}
\label{sec::2}

This section presents the signal model and the assumptions required for the theoretical analysis. We consider the following standard source-plus-noise model. Let $\mathbf{y}_0, \ldots, \mathbf{y}_{N-1}$ be $N$ observations of dimension $m$, corresponding to $p$ mixed sources embedded in additive CES noise:
\begin{equation}
\mathbf{y}_i = \mathbf{M}\, \mathbf{s}_i + \sqrt{\tau_i} \,\mathbf{C}^{1/2} \, \mathbf{x}_i\,,
\quad i \in \{0,\ldots,N-1\} \,, \label{modele}
\end{equation}
The first term represents $p$ deterministic sources, while the second term models CES noise with texture $\tau_i$ and scatter 
matrix $\mathbf{C}$. We can rewrite this equation in a more compact form $\mathbf{Y} = \mathbf{M}\, \mathbf{S} + \mathbf{C}^{1/2}\, \mathbf{X}\, \mathbf{T}^{1/2}$, where
\begin{itemize}
  \item $\mathbf{Y}=[\mathbf{y}_0,\ldots,\mathbf{y}_{N-1}] \in \mathbb{C}^{m\times N}$,
  \item $\mathbf{M}\in \mathbb{C}^{m\times p}$ is the mixing matrix, whose columns $\left\{\mathbf{m}_j\right\}_{j\in\{1,\ldots,p\}}$ represent the source signatures.
  \item The matrix $\mathbf{S} = \left[\mathbf{s}_0, \ldots, \mathbf{s}_{N-1}\right] = \boldsymbol{\Gamma}^{1/2}\, \boldsymbol{\delta}^H \in \mathbb{C}^{p \times N}$ with $\boldsymbol{\delta} \in\mathbb{C}^{N\times p}$ having i.i.d.\ $\mathcal{CN}(0,1)$ entries, independent of $\mathbf{X}$, contains the source amplitude where each entry $s_{i,j}$ represents the contribution of source $j$ in observation $i$. 
  \item $\boldsymbol{\Gamma}\in\mathbb{C}^{p\times p}$ is Hermitian nonnegative definite,
  \item $\mathbf{X}=[\mathbf{x}_0,\ldots,\mathbf{x}_{N-1}]\in\mathbb{C}^{m\times N}$ has i.i.d.\ entries with zero mean, unit variance, and sub-Gaussian tails,
  \item $\mathbf{T}=\mathrm{diag}(\tau_0,\ldots,\tau_{N-1})$, with $\{\tau_i\}$ i.i.d.\ positive random variables referred to as textures, with an unspecified distribution.
  \item $\mathbf{C} = \mathcal{L}\bigl([c_0,\ldots,c_{m-1}]^T\bigr)\in \mathbb{C}^{m\times m}$ is a Hermitian nonnegative definite Toeplitz scatter matrix. This structural assumption facilitates estimation and is consistent with many practical applications. 
\end{itemize}

In the sequel, we will consider the following assumptions: 

\begin{assumption}[Large-dimensional regime]\label{ass:regime}
One assumes the usual random matrix regime, \textit{i.e.}, $N \rightarrow \infty$, $m \rightarrow \infty$ and $c_N = \displaystyle \frac{m}{N} \rightarrow c > 0$. 
\end{assumption}

\begin{assumption}[Model parameters]\label{ass:model}
\begin{enumerate}[(i)]
\item $p \in \mathbb{N}^{*}$ is fixed.
\item The coefficients $\{c_k\}_{k=0}^{m-1}$ are absolutely summable with $c_0 \neq 0$, and the empirical spectral distribution (ESD) of $\mathbf{C}$ converges a.s. to $\mu_{\mathbf{C}}$ as $m \to \infty$.
\item We impose the following condition on the texture $\tau$. We assume that $\tau$ has a regularly varying tail with index $\alpha_\tau > 2$, i.e., 
\[
P(\tau > x) = x^{-\alpha_\tau}\,\ell(x), \quad x \to \infty\, , \]
where $\ell:(0,\infty)\to (0,\infty)$ is slowly varying, i.e., $\ell(tx)/\ell(x)\to 1$ as $x\to\infty$ for every $t>0$ \cite{Bingham1987}. This assumption implies in particular that $\mathbb{E}[\tau^p] < \infty$ for all $p < \alpha_\tau$, 
and hence $\mathbb{E}[\tau^2] < \infty$. We normalize $\mathbb{E}[\tau]=1$ without loss of generality, since any positive constant factor in $\tau$ can be absorbed into $\mathbf{C}$. This ensures that $\check{\mathbf{C}}_{\mathrm{SCM}}$ in Theorem~\ref{thm:consist_scm} (Section~\ref{sec::3})  consistently estimates $\mathbf{C}$ (rather than a scaled version). Such distributions belong to the \emph{max-domain of attraction of the Fr\'echet distribution} with index $\alpha_\tau$ \cite{deHaan2006}, and are characterized by a polynomial tail decay modulated by a slowly varying function $\ell$. This class encompasses the two texture distributions used in the simulations. Examples include Inverse-Gamma$(\alpha,\beta)$ ($\alpha_\tau=\alpha>2$) and Student-$t^2(\nu_s)$ ($\alpha_\tau=\nu_s/2>2$), used in the simulations below.
\item $\mathbf{X}$ has i.i.d.\ complex entries with zero mean, unit variance, and sub-Gaussian tails, while heavy-tailed behavior is captured by the texture $\tau$. This assumption allows the use of concentration inequalities for quadratic forms (e.g., Hanson--Wright inequality \cite{Hanson1971, Vershynin2018, Rudelson2013}).
\item The limiting ratio $c$ satisfies $c < \Phi_\infty^{-1}$, where $\Phi_\infty \triangleq \lim_{x\to\infty} x\,u(x)$ is the asymptotic value of the weight function $u$ defined in Section~\ref{sec::3}-B, Eq.~\eqref{eqsigma1}. This condition ensures the existence and uniqueness of Maronna's $M$-estimator \cite{Maronna76}. (This assumption is vacuously satisfied for the Tyler estimator of Section~\ref{sec::5}, where $u(x)=m/x$ and no such constraint is needed.)
\end{enumerate}
\end{assumption}

\begin{assumption}[Signal parameters]\label{ass:signal}
\begin{enumerate}[(i)]
\item In each column of $\mathbf{M}$, the coefficients are absolutely summable, that is, for all fixed $j$, $\displaystyle\sum_{i=1}^m | [\mathbf{M}]_{i,j} | < \infty$, and $\left\Vert \mathbf{M} \right\Vert < \infty$. This is a common assumption in several applications, and especially in hyperspectral imaging.
\item $\boldsymbol{\Gamma} \in \mathbb{C}^{p\times p}$ has absolutely summable coefficients (i.e., $\|\boldsymbol{\Gamma}\|_1
  < \infty$), which implies $\|\boldsymbol{\Gamma}\| < \infty$.
\end{enumerate}
\end{assumption}

The overall procedure is summarized in Fig.~\ref{fig:pipeline}. The three whitening strategies correspond to the SCM-based, Maronna-based, and Tyler-based estimators, which respectively provide a baseline, a robust parametric, and a distribution-free approach.

\begin{figure}
\centering
\includegraphics[width=\columnwidth]{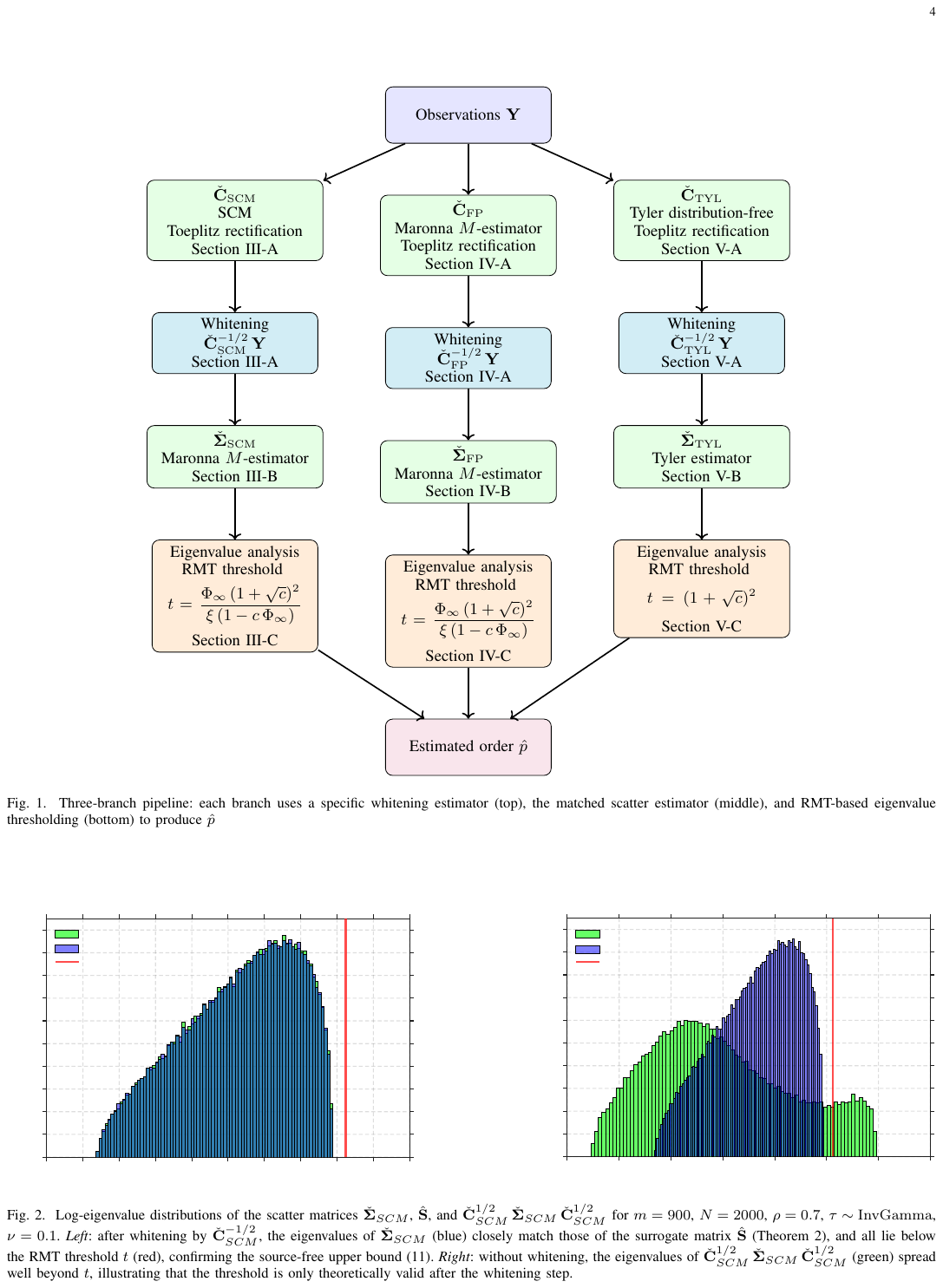}
\caption{Three-branch pipeline: each branch uses a specific whitening estimator (\textit{top}), the matched scatter estimator (\textit{middle}), and RMT-based eigenvalue thresholding (\textit{bottom}) to produce $\hat{p}$}
\label{fig:pipeline}
\end{figure}

\section{SCM-Based Whitening with Robust Scatter Estimation}
\label{sec::3}

In this section, we exploit the consistency of the Sample Covariance Matrix (SCM) to perform a whitening step and subsequently estimate the model order using Maronna's $M$-estimator. Model order estimation via Maronna's $M$-estimator was studied in~\cite{Couillet15b} for the spiked model with CES \emph{white} noise.
Here we address the more challenging scenario of \emph{correlated} CES noise. For clarity, Table~\ref{tab:matrices} summarizes the main matrices used throughout the paper.

\begin{table}[htbp]
\centering
\caption{Summary of main matrices and their roles}
\begin{tabular}{ll}
\hline
\textbf{Notation} & \textbf{Description} \\
\hline
$\mathbf{Y}$ & Observation matrix \\
$\mathbf{M}$ & Mixing (signal) matrix \\
$\mathbf{S}$ & Source matrix \\
$\mathbf{X}$ & Noise (speckle) matrix \\
$\mathbf{T}$ & Texture diagonal matrix \\
$\mathbf{C}$ & True noise scatter matrix (Toeplitz) \\
$\hat{\mathbf{C}}_{SCM}$ & Sample covariance matrix \\
$\check{\mathbf{C}}_{SCM}$ & Toeplitz-rectified SCM (consistent estimator of $\mathbf{C}$) \\
$\hat{\mathbf{C}}_{FP}$ & Maronna estimator \\
$\widetilde{\mathbf{C}}_{FP}$ & Toeplitz-rectified Maronna estimator\ (scaled) \\
$\check{\mathbf{C}}_{FP}$ & Toeplitz-rectified Maronna estimator\ of $\mathbf{C}$ (rescaled $\widetilde{\mathbf{C}}_{FP}$) \\
$\hat{\mathbf{C}}_{TYL}$ & Tyler $M$-estimator \\
$\check{\mathbf{C}}_{TYL}$ & Toeplitz-rectified Tyler est.\ of $\mathbf{C}$ \\
$\mathbf{Y}_{w}$ & Ideally whitened data \\
$\check{\mathbf{Y}}_{wSCM}$ & Whitened data (SCM-based) \\
$\check{\mathbf{Y}}_{wFP}$ & Whitened data (Maronna-based) \\
$\hat{\mathbf{Y}}_{w}$ & Whitened data (Tyler-based) \\

$\check{\boldsymbol{\Sigma}}_{SCM}$ & $M$-estimator on SCM-whitened data \\
$\check{\boldsymbol{\Sigma}}_{FP}$ & $M$-estimator on Maronna-whitened data \\
$\hat{\mathbf{S}}$ & Surrogate/oracle RMT matrix (decoupled weights) \\
$\hat{\mathbf{S}}_w$ & Surrogate RMT matrix for Tyler branch \\
$\check{\boldsymbol{\Sigma}}_{TYL}$ & Tyler estimator of Tyler-whitened data \\
\hline
\end{tabular}
\label{tab:matrices}
\end{table}

\subsection{Whitening Step}
We set the following two matrices:
\begin{equation}
\check{\mathbf{C}}_{SCM} = \mathcal{T}\!\left(\hat{\mathbf{C}}_{SCM}\right), \qquad
\hat{\mathbf{C}}_{SCM} = \frac{1}{N}\,\mathbf{Y}\,\mathbf{Y}^H\, .
\label{eq1}
\end{equation}

\begin{theorem}[Consistency of $\check{\mathbf{C}}_{SCM}$]
\label{thm:consist_scm}
Under Assumptions~\ref{ass:regime}--\ref{ass:signal},
\begin{equation}
\left\Vert \check{\mathbf{C}}_{SCM} - \mathbf{C} \right\Vert \overset{a.s.}{\longrightarrow} 0 \,.
\label{eqth1}
\end{equation}
The Toeplitz rectification asymptotically suppresses the low-rank signal contribution under Assumption~3(i), thereby preserving consistency even in the presence of sources.
\end{theorem}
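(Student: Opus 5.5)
Since $\check{\mathbf{C}}_{SCM}-\mathbf{C}=\mathcal{T}(\hat{\mathbf{C}}_{SCM}-\mathbf{C})$ is Hermitian Toeplitz, the natural route is the standard bound for Toeplitz matrices: for a Hermitian Toeplitz matrix $\mathcal{L}(\mathbf{d})$, $\|\mathcal{L}(\mathbf{d})\|\le \sup_{\omega}\bigl|\sum_{|k|<m} d_k e^{\mathrm{i}k\omega}\bigr|$. I will control this symbol uniformly in $\omega$. Expanding the model, I split
\begin{align*}
\hat{\mathbf{C}}_{SCM}-\mathbf{C} ={}& \Bigl(\tfrac1N\mathbf{C}^{1/2}\mathbf{X}\mathbf{T}\mathbf{X}^H\mathbf{C}^{1/2}-\mathbf{C}\Bigr) + \tfrac1N\mathbf{M}\mathbf{S}\mathbf{S}^H\mathbf{M}^H \\
&+ \tfrac1N\bigl(\mathbf{M}\mathbf{S}\mathbf{T}^{1/2}\mathbf{X}^H\mathbf{C}^{1/2}+\mathrm{h.c.}\bigr),
\end{align*}
and, since $\mathcal{T}$ is linear, I treat the noise, signal and cross terms separately.

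\textbf{Signal and cross terms.} For any matrix $\mathbf{A}$, the symbol of $\mathcal{T}(\mathbf{A})$ at $\omega$ equals $\frac1m\mathbf{e}_\omega^H\mathbf{A}\mathbf{e}_\omega$ with $\mathbf{e}_\omega=(e^{\mathrm{i}j\omega})_j$, up to conjugation conventions. This gives $\|\mathcal{T}(\mathbf{A})\|\le \frac1m\sup_\omega|\mathbf{e}_\omega^H\mathbf{A}\mathbf{e}_\omega|$. For the signal term, $|\mathbf{m}_j^H\mathbf{e}_\omega|\le\sum_i|M_{ij}|<\infty$ by Assumption~\ref{ass:signal}(i), and $\|\frac1N\mathbf{S}\mathbf{S}^H\|\to\|\boldsymbol{\Gamma}\|$ a.s. by the strong law, since $p$ is fixed. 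Hence this contribution is $O(1/m)$. The cross term is handled the same way: bound it by $\frac1m\|\mathbf{M}^H\mathbf{e}_\omega\|\,\|\frac1N\mathbf{S}\mathbf{T}^{1/2}\mathbf{X}^H\mathbf{C}^{1/2}\mathbf{e}_\omega\|$. The second factor has norm $O(\sqrt{m/N})\cdot O(\sqrt{\|\mathbf{C}\|})$ a.s. uniformly on a polynomial grid in $\omega$, using Hanson--Wright with $\mathbb{E}\tau=1$. The whole cross contribution is therefore $O(m^{-1/2})$. This is exactly the mechanism stated in the theorem: the rectification averages out the low-rank signal part.

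\textbf{Noise term (main obstacle).} Here I must show $\sup_\omega\bigl|\frac1{mN}\sum_{n}\tau_n(|\mathbf{e}_\omega^H\mathbf{C}^{1/2}\mathbf{x}_n|^2 - \mathbf{e}_\omega^H\mathbf{C}\mathbf{e}_\omega)\bigr| \to 0$ a.s., using $\mathbb{E}\tau=1$. The random variables $q_n(\omega)=\frac1m|\mathbf{e}_\omega^H\mathbf{C}^{1/2}\mathbf{x}_n|^2$ have mean $\frac1m\mathbf{e}_\omega^H\mathbf{C}\mathbf{e}_\omega\le\sum_k|c_k|<\infty$, and they are sub-exponential with bounded $\psi_1$-norm by Hanson--Wright, because $\|\mathbf{C}\|\le\sum|c_k|$ by Assumption~\ref{ass:model}(ii). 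I split the deviation as
\[
\tfrac1N\sum_n\tau_n\bigl(q_n-\mathbb{E}q_n\bigr)+\Bigl(\tfrac1N\sum_n\tau_n-1\Bigr)\mathbb{E}q_n .
\]
The second piece vanishes by the SLLN. For the first, I condition on $\mathbf{T}$ and apply Bernstein's inequality for weighted sub-exponential sums. The variance proxy is $\frac1{N^2}\sum\tau_n^2$, which is $O(1/N)$ a.s. since $\mathbb{E}\tau^2<\infty$; the max-weight term is $\max_n\tau_n/N$, which tends to $0$ a.s. by the regular variation with $\alpha_\tau>2$. These give tail bounds of the form $\exp(-C\min(N\epsilon^2, N\epsilon/\max\tau_n))$.

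The delicate point is making this summable in $N$ after a union bound over a grid of $O(m^2)$ frequencies. To get it, I will truncate: on the event $\max_n\tau_n\le N^{1/2-\eta}$, which holds eventually a.s. by Borel--Cantelli since $\alpha_\tau>2$, the exponent is at least $N^{1/2+\eta}\epsilon$, which beats the polynomial grid size. I then extend from the grid to all $\omega$ by a Lipschitz argument: the derivative of the symbol is bounded by $m\cdot O(\|\hat{\mathbf{C}}_{SCM}\|)$, and $\|\hat{\mathbf{C}}_{SCM}\|$ is a.s. bounded, though this needs a crude bound like $\max\tau_n\cdot\|\frac1N\mathbf{X}\mathbf{X}^H\|$ times a grid of size $m^3$. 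Borel--Cantelli then yields a.s. convergence, and combining the three terms proves \eqref{eqth1}.
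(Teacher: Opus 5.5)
Your architecture matches the paper's: bound the Hermitian Toeplitz error by the supremum of its symbol, split into noise, cross and signal parts, and handle the noise by a polynomial grid, truncation of $\max_n\tau_n$, Bernstein and a Lipschitz extension. The genuine gap is your opening identity $\check{\mathbf{C}}_{SCM}-\mathbf{C}=\mathcal{T}(\hat{\mathbf{C}}_{SCM}-\mathbf{C})$, which is false and hides a term.

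The rectification $\mathcal{T}$ divides each diagonal sum by $m$, not by $m-|k|$. So $\mathcal{T}(\mathbf{C})$ has $k$-th diagonal $(1-|k|/m)\,c_k$, and $\mathcal{T}(\mathbf{C})\neq\mathbf{C}$. Your noise term is in fact centred at $\frac1m\mathbf{e}_\omega^H\mathbf{C}\mathbf{e}_\omega$, which is the symbol of $\mathcal{T}(\mathbf{C})$. Your three estimates therefore only establish $\|\check{\mathbf{C}}_{SCM}-\mathcal{T}(\mathbf{C})\|\to0$. You must add the deterministic bias bound
\[
\|\mathcal{T}(\mathbf{C})-\mathbf{C}\|\le\sup_{\omega}\Bigl|\sum_{|k|<m}\tfrac{|k|}{m}\,c_k\,e^{\mathrm{i}k\omega}\Bigr|\le\sum_{|k|<m}\tfrac{|k|}{m}\,|c_k|\longrightarrow0,
\]
which follows from Assumption~2-(ii) by dominated convergence. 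This is the content of Term~2 in the paper's decomposition. The paper asserts that term is identically zero, which is not literally right for the same reason: $\mathbf{d}_m^H\mathbf{C}\mathbf{d}_m$ is the Bartlett-weighted symbol, not $\gamma_m$. It does, however, vanish asymptotically by exactly this bound.

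Elsewhere your route is simpler or tighter than the paper's.
\begin{itemize}
\item \textbf{Signal term.} It is $\frac1m\|\mathbf{M}^H\mathbf{e}_\omega\|^2\|\frac1N\mathbf{S}\mathbf{S}^H\|=O(1/m)$ uniformly in $\omega$. This removes the need for the paper's mean/fluctuation split, sub-exponential Bernstein and grid.
\item \textbf{Cross term.} This needs no grid either. The crude bound $\frac2m\|\mathbf{M}^H\mathbf{e}_\omega\|\,\|\mathbf{C}\|^{1/2}\sqrt m\,\frac1N\|\mathbf{S}\|\,\|\mathbf{T}\|_\infty^{1/2}\|\mathbf{X}\|=O\bigl((\max_n\tau_n)^{1/2}/\sqrt m\bigr)$ already tends to zero a.s.
\item \textbf{Noise fluctuation.} Your conditional $\psi_1$-Bernstein, with summands of $\psi_1$-norm $O(\tau_n\|\mathbf{C}\|)$, is the right tool. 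The paper's deterministic bound $|\tau_j\varepsilon_j|\le 2mN^\kappa\|\mathbf{C}\|$ carries an extra factor $m$, and with it the exponent $Nx/B_N$ does not diverge.
\end{itemize}

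Two minor repairs are needed.
\begin{itemize}
\item $\|\hat{\mathbf{C}}_{SCM}\|$ is not a.s. bounded; it grows at least like $\max_n\tau_n$. Use your crude bound with a finer grid, as you suggest.
\item The event $\{\max_{n<N}\tau_n\le N^{1/2-\eta}\}$ does hold eventually a.s., by Borel--Cantelli on the single events $\{\tau_n>n^{1/2-\eta}\}$, for $\eta$ small enough that $\mathbb{E}[\tau^{2/(1-2\eta)}]<\infty$. However, its failure probabilities are in general not summable when $\alpha_\tau\le4$. So combine it with Bernstein by conditioning on the whole texture sequence (independent of $\mathbf{X}$) and applying Borel--Cantelli conditionally. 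Do not simply add the probability that the truncation fails.
\end{itemize}
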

\begin{IEEEproof}[Proof Outline]
See Supplementary Material S~A.1. The key step is that $\mathbf{M}\,\mathbf{S}\,\mathbf{S}^H\mathbf{M}^H/N$ is a rank-$p$ perturbation; Assumption~\ref{ass:signal}-(i) (absolute summability of the columns of $\mathbf{M}$) ensures that its Toeplitz rectification vanishes in spectral norm almost surely as $m, N\to\infty$ (see the fourth term in the decomposition of Supplementary Material S-A.1). Combined with the noise term convergence,
this gives $\check{\mathbf{C}}_{SCM}\overset{a.s.}{\to}
\mathbb{E}[\tau]\,\mathbf{C}=\mathbf{C}$ since $\mathbb{E}[\tau]=1$.
\end{IEEEproof}

The whitened observations are:
\begin{equation}
\check{\mathbf{Y}}_{wSCM} = \check{\mathbf{C}}_{SCM}^{-1/2} \, \mathbf{M} \, \boldsymbol{\Gamma}^{1/2}\boldsymbol{\delta}^H +
\check{\mathbf{C}}_{SCM}^{-1/2} \, \mathbf{C}^{1/2}\,\mathbf{X} \, \mathbf{T}^{1/2}\,.
\label{scm_blanc}
\end{equation}

\subsection{Estimation of the Scatter Matrix}

Having established that $\check{\mathbf{C}}_{SCM}$ consistently estimates $\mathbf{C}$, we apply Maronna's $M$-estimator \cite{Maronna76} to the whitened observations $\check{\mathbf{Y}}_{wSCM}$ in~\eqref{scm_blanc} to robustly estimate the scatter matrix of the decorrelated signal. The estimator $\check{\boldsymbol{\Sigma}}_{SCM}$ is the unique solution of:
\begin{equation}
\boldsymbol{\Sigma} = \frac{1}{N} \displaystyle\sum_{i=0}^{N-1}
u\!\left(\frac{1}{m}\,\check{\mathbf{y}}_{wSi}^H\,\boldsymbol{\Sigma}^{-1}\,
\check{\mathbf{y}}_{wSi}\right) \check{\mathbf{y}}_{wSi} \, \check{\mathbf{y}}_{wSi}^H \,,
\label{eqsigma1}
\end{equation}
where $\check{\mathbf{Y}}_{wSCM} = [\check{\mathbf{y}}_{wS0}, \ldots, \check{\mathbf{y}}_{wS\, N-1}]$ and $u:[0,+\infty)\to (0,+\infty)$ satisfies:
\begin{enumerate}[(i)]
\item $u$ is nonnegative, continuous, and non-increasing \cite{Mahot12};
\item $\phi(x) = x\,u(x)$ is increasing and bounded, with $\displaystyle\lim_{x\to\infty}\phi(x) = \Phi_{\infty} > 1$;
\item $c < \Phi_{\infty}^{-1}$ \quad (cf.\ Assumption~\ref{ass:model}-(v)).
\end{enumerate}

These conditions guarantee the existence and uniqueness of the solution \cite{Maronna76, Mahot12}.

\subsection{Model Order Selection}

\begin{figure*}[htbp]
\centering
\includegraphics[width=0.95\columnwidth]{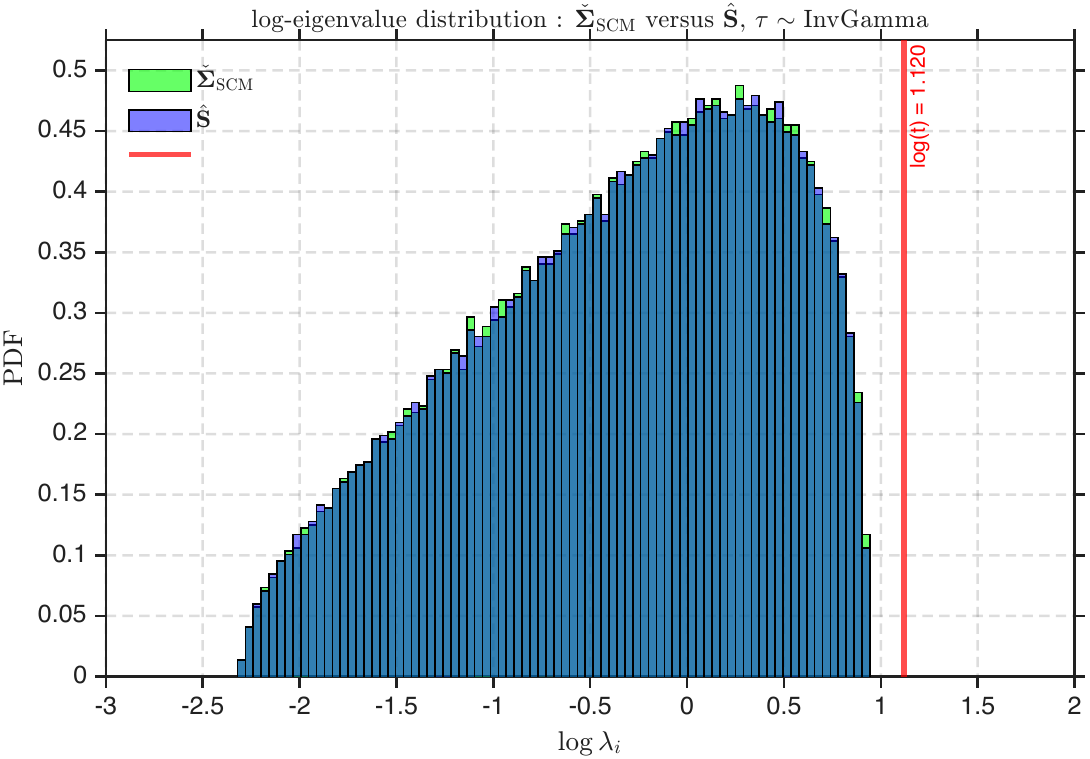} \hfill 
\includegraphics[width=0.95\columnwidth]{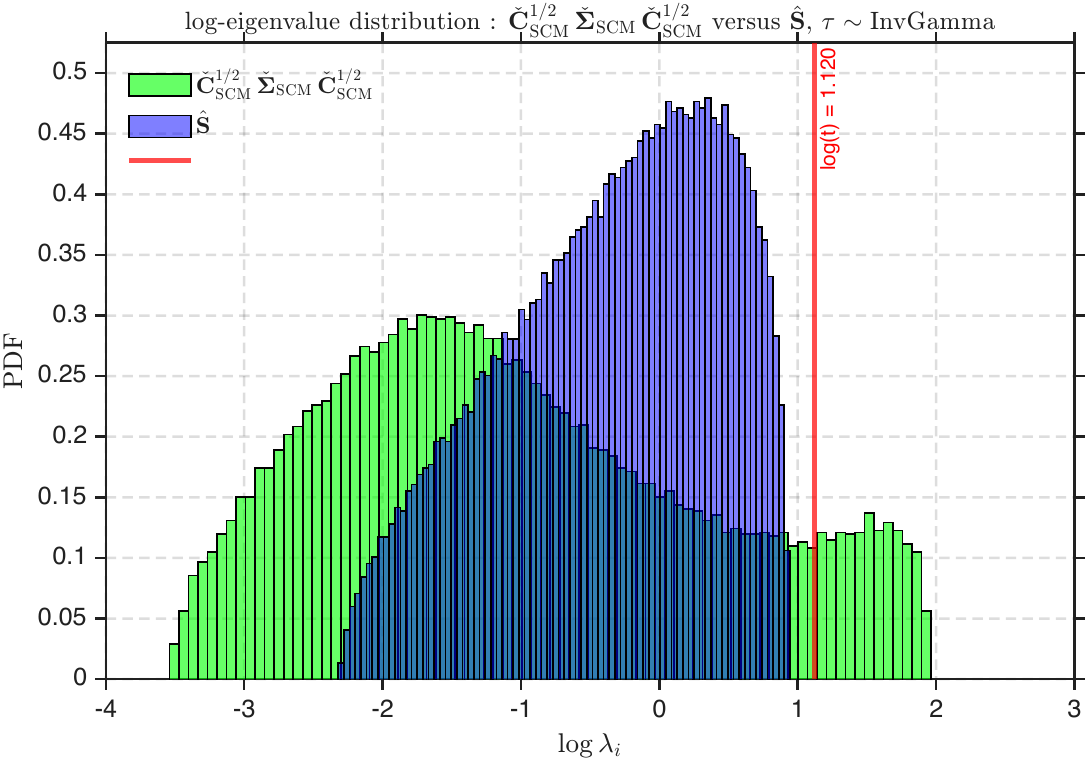}
\caption{Log-eigenvalue distributions ($m=900$, $N=2000$, $\rho=0.7$, 
$\tau\!\sim\!\mathrm{InvGamma}$, $\nu=0.1$). \textit{Left}: after whitening by $\check{\mathbf{C}}_{SCM}^{-1/2}$, the log-eigenvalue distribution of $\check{\boldsymbol{\Sigma}}_{SCM}$ (green) closely tracks that of $\hat{\mathbf{S}}$ (blue), and all eigenvalues lie below the RMT threshold $t$ (red), validating Theorem~\ref{thm:conv_scm} and the source-free upper bound~\eqref{eqseuil}. \textit{Right}: without whitening, the eigenvalues of $\check{\mathbf{C}}_{SCM}^{1/2}\, \check{\boldsymbol{\Sigma}}_{SCM}\,\check{\mathbf{C}}_{SCM}^{1/2}$ (green) spread well beyond $t$, confirming that the threshold has no theoretical justification unless the correlation structure is removed beforehand.}
\label{fig:whitening_comparison}
\end{figure*}

\paragraph{Threshold Derivation}
To count sources, we study the eigenvalue distribution of $\check{\boldsymbol{\Sigma}}_{SCM}$. A direct RMT analysis is intractable because the weight $u\!\left(\displaystyle\frac{1}{m}\,\check{\mathbf{y}}_{wSi}^H\,
\check{\boldsymbol{\Sigma}}_{SCM}^{-1}\,\check{\mathbf{y}}_{wSi}\right)$ depends on the very vector $\check{\mathbf{y}}_{wSi}$ it weights, creating a circular dependency. To break this dependency, we
introduce a surrogate matrix defined on the \emph{ideally} whitened signal. We therefore work with an auxiliary matrix defined on the \emph{ideally whitened} signal $\mathbf{Y}_w = \mathbf{C}^{-1/2}\, \mathbf{Y}=[\mathbf{y}_{w0}, \ldots, \mathbf{y}_{w N-1}]$:
\begin{equation}
\mathbf{Y}_w = \mathbf{C}^{-1/2}\,\mathbf{M}\,\boldsymbol{\Gamma}^{1/2}\, \boldsymbol{\delta}^H
+ \mathbf{X}\,\mathbf{T}^{1/2} \,.
\label{scm_normal}
\end{equation}

To overcome the dependence between the weights and the data, we introduce the surrogate matrix:
\begin{equation}
  \hat{\mathbf{S}}
  \triangleq \frac{1}{N}\sum_{i=0}^{N-1}
     v(\tau_i\,\xi)\,\mathbf{y}_{wi}\,\mathbf{y}_{wi}^H
   = \frac{1}{N}\sum_{i=0}^{N-1}
     v(\tau_i\,\xi)\,\tau_i\,\mathbf{x}_i\,\mathbf{x}_i^H.
  \label{matrixS}
\end{equation}
with $g(x) = x/(1-c\,\phi(x))$, which is strictly increasing on $[0,+\infty)$ under Assumption~2-(v) (since $c\,\phi(x) < c\,\Phi_\infty < 1$ for all $x$), and hence invertible on its range. Let $\psi(x)=x\,v(x)$, and let $\xi>0$ be the unique positive solution of
\begin{equation}
\displaystyle\frac{1}{N}\sum_{i=0}^{N-1} \frac{\psi(\tau_i\,\xi)}{1+c\,\psi(\tau_i\,\xi)} = 1\, .
\label{xi}
\end{equation}
Existence and uniqueness of $\xi$ follow from the monotonicity of the left-hand side: since $\psi(0)=0$ and $\psi(\tau_i\,\xi)\to\Phi_\infty$ as $\xi\to\infty$,
the left-hand side of~\eqref{xi} increases continuously from $0$ to $\displaystyle\frac{\Phi_\infty}{1+c\,\Phi_\infty}$, which exceeds $1$ under Assumption~\ref{ass:model}-(v) (since $c < \Phi_\infty^{-1}$ implies $\Phi_\infty(1-c)>1$), guaranteeing existence and uniqueness of~$\xi$ (see \cite{Couillet15b} for details). The weights $\{v(\tau_i\,\xi)\}$ are independent of $\{\mathbf{x}_i\}$, which makes $\hat{\mathbf{S}}$ more tractable. This decoupling is key to enabling a tractable RMT analysis.

It is proven in \cite{Couillet15b} that $\left\|\hat{\boldsymbol{\Sigma}} - \hat{\mathbf{S}}\right\|\overset{a.s.}{\to}0$, where $\hat{\boldsymbol{\Sigma}}$ is Maronna's $M$-estimator applied to the \emph{ideally} whitened data $\mathbf{Y}_w$. The key insight is that $\hat{\mathbf{S}}$ replaces the data-dependent weights
$u\!\left(\displaystyle\frac{1}{m}\,\mathbf{y}_{wi}^H\,\hat{\boldsymbol{\Sigma}}^{-1}\,\mathbf{y}_{wi}\right)$ by the \emph{oracle} weights $v(\tau_i\,\xi)$, which are independent of $\mathbf{x}_i$, making $\hat{\mathbf{S}}$ tractable for RMT analysis. The following theorem then extends this to the empirically whitened case.

\begin{theorem}[Convergence of $\check{\boldsymbol{\Sigma}}_{SCM}$]
\label{thm:conv_scm}
Under Assumptions~\ref{ass:regime}--\ref{ass:signal},
\begin{equation}
\left\Vert \check{\boldsymbol{\Sigma}}_{SCM} - \hat{\mathbf{S}} \right\Vert
\overset{a.s.}{\longrightarrow} 0 \,.
\label{eqcouilletmod}
\end{equation}
\end{theorem}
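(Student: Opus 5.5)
We will prove the statement by a triangle inequality through the Maronna estimator $\hat{\boldsymbol{\Sigma}}$ of the ideally whitened data $\mathbf{Y}_w$:
\[
\bigl\|\check{\boldsymbol{\Sigma}}_{SCM}-\hat{\mathbf{S}}\bigr\| \le \bigl\|\check{\boldsymbol{\Sigma}}_{SCM}-\hat{\boldsymbol{\Sigma}}\bigr\| + \bigl\|\hat{\boldsymbol{\Sigma}}-\hat{\mathbf{S}}\bigr\| .
\]
The second term vanishes almost surely by the result of \cite{Couillet15b} recalled above. That result is stated for white CES noise plus a finite-rank spike. Here the spike is $\mathbf{C}^{-1/2}\mathbf{M}\boldsymbol{\Gamma}^{1/2}\boldsymbol{\delta}^H$, which has bounded norm because $\|\mathbf{C}^{-1}\|$ is bounded: the absolutely summable symbol of $\mathbf{C}$ stays away from zero. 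The whole effort therefore goes into the first term, which measures how sensitive the fixed-point estimator is to replacing $\mathbf{C}^{-1/2}$ by $\check{\mathbf{C}}_{SCM}^{-1/2}$.

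Write $\check{\mathbf{y}}_{wSi}=\mathbf{A}_N\mathbf{y}_{wi}$ with $\mathbf{A}_N=\check{\mathbf{C}}_{SCM}^{-1/2}\mathbf{C}^{1/2}$. By Theorem~\ref{thm:consist_scm}, together with the bounds on $\|\mathbf{C}\|$ and $\|\mathbf{C}^{-1}\|$ and continuity of the matrix square root and inverse on matrices with spectrum in a fixed compact subset of $(0,\infty)$, we get $\|\mathbf{A}_N-\mathbf{I}_m\|\overset{a.s.}{\to}0$.

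\textbf{First step.} Exploit the affine equivariance of Maronna's estimator. If $\boldsymbol{\Sigma}$ solves \eqref{eqsigma1} for the data $\mathbf{y}_{wi}$, then $\mathbf{A}_N\boldsymbol{\Sigma}\mathbf{A}_N^H$ solves it for $\mathbf{A}_N\mathbf{y}_{wi}$, because the quadratic forms $\frac1m\mathbf{y}^H\mathbf{A}^H(\mathbf{A}\boldsymbol{\Sigma}\mathbf{A}^H)^{-1}\mathbf{A}\mathbf{y}$ are unchanged. By uniqueness (conditions (i)--(iii)) we obtain exactly
\[
\check{\boldsymbol{\Sigma}}_{SCM}=\mathbf{A}_N\hat{\boldsymbol{\Sigma}}\mathbf{A}_N^H .
\]
One subtlety: $\mathbf{A}_N$ is data dependent. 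This is harmless, since equivariance is a purely algebraic, pathwise identity for each fixed realization.

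\textbf{Second step.} From the identity,
\[
\bigl\|\check{\boldsymbol{\Sigma}}_{SCM}-\hat{\boldsymbol{\Sigma}}\bigr\|\le \|\mathbf{A}_N-\mathbf{I}\|\,\|\hat{\boldsymbol{\Sigma}}\|\,\bigl(\|\mathbf{A}_N\|+1\bigr).
\]
It remains to show that $\limsup\|\hat{\boldsymbol{\Sigma}}\|<\infty$ almost surely. This follows from $\|\hat{\boldsymbol{\Sigma}}-\hat{\mathbf{S}}\|\to0$ and a bound on $\|\hat{\mathbf{S}}\|$. Since $\psi(x)=xv(x)$ is bounded (as $\phi$ is bounded), we have $v(\tau_i\xi)\tau_i\le \sup\psi/\xi$. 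Hence $\|\hat{\mathbf{S}}\|\le (\sup\psi/\xi)\,\|\frac1N\mathbf{X}\mathbf{X}^H\|$. The last factor converges to $(1+\sqrt c)^2$ by Bai--Yin under sub-Gaussian entries. Finally, $\xi$ stays bounded away from zero almost surely, because of \eqref{xi} and the strong law for the $\tau_i$.

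\textbf{Main obstacle.} I expect the main difficulty to be justifying the invocation of \cite{Couillet15b}, namely that $\|\hat{\boldsymbol{\Sigma}}-\hat{\mathbf{S}}\|\to0$ with the present spike and with heavy-tailed textures having only $\alpha_\tau>2$. This requires checking that $\max_i\tau_i/N\to0$, which holds by regular variation. It also requires checking that the finite-rank signal term modifies only finitely many quadratic forms uniformly. If a direct citation is insufficient, the fallback is to rerun the Couillet argument: control $\max_i|\frac1m\mathbf{y}_{wi}^H\hat{\boldsymbol{\Sigma}}_{(i)}^{-1}\mathbf{y}_{wi}-\tau_i\gamma|$ via Hanson--Wright plus a union bound, treating the rank-$p$ signal contribution as an $O(1/m)$ perturbation of these forms.
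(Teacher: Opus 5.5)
Your proposal is essentially the paper's proof. Your identity $\check{\boldsymbol{\Sigma}}_{SCM}=\mathbf{A}_N\hat{\boldsymbol{\Sigma}}\mathbf{A}_N^H$ is exactly the paper's $\check{\boldsymbol{\Sigma}}=\check{\mathbf{C}}^{-1/2}\mathbf{C}^{1/2}\hat{\boldsymbol{\Sigma}}\mathbf{C}^{1/2}\check{\mathbf{C}}^{-1/2}$, obtained there by the same substitution-plus-uniqueness argument, and it is followed by the same triangle-inequality bound, Lipschitz continuity of $\mathbf{A}\mapsto\mathbf{A}^{-1/2}$ combined with Theorem~\ref{thm:consist_scm}, and the cited Couillet result for $\|\hat{\boldsymbol{\Sigma}}-\hat{\mathbf{S}}\|\to 0$ and bounded $\|\hat{\boldsymbol{\Sigma}}\|$.

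One caveat you share with the paper: the step $\|\mathbf{A}_N-\mathbf{I}\|\to 0$ (and your bound on the whitened spike) needs $\liminf_m \lambda_{\min}(\mathbf{C})>0$, which does not follow from absolute summability of $\{c_k\}$ (e.g.\ $c_0=1$, $c_{\pm 1}=1/2$ gives symbol $1+\cos\lambda$ and $\lambda_{\min}(\mathbf{C})\sim \pi^2/(2m^2)$), so a symbol bounded away from zero must be assumed explicitly; also, when $\mathbf{M}\neq\mathbf{0}$ your bound on $\|\hat{\mathbf{S}}\|$ must include the $v$-weighted signal part, which is harmless since $v\le u(0)$.
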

\begin{IEEEproof}[Proof outline]
See Supplementary Material S-B. Define $\widetilde{\boldsymbol{\Sigma}}\triangleq
\mathbf{C}^{-1/2}\,\check{\mathbf{C}}^{1/2}\,\check{\boldsymbol{\Sigma}}
\,\check{\mathbf{C}}^{1/2}\,\mathbf{C}^{-1/2}$.
Substituting $\check{\mathbf{y}}_{wi}=\check{\mathbf{C}}^{-1/2}\,\mathbf{C}^{1/2}\,\mathbf{y}_{wi}$
into the fixed-point equation of $\check{\boldsymbol{\Sigma}}$ shows that
$\widetilde{\boldsymbol{\Sigma}}$ satisfies the same equation as $\hat{\boldsymbol{\Sigma}}$
on $\mathbf{Y}_w$; uniqueness~\cite{Couillet15b} then gives $\check{\boldsymbol{\Sigma}}_{SCM}=\check{\mathbf{C}}^{-1/2}\,\mathbf{C}^{1/2}
\,\hat{\boldsymbol{\Sigma}}\,\mathbf{C}^{1/2}\,\check{\mathbf{C}}^{-1/2}$.
Since $\|\check{\mathbf{C}}-\mathbf{C}\|\xrightarrow{\mathrm{a.s.}}0$ (Theorem~1)
and $A\mapsto A^{-1/2}$ is Lipschitz on positive-definite matrices,
$\|\check{\mathbf{C}}^{-1/2}\,\mathbf{C}^{1/2}-\mathbf{I}\|\xrightarrow{\mathrm{a.s.}}0$,
and the result follows from $\|\hat{\boldsymbol{\Sigma}}_{SCM}-\hat{\mathbf{S}}\|
\xrightarrow{\mathrm{a.s.}}0$~\cite{Couillet15b}.
\end{IEEEproof}

\paragraph{Source-free upper bound} 
Since $\psi(x)=x\,v(x)$ by definition, one has $v(\tau_i\,\xi)\,\tau_i = v(\tau_i\,\xi)\,\tau_i\,\xi/\xi = \psi(\tau_i\,\xi)/\xi$. In the source-free case ($\mathbf{M}=\mathbf{0}$), Eq.~\eqref{matrixS} becomes
\begin{equation}
  \hat{\mathbf{S}}
  = \frac{1}{N}\sum_{i=0}^{N-1}
    \frac{\psi(\tau_i\,\xi)}{\xi}\,\mathbf{x}_i\mathbf{x}_i^H.
  \label{eq:Shat_sourcefree}
\end{equation}
Since $\{\tau_i\}$ are independent of $\{\mathbf{x}_i\}$ and $0 < \psi(\tau_i\,\xi)/\xi \leq \Phi_\infty/\xi$ almost surely, a standard comparison argument gives $\left\|\hat{\mathbf{S}}\right\| \leq \displaystyle\frac{\Phi_\infty}{\xi} \left\|\frac{1}{N}\,\mathbf{X}\mathbf{X}^H\,\right\|$ (see Supplementary Material~S-A.2). More precisely, since $v(\tau_i\,\xi)\leq\Phi_\infty/\xi$ and the
weights are independent of $\{\mathbf{x}_i\}$, the matrix inequality
$\hat{\mathbf{S}}\preceq \displaystyle\frac{\Phi_\infty}{\xi}\,\frac{1}{N}\,\mathbf{X}\,\mathbf{X}^H$
holds almost surely entry-wise in a quadratic form sense,
giving $\|\hat{\mathbf{S}}\|\leq \displaystyle\frac{\Phi_\infty}{\xi}
\left\|\frac{1}{N}\,\mathbf{X}\,\mathbf{X}^H\right\|$. Applying the Bai--Silverstein almost sure upper bound \cite{Bai98}, which guarantees that $\lambda_{\max}\!\left(\displaystyle\frac{1}{N} \,\mathbf{X}\, \mathbf{X}^H\right)\overset{a.s.}{\longrightarrow}\left(1+\sqrt{c}\right)^2$, then yields:
\begin{equation}
\left\Vert \hat{\mathbf{S}} \right\Vert \leq t, \qquad
t = \frac{\Phi_{\infty}\,(1+\sqrt{c})^2}{\xi\,(1-c\,\Phi_{\infty})} > 0\,,
\label{eqseuil}
\end{equation}
where $c\,\Phi_\infty<1$ is guaranteed by Assumption~\ref{ass:model}-(v). The additional factor $(1-c\,\Phi_\infty)^{-1}$ in~\eqref{eqseuil} 
arises from the definition $v = u\,\circ \,g^{-1}$ with 
$g(x) = x/(1-c\,\phi(x))$: one has $v(x)\leq \Phi_\infty/\xi$ 
but more precisely $\psi(x)=x\,v(x)\leq \Phi_\infty$ and 
$\xi\geq \xi_{\min}>0$, so that 
$v(\tau_i\,\xi)\,\tau_i = \psi(\tau_i\,\xi)/\xi \leq \Phi_\infty/\xi$, and the threshold follows from 
$\|\hat{\mathbf{S}}\|\leq (\Phi_\infty/\xi)(1+\sqrt{c})^2$ combined with the normalization $\xi\,(1-c\,\Phi_\infty)\geq \xi_{\min}\,(1-c\,\Phi_\infty)>0$ from the fixed-point equation~\eqref{xi}.
The bound~\eqref{eqseuil} holds \emph{almost surely} as $N,m\to\infty$;
for fixed $(N,m)$, a finite-sample correction may be needed.
In practice, $\xi$ is unknown; a consistent leave-one-out estimator is 
\[\hat{\xi} = \displaystyle \frac{1}{N}\sum_{i=0}^{N-1}\frac{1}{m}\, \mathbf{y}_i^H\,\check{\boldsymbol{\Sigma}}_{(i)}^{-1}\,\mathbf{y}_i\, ,\]
where $\check{\boldsymbol{\Sigma}}_{(i)}$ is the leave-one-out approximation of the scatter estimator with observation $i$ removed (see Section~\ref{sec::practicalestimation} for the explicit formula and efficient computation).

In practice, finite-sample exceedances above $t$ may occur with empirical whitening; Theorem~\ref{thm:consist_scm} guarantees asymptotic validity.

\paragraph{Model order estimator.} If sources are present with sufficiently high SNR, some eigenvalues of $\check{\boldsymbol{\Sigma}}_{SCM}$ exceed $t$. The estimated number of sources is defined as $\hat{p} = \#\bigl\{k : \lambda_k(\check{\boldsymbol{\Sigma}}_{SCM}) > t\bigr\}$ where $\lambda_1(\cdot) \geq \lambda_2(\cdot) \geq \cdots$ denotes the eigenvalues in decreasing order.

\medskip
\subsection{Results}
The \emph{source-free} configuration ($\mathbf{M}=\mathbf{0}$) is considered here in order to validate the theoretical noise eigenvalue upper bound \eqref{eqseuil} independently of signal subspace effects. The parameters are set to $c=0.45$, $m=900$, and $N=2000$. Thus, $\mathbf{Y} = \mathbf{C}^{1/2}\,\mathbf{X}\,\mathbf{T}^{1/2}$ with $\mathbf{C} = \mathcal{L}\left((\rho^0,\rho^1,\ldots,\rho^{m-1})^T\right)$ where $\rho=0.7$ and $\mathbf{X}$ is a zero-mean complex Gaussian noise with identity scatter matrix. The texture matrix $\mathbf{T}$ is a diagonal $N\times N$-matrix containing the $\{\tau_i\}_{i\in\{0,\ldots,N-1\}}$ on its diagonal where $\{\tau_i\}_i$ are i.i.d.\ inverse gamma distributed with mean equal to $1$ and with a shape parameter equal to $10$. The function $u$ is here defined as $u: x \mapsto \displaystyle\frac{1+\nu}{x+\nu}$ where $\nu$ is a fixed parameter equal to $0.1$ and corresponds to the Maximum Likelihood Estimation (MLE) for the chosen distribution of the $\tau$ with $\Phi_{\infty} = 1+\nu$. 

Fig.~\ref{fig:whitening_comparison} illustrates both the validity of Theorem~\ref{thm:conv_scm} and the critical role of the whitening step. In the top panel, the data $\mathbf{Y}$ have been whitened by $\check{\mathbf{C}}_{SCM}^{-1/2}$: the log-eigenvalue distribution of $\check{\boldsymbol{\Sigma}}_{SCM}$ (green) closely tracks that of the surrogate matrix $\hat{\mathbf{S}}$ (blue), in agreement with the almost sure convergence~\eqref{eqcouilletmod}. All eigenvalues remain below the RMT threshold $t$ (red vertical line), validating the source-free upper bound~\eqref{eqseuil} with $\Phi_\infty = 1+\nu$. The bottom panel shows the same data \emph{without} whitening: the eigenvalues of $\check{\mathbf{C}}_{SCM}^{1/2}\, \check{\boldsymbol{\Sigma}}_{SCM}\, \check{\mathbf{C}}_{SCM}^{1/2}$ (green) extend well beyond $t$, confirming that the threshold~\eqref{eqseuil} has no theoretical justification unless the correlation structure has been removed beforehand. Together, the two panels demonstrate that the robust $M$-estimation step effectively absorbs the impact of large texture values $\tau_i$, while the whitening step is indispensable for calibrating the eigenvalue threshold.

Fig.~\ref{fig3} presents the log-eigenvalue distributions of $\hat{\mathbf{S}}$ and $\check{\boldsymbol{\Sigma}}_{SCM}$ for samples following a different CES distribution. The texture matrix $\mathbf{T}$ is diagonal with i.i.d. entries ${\tau_i}_{i\in\{0,\ldots,N-1\}}$, where each $\tau_i=t^2$ and $t$ follows a Student-$t$ distribution with $100$ degrees of freedom and $\nu=0.1$. Compared to Fig.\ref{fig:whitening_comparison}-\textit{Left}, the eigenvalues are less concentrated around those of $\hat{\mathbf{S}}$ and move closer to the threshold $t$. As the distribution of $\tau$ deviates from the one for which $u$ corresponds to the MLE, the convergence in Theorem~\ref{thm:conv_scm} becomes slower, making the results less reliable for fixed $N$ and $m$. 

The key limitation of the SCM branch lies in the slow convergence rate of~\eqref{eqth1} when the texture distribution departs from the MLE design of $u$, as illustrated in Fig.~\ref{fig3}. This motivates Section~\ref{sec::4}, which replaces the SCM-based whitening by a Maronna $M$-estimator whitening, yielding improved finite-sample robustness at the cost of a mild additional computational overhead.

\begin{figure}
\centering\includegraphics[width=0.95\columnwidth]{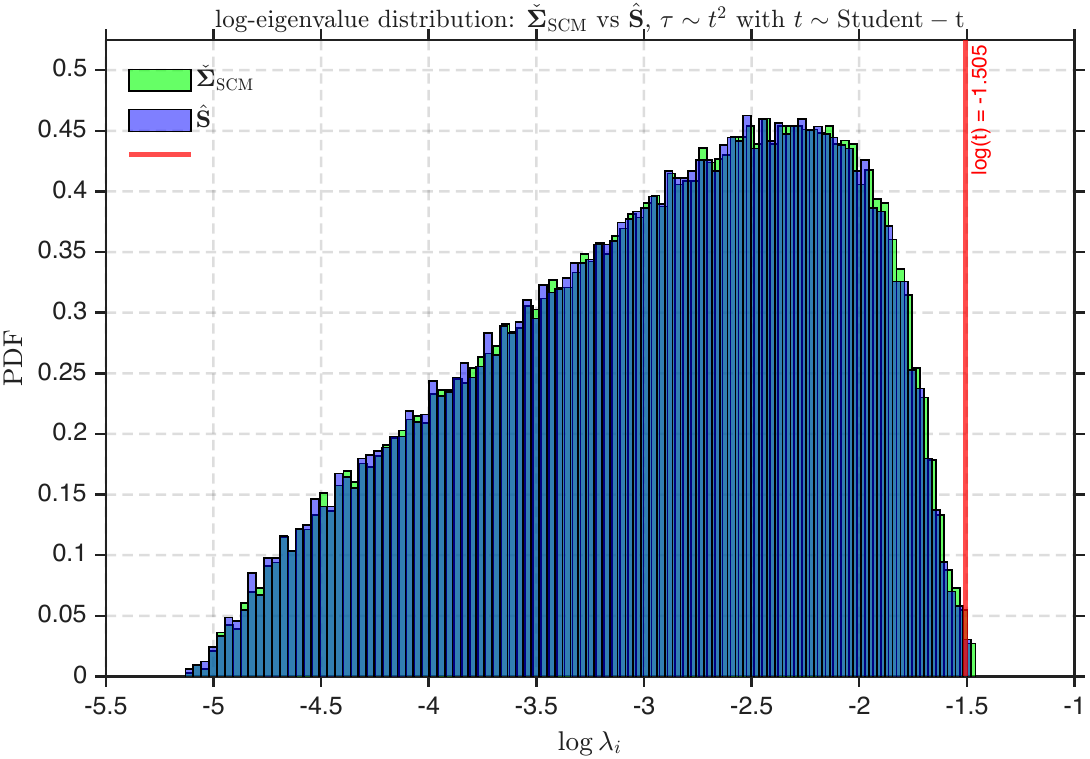}
\caption{Log-eigenvalue distribution of the scatter matrices $\check{\boldsymbol{\Sigma}}_{SCM}$ and $\hat{\mathbf{S}}$ when the signal is whitened by $\check{\mathbf{C}}_{SCM}$ and the calculated threshold ($\rho=0.7$, $m=900$, $N=2000$, $\tau\sim t^2$ with $t\sim \mathrm{Student-t}$, $\nu=0.1$).}
\label{fig3}
\end{figure}

\section{Robust Model Order Selection via $M$-Estimators}
\label{sec::4}

\subsection{Whitening Step}

Let $\widetilde{\mathbf{C}}_{FP}$ be a scaled estimator of the scatter matrix $\mathbf{C}$ defined by $\widetilde{\mathbf{C}}_{FP} = \mathcal{T}\left(\hat{\mathbf{C}}_{FP}\right)$, where $\hat{\mathbf{C}}_{FP}$ is the unique solution to the Maronna's $M$-estimator \cite{Maronna76}:
\begin{equation*}
\boldsymbol{\Sigma} = \frac{1}{N}\displaystyle\sum_{i=0}^{N-1}
u\!\left(\frac{1}{m}\,\mathbf{y}_i^H\,\boldsymbol{\Sigma}^{-1}\,\mathbf{y}_i\right)
\mathbf{y}_i\,\mathbf{y}_i^H \,.
\end{equation*}
As in the previous section, $u: [0,+\infty) \mapsto (0,+\infty)$ is nonnegative, continuous, and non-increasing. The following theorem stands for $\widetilde{\mathbf{C}}_{FP}$:

\begin{theorem}[Consistency of $\widetilde{\mathbf{C}}_{FP}$ and $\check{\mathbf{C}}_{FP}$]
\label{thm:consist_fp}
Let $\widetilde{\mathbf{C}}_{FP}$ be a fixed-point estimator of the scatter matrix $\mathbf{C}$ as defined above; the following result holds:
\begin{equation}
\left\Vert \widetilde{\mathbf{C}}_{FP} -
\mathbb{E}\left[v(\tau\,\xi)\,\tau\right]\,\mathbf{C}
\right\Vert \overset{a.s.}{\longrightarrow} 0 \,,
\label{eqth2}
\end{equation}
where $v$ and $\xi$ are defined as previously. The rescaled estimator $\check{\mathbf{C}}_{FP} = \displaystyle\frac{\widetilde{\mathbf{C}}_{FP}}{\mathbb{E}\left[v(\tau\,\xi)\,\tau\right]}$ is a consistent estimate of $\mathbf{C}$. When the texture distribution is unknown, the scaling factor $\mathbb{E}[v(\tau\,\xi)\,\tau]$ must be estimated from the data; see
Section~\ref{sec::practicalestimation}.
\end{theorem}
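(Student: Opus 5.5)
The plan is to reduce the statement to Theorem~\ref{thm:consist_scm}, applied to a reweighted SCM, by means of the Couillet-type surrogate from \cite{Couillet15b}.

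\textbf{Step 1: surrogate on the raw data.} Since $\mathbf{y}_i=\mathbf{C}^{1/2}\mathbf{y}_{wi}$, Maronna's fixed-point equation is affine equivariant. Hence
\[
\hat{\mathbf{C}}_{FP}=\mathbf{C}^{1/2}\,\hat{\boldsymbol{\Sigma}}\,\mathbf{C}^{1/2},
\]
where $\hat{\boldsymbol{\Sigma}}$ is Maronna's estimator computed on $\mathbf{Y}_w$. Invoking $\|\hat{\boldsymbol{\Sigma}}-\hat{\mathbf{S}}\|\overset{a.s.}{\to}0$ from \cite{Couillet15b}, and using $\|\mathbf{C}\|\le\sum_k|c_k|<\infty$ (Assumption~\ref{ass:model}-(ii)), we get
\[
\bigl\|\hat{\mathbf{C}}_{FP}-\mathbf{C}^{1/2}\hat{\mathbf{S}}\,\mathbf{C}^{1/2}\bigr\|\overset{a.s.}{\to}0 .
\]
The operator $\mathcal{T}$ is a contraction-type map in spectral norm up to a constant. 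I would show $\|\mathcal{T}(\mathbf{A})\|\le \sum_k|\tilde a_k|$ and bound this by $\|\mathbf{A}\|$ times a constant; if that bound fails in general, I would instead apply $\mathcal{T}$ to the explicit decomposition before taking norms. This gives
\[
\bigl\|\widetilde{\mathbf{C}}_{FP}-\mathcal{T}(\mathbf{C}^{1/2}\hat{\mathbf{S}}\mathbf{C}^{1/2})\bigr\|\to0 .
\]
With sources present, the surrogate must be the sources-included version, i.e.\ $\hat{\mathbf{S}}$ built with the full $\mathbf{y}_{wi}$. Its deviation from the noise-only surrogate is a rank-$\le 2p$ perturbation with columns controlled by $\mathbf{M}$.

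\textbf{Step 2: Toeplitz rectification of the weighted SCM.} Write
\[
\mathbf{C}^{1/2}\hat{\mathbf{S}}\,\mathbf{C}^{1/2}=\frac1N\sum_i w_i\,\mathbf{y}_i\mathbf{y}_i^H,\qquad w_i=v(\tau_i\xi),
\]
so that it is a weighted SCM. The weights are independent of $\mathbf{X}$ and $\boldsymbol\delta$, and they satisfy $w_i\tau_i=\psi(\tau_i\xi)/\xi\le\Phi_\infty/\xi$. I then repeat the argument of Theorem~\ref{thm:consist_scm} (Supplementary S-A.1) with $\tau_i$ replaced by the effective texture $w_i\tau_i$. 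This effective texture is bounded, so it is even better behaved than $\tau$. The expansion has four terms:
\begin{itemize}
\item the noise term $\mathcal{T}\bigl(\mathbf{C}^{1/2}\tfrac1N\sum_i w_i\tau_i\mathbf{x}_i\mathbf{x}_i^H\mathbf{C}^{1/2}\bigr)$;
\item two cross terms;
\item the signal term $\mathcal{T}\bigl(\mathbf{M}\tfrac1N\sum_i w_i\mathbf{s}_i\mathbf{s}_i^H\mathbf{M}^H\bigr)$.
\end{itemize}
For the noise term, each diagonal average $\tilde a_k$ concentrates around $\bigl(\tfrac1N\sum_i w_i\tau_i\bigr)c_k$ by Hanson--Wright conditionally on $\{\tau_i\}$. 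Summability of $c_k$ and a union bound over $k$ with Borel--Cantelli then yield spectral-norm convergence, because the spectral norm of a Toeplitz matrix is bounded by $\sum_k|\tilde a_k-\cdot|$ in the relevant tail-truncated form. The signal and cross terms vanish as in Theorem~\ref{thm:consist_scm}, via absolute summability of the columns of $\mathbf{M}$ (Assumption~\ref{ass:signal}-(i)) and the $1/m$ normalization in $\mathcal{T}$.

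\textbf{Step 3: the scale.} It remains to show $\tfrac1N\sum_i v(\tau_i\xi)\tau_i\to\mathbb{E}[v(\tau\xi)\tau]$ a.s. Here $\xi=\xi_N$ is random, but by \eqref{xi} and the strong law it converges to the deterministic solution $\xi_\infty$ of $\mathbb{E}[\psi(\tau\xi)/(1+c\psi(\tau\xi))]=1$; the expectation in the statement should be read with $\xi=\xi_\infty$. The summands are bounded by $\Phi_\infty/\xi$, and $x\mapsto\psi(\tau x)/x$ is continuous. A uniform law of large numbers on a compact neighborhood of $\xi_\infty$, together with the convergence $\xi_N\to\xi_\infty$, therefore gives the limit. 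Consistency of $\check{\mathbf{C}}_{FP}$ follows by dividing, since $\mathbb{E}[v(\tau\xi)\tau]>0$.

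\textbf{Main obstacle.} I expect the hardest step to be the spectral-norm (rather than entrywise) control in Step~2 uniformly over the $m$ diagonals. It requires a tail bound on $\sum_{k}|\tilde a_k-\mathbb{E}\tilde a_k|$. I would attack it by truncating at lag $K_m\to\infty$ slowly: use summability of $(c_k)$ for the tail, and a Hanson--Wright bound with an $m^{-1/2}\log m$ rate for the first $K_m$ lags. Alongside this, Step~1 requires checking that the norm of $\mathcal{T}$ does not blow up.
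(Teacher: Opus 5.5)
Your skeleton matches the paper's. You replace $\hat{\mathbf{C}}_{FP}$ by the decoupled surrogate, push the error through $\mathcal{T}$, and rerun the analysis of Theorem~\ref{thm:consist_scm} with the bounded effective texture $v(\tau_i\xi)\tau_i=\psi(\tau_i\xi)/\xi$. Your Step~1 is sound: affine equivariance shows the right comparison matrix is $\mathbf{C}^{1/2}\hat{\mathbf{S}}\mathbf{C}^{1/2}$, built on the full observations. Your Step~3 is also sound: $\xi=\xi_N\to\xi_\infty$, plus a uniform law of large numbers for $\frac1N\sum_i v(\tau_i\xi)\tau_i$. Both make explicit points the paper's write-up glosses over.

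The genuine gap is the device you use for spectral norms of Toeplitz matrices, which is exactly your ``main obstacle''. Bounding $\|\mathcal{L}(\mathbf{e})\|\le\sum_k|e_k|$ and truncating at lag $K_m$ cannot yield consistency. Summability of $(c_k)$ controls only the \emph{deterministic} part of the coefficients beyond $K_m$. Each centred coefficient $\tilde a_k-\mathbb{E}[\tilde a_k\mid\mathbf{T}]$, by contrast, is a normalised quadratic form in $(\mathbf{x}_i)_i$ with standard deviation of order $(mN)^{-1/2}\asymp m^{-1}$ at every lag $|k|\le m/2$. For instance, with $\mathbf{C}=\mathbf{I}$ and unit weights, $\operatorname{Var}(\tilde a_k)=(m-|k|)/(m^2N)$. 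Summed over the $\asymp m$ tail lags this is of order $\sqrt{m/N}\to\sqrt{c}$, not $o(1)$. The spectral norm does vanish, but only through cancellations across lags that the $\ell^1$ bound discards.

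The same defect affects your route to contraction in Step~1. The bound $\sum_k|\tilde a_k|\le C\|\mathbf{A}\|$ is false in general: for Toeplitz $\mathbf{A}$ with random-sign coefficients the ratio grows like $\sqrt{m/\log m}$. Applying $\mathcal{T}$ termwise is not available either, since $\hat{\mathbf{C}}_{FP}-\mathbf{C}^{1/2}\hat{\mathbf{S}}\mathbf{C}^{1/2}$ is only known to be small in norm.

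The missing idea, which the paper uses, is to bound a finite Hermitian Toeplitz matrix by the sup norm of its symbol, as in \eqref{eq:toeplitz_bound}. Combine this with Lemma~\ref{lem:gamma_hat}: the symbol of $\mathcal{T}(\mathbf{A})$ is exactly $\mathbf{d}_m^H(\lambda)\mathbf{A}\,\mathbf{d}_m(\lambda)$, with $\|\mathbf{d}_m(\lambda)\|=1$.
\begin{itemize}
\item This immediately gives $\|\mathcal{T}(\mathbf{A})\|\le\sup_\lambda|\mathbf{d}_m^H(\lambda)\mathbf{A}\,\mathbf{d}_m(\lambda)|\le\|\mathbf{A}\|$, which is what Step~1 needs.
\item It reduces the noise term of Step~2 to the uniform-in-$\lambda$ deviation of $\frac1N\sum_i v(\tau_i\xi)\tau_i\,|\mathbf{d}_m^H(\lambda)\mathbf{C}^{1/2}\mathbf{x}_i|^2$ from its conditional mean. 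This is an average of $N$ conditionally independent sub-exponential terms with bounded weights, so Bernstein's inequality gives exponentially small tails.
\item A union bound over an $N^\beta$-point grid in $\lambda$ and Lipschitz interpolation, as in Lemmas~\ref{lem:chi1_bound}--\ref{lem:chi3_bound}, then give almost sure uniform convergence. Because your effective weights are bounded, the truncation event $\mathcal{A}_N$ is not even needed.
\end{itemize}
One small correction to your centring. Write $\bar w=\frac1N\sum_i v(\tau_i\xi)\tau_i$. The conditional mean is $\bar w\,\mathbf{d}_m^H(\lambda)\mathbf{C}\,\mathbf{d}_m(\lambda)$, which is the symbol of $\bar w\,\mathcal{T}(\mathbf{C})$, not of $\bar w\,\mathbf{C}$. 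This Fej\'er discrepancy is harmless, because $\|\mathcal{T}(\mathbf{C})-\mathbf{C}\|\le\sum_k\frac{|k|}{m}|c_k|\to0$.
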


\begin{IEEEproof}[Proof Outline]
The proof, inspired by \cite{Vinogradova14, Vinogradova2015} and \cite{Loubaton16}, is provided in the Supplementary Material (Section~S-A.2). Write $\|\mathcal{T}(\hat{\mathbf{C}}_{FP})-\mathbb{E}[v(\tau\,\xi)\,\tau]\,\mathbf{C}\|
\le \eta_1+\eta_2$, where $\eta_1=\|\mathcal{T}(\hat{\mathbf{C}}_{FP}-\hat{\mathbf{S}})\|$
and $\eta_2=\|\mathcal{T}(\hat{\mathbf{S}})-\mathbb{E}[v(\tau\,\xi)\,\tau]\,\mathbf{C}\|$.
$\eta_1\xrightarrow{\mathrm{a.s.}}0$ because $\|\hat{\mathbf{C}}_{FP}-\hat{\mathbf{S}}\|
\xrightarrow{\mathrm{a.s.}}0$~\cite{Couillet15} and $\mathcal{T}$ is spectrally
contractive. For $\eta_2$: the bias $\mathbb{E}\left[\hat{\gamma}^{\hat{\mathbf{S}}}_m(\lambda)\right]
=\mathbb{E}[v(\tau\,\xi)\,\tau]\,\mathbf{d}_m^H(\lambda)\,\mathbf{C}\,\mathbf{d}_m(\lambda)$
vanishes identically, and the fluctuation term vanishes a.s.\ by the same
Hanson--Wright/Bernstein argument as Theorem~1 with
$\|\mathbf{D}\|_\infty\le\Phi_\infty/\xi<\infty$ replacing $\|\mathbf{T}\|_\infty$.
\end{IEEEproof}
 
\begin{remark}
The quantity $\xi$ in the scaling factor $\mathbb{E}[v(\tau\,\xi)\,\tau]$ is implicitly defined by~\eqref{xi} and depends on both the texture distribution and the function $u$. When the texture distribution is unknown, this expectation must be estimated from data (Section~\ref{sec::practicalestimation}); the resulting residual scale factor may slow convergence at finite sample sizes, while almost sure consistency is preserved. Note also that Theorem~\ref{thm:consist_scm} gives $\check{\mathbf{C}}_{SCM}\overset{a.s.}{\to}\mathbf{C}$, whereas Theorem~\ref{thm:consist_fp} gives $\widetilde{\mathbf{C}}_{FP}\overset{a.s.}{\to}\mathbb{E}[v(\tau\,\xi)\,\tau]\,\mathbf{C}$; hence $\check{\mathbf{C}}_{FP}=\widetilde{\mathbf{C}}_{FP}/\mathbb{E}[v(\tau\,\xi)\,\tau]$ also targets $\mathbf{C}$, confirming that the threshold $t$ in~\eqref{eqseuil} applies equally to $\check{\boldsymbol{\Sigma}}_{SCM}$ and $\check{\boldsymbol{\Sigma}}_{FP}$.
\end{remark}

As in the previous section, the samples $\mathbf{Y}$ can then be whitened thanks to $\check{\mathbf{C}}_{FP}^{-1/2}$. Let $\check{\mathbf{Y}}_{wFP} = [\check{\mathbf{y}}_{wF0},\ldots, \check{\mathbf{y}}_{wF\,N-1}]$ be the whitened samples:
\begin{equation*}
\check{\mathbf{Y}}_{wFP} =
\check{\mathbf{C}}_{FP}^{-1/2}\,\mathbf{M}\,\boldsymbol{\Gamma}^{1/2}\boldsymbol{\delta}^H +
\check{\mathbf{C}}_{FP}^{-1/2}\,\mathbf{C}^{1/2}\,\mathbf{X}\,\mathbf{T}^{1/2} \,.
\end{equation*}

\begin{figure*}[htbp]
\centering
\includegraphics[width=0.95\columnwidth]{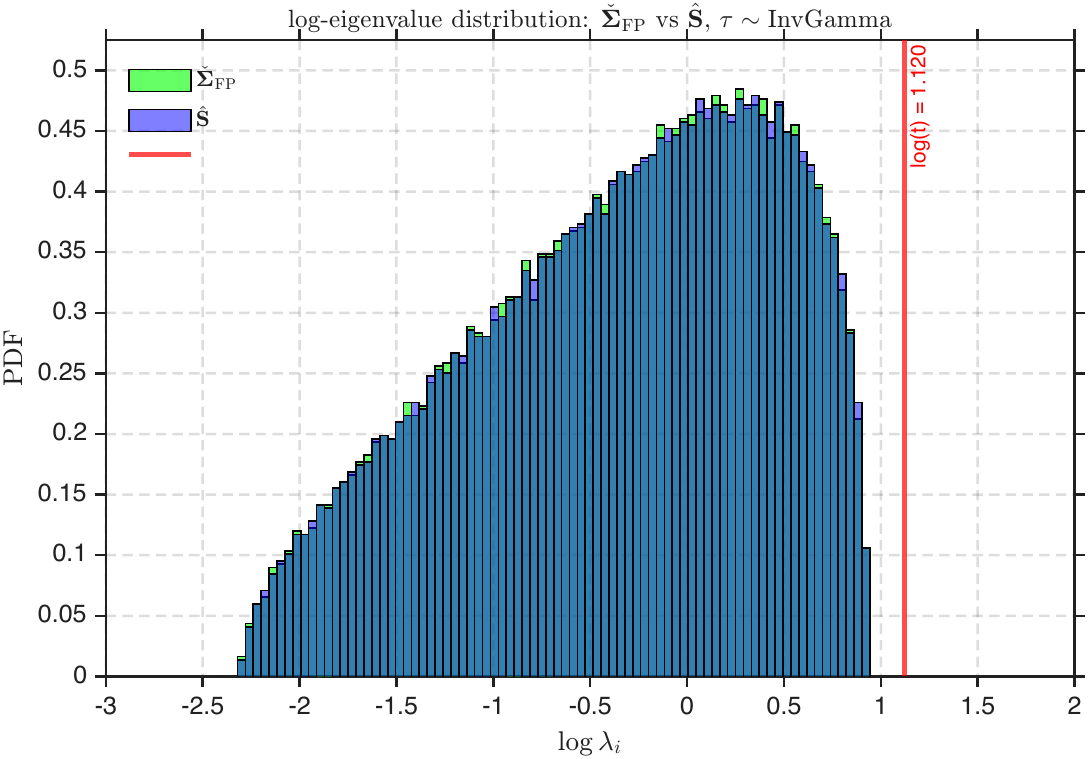}\hfill
\includegraphics[width=0.95\columnwidth]{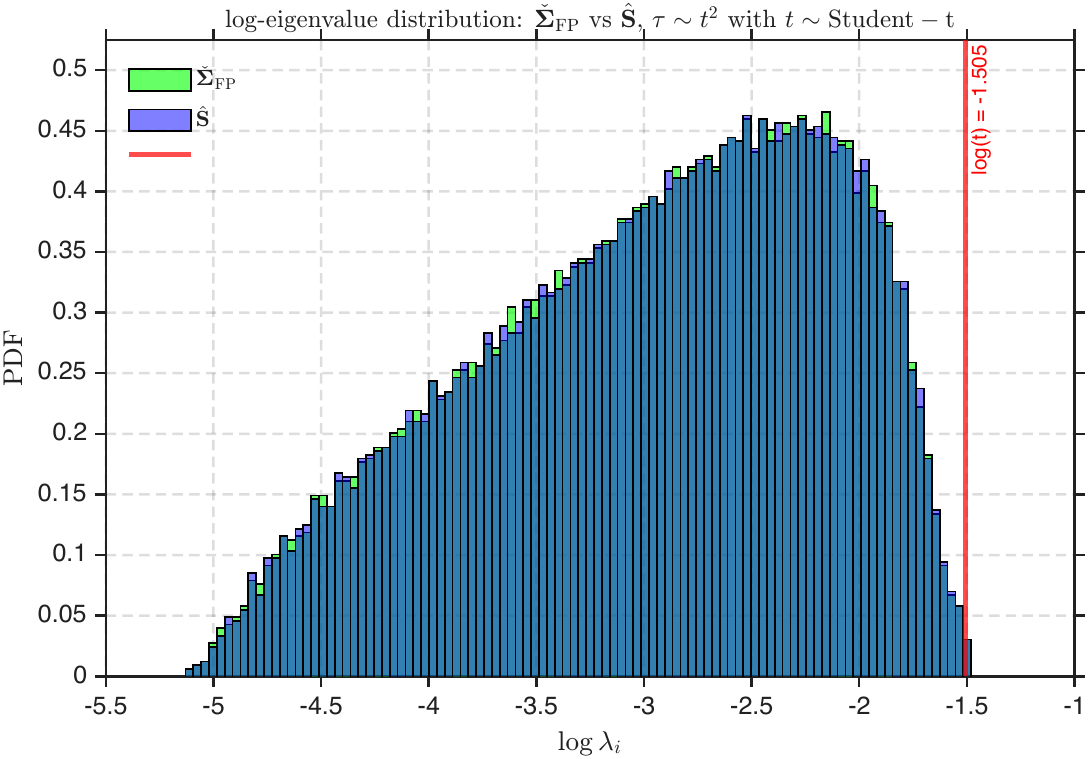}
\caption{Log-eigenvalue distributions, Maronna-based whitened ($\rho=0.7$, $m=900$, $N=2000$, $\nu=0.1$).
\textit{Left}: InvGamma textures; all eigenvalues below $t$, validating Theorem~\ref{thm:conv_fp}. \textit{Right}: Student-$t^2$ textures (100 d.o.f.); the Maronna branch maintains eigenvalue control under texture mismatch, unlike the SCM branch (cf.\ Fig.~\ref{fig3}).}
\label{fig45}
\end{figure*}

\subsection{Robust Estimation of the Scatter Matrix}

The robust estimation of the scatter matrix and the model order selection are performed as previously. The robust estimator of the scatter matrix of the whitened signal $\check{\mathbf{Y}}_{wFP}$ is a Fixed-Point estimator denoted by $\check{\boldsymbol{\Sigma}}_{FP}$ and defined as the unique solution of the equation:
\begin{equation}
\boldsymbol{\Sigma} = \frac{1}{N}\displaystyle\sum_{i=0}^{N-1} u\!\left(\frac{1}{m}\,\check{\mathbf{y}}_{wFi}^H\,\boldsymbol{\Sigma}^{-1}\, \check{\mathbf{y}}_{wFi}\right) \check{\mathbf{y}}_{wFi}\,\check{\mathbf{y}}_{wFi}^H \,.
\label{eqsigma2}
\end{equation}
Thus, $\check{\boldsymbol{\Sigma}}_{FP}$ is a robust estimator of the scatter matrix of the whitened signal. Through Theorem~\ref{thm:conv_fp} proposed below, Theorem~\ref{thm:conv_scm} extends the consistency result of $\check{\boldsymbol{\Sigma}}_{SCM}$ to the Maronna-whitened estimator $\check{\boldsymbol{\Sigma}}_{FP}$.

\begin{theorem}[Convergence of $\check{\boldsymbol{\Sigma}}_{FP}$]
\label{thm:conv_fp}
The following convergence holds:
\begin{equation}
\left\Vert \check{\boldsymbol{\Sigma}}_{FP} - \hat{\mathbf{S}} \right\Vert \overset{a.s.}{\longrightarrow} 0 \,.
\label{eqth2mod}
\end{equation}
where $\hat{\mathbf{S}}$ is defined in \eqref{matrixS}

\end{theorem}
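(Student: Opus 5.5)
The plan mirrors the proof outline of Theorem~\ref{thm:conv_scm}, with $\check{\mathbf{C}}_{SCM}$ replaced by $\check{\mathbf{C}}_{FP}$ and Theorem~\ref{thm:consist_scm} replaced by Theorem~\ref{thm:consist_fp}.

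\textbf{Step 1: equivariance.} Write $\check{\mathbf{y}}_{wFi}=\check{\mathbf{C}}_{FP}^{-1/2}\mathbf{C}^{1/2}\mathbf{y}_{wi}$, where $\mathbf{y}_{wi}$ are the ideally whitened columns of~\eqref{scm_normal}. Set $\mathbf{A}\triangleq\check{\mathbf{C}}_{FP}^{-1/2}\mathbf{C}^{1/2}$, which is invertible. Define
\[
\widetilde{\boldsymbol{\Sigma}}=\mathbf{A}^{-1}\check{\boldsymbol{\Sigma}}_{FP}\mathbf{A}^{-H}.
\]
Then $\check{\mathbf{y}}_{wFi}^H\check{\boldsymbol{\Sigma}}_{FP}^{-1}\check{\mathbf{y}}_{wFi}=\mathbf{y}_{wi}^H\widetilde{\boldsymbol{\Sigma}}^{-1}\mathbf{y}_{wi}$. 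Substituting this into~\eqref{eqsigma2} shows that $\widetilde{\boldsymbol{\Sigma}}$ solves Maronna's equation on $\mathbf{Y}_w$. By uniqueness of the solution (guaranteed by Assumption~\ref{ass:model}-(v)), $\widetilde{\boldsymbol{\Sigma}}=\hat{\boldsymbol{\Sigma}}$. Hence $\check{\boldsymbol{\Sigma}}_{FP}=\mathbf{A}\hat{\boldsymbol{\Sigma}}\mathbf{A}^H$ exactly.

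\textbf{Step 2: decomposition.} Split the error as
\[
\|\check{\boldsymbol{\Sigma}}_{FP}-\hat{\mathbf{S}}\|\le\|\mathbf{A}\hat{\boldsymbol{\Sigma}}\mathbf{A}^H-\hat{\boldsymbol{\Sigma}}\|+\|\hat{\boldsymbol{\Sigma}}-\hat{\mathbf{S}}\|.
\]
The second term vanishes a.s.\ by~\cite{Couillet15b}. For the first term, write $\mathbf{A}=\mathbf{I}+\mathbf{E}$. It is then bounded by $(2\|\mathbf{E}\|+\|\mathbf{E}\|^2)\|\hat{\boldsymbol{\Sigma}}\|$. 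Now $\|\hat{\boldsymbol{\Sigma}}\|$ is a.s.\ bounded, because $\|\hat{\mathbf{S}}\|\le(\Phi_\infty/\xi)\|\frac1N\mathbf{X}\mathbf{X}^H\|$ plus a finite-rank signal contribution, and Bai--Silverstein bounds $\|\frac1N\mathbf{X}\mathbf{X}^H\|$. It therefore remains to show $\|\mathbf{E}\|\to0$ a.s.

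\textbf{Step 3: $\|\mathbf{E}\|\to0$.} Theorem~\ref{thm:consist_fp} gives $\|\check{\mathbf{C}}_{FP}-\mathbf{C}\|\to0$ a.s. I need $\mathbf{C}$ to have spectrum bounded away from $0$ and $\infty$:
\begin{itemize}
\item The upper bound follows from absolute summability of $(c_k)$, via the Toeplitz symbol bound $\|\mathbf{C}\|\le\sum|c_k|$.
\item The lower bound requires a positive symbol minimum, which I would add as the implicit assumption already needed to define $\mathbf{C}^{-1/2}$ in~\eqref{scm_normal}.
\end{itemize}
Then $\check{\mathbf{C}}_{FP}$ is eventually positive definite with spectrum in a fixed compact interval. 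Since $\mathbf{X}\mapsto\mathbf{X}^{-1/2}$ is Lipschitz on such sets, $\|\check{\mathbf{C}}_{FP}^{-1/2}-\mathbf{C}^{-1/2}\|\to0$, and therefore $\|\mathbf{E}\|\le\|\check{\mathbf{C}}_{FP}^{-1/2}-\mathbf{C}^{-1/2}\|\|\mathbf{C}^{1/2}\|\to0$.

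\textbf{Main obstacle.} I expect the difficulty to lie in the rescaling factor $\mathbb{E}[v(\tau\xi)\tau]$ and the uniform spectral bounds. The factor is deterministic in Theorem~\ref{thm:consist_fp}, so Step 3 goes through as written. If instead it is replaced by a data-driven estimate, I would need that estimate to converge a.s.\ to the true factor.

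A scale mismatch by a factor $\kappa$ would not spoil the argument's structure, since Maronna's estimator is not scale-equivariant. Instead, I would handle it by noting that the factor converges to $1$, so it enters only through $\mathbf{E}$, which still tends to $0$.

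I also need to check that uniqueness in Step 1 holds for every $N$ large enough almost surely. This holds once $\check{\mathbf{C}}_{FP}$ is invertible, which Step 3 guarantees eventually a.s.
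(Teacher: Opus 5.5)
Your proposal is correct and follows the same route as the paper's proof in Supplementary S-B. Affine equivariance of Maronna's estimator gives the exact identity $\check{\boldsymbol{\Sigma}}_{FP}=\mathbf{A}\hat{\boldsymbol{\Sigma}}\mathbf{A}^H$ with $\mathbf{A}=\check{\mathbf{C}}_{FP}^{-1/2}\mathbf{C}^{1/2}$; Theorem~\ref{thm:consist_fp}, Lipschitz continuity of the inverse square root on a compact set of positive-definite matrices, a.s.\ boundedness of $\|\hat{\boldsymbol{\Sigma}}\|$, and $\|\hat{\boldsymbol{\Sigma}}-\hat{\mathbf{S}}\|\to 0$ then finish the argument exactly as you describe. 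You are right, and more careful than the paper (which claims the lower bound follows from absolute summability), that keeping the spectrum of $\mathbf{C}$ away from zero needs a strictly positive symbol minimum; however, your aside has it backwards, since Maronna's estimator \emph{is} scale-equivariant (your Step~1 with $\mathbf{A}=a\mathbf{I}$), which is precisely why a non-unit scale factor would propagate exactly and must converge to~$1$.
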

\begin{IEEEproof}[Proof Outline]
The proof follows exactly the same perturbation argument as Theorem~\ref{thm:conv_scm} (Supplementary Material~S-B), with $\check{\mathbf{C}}_{SCM}^{-1/2}$ replaced by $\check{\mathbf{C}}_{FP}^{-1/2}$ throughout; 
Indeed, Theorem~\ref{thm:consist_fp} guarantees that $\|\check{\mathbf{C}}_{FP}-\mathbf{C}\|\overset{a.s.}{\to}0$: since $\check{\mathbf{C}}_{FP} = \widetilde{\mathbf{C}}_{FP}/ \mathbb{E}[v(\tau\,\xi)\,\tau]$ and $\mathbb{E}[v(\tau\,\xi)\,\tau]>0$ is a fixed constant (for a given $u$ and texture distribution), this follows immediately from~\eqref{eqth2}. The whitening operator $\check{\mathbf C}_{FP}^{-1/2}$ is asymptotically equivalent to the oracle whitening matrix $\mathbf C^{-1/2}$.
\end{IEEEproof}

\subsection{Model Order Selection}

The threshold $t$ given in Equation~\eqref{eqseuil} holds asymptotically, with the proof following the same perturbation argument as in Supplementary Material S-B, and can be used on the eigenvalues of $\check{\boldsymbol{\Sigma}}_{FP}$ to estimate $p$.

If sources with sufficiently high SNR are present, some eigenvalues of $\check{\boldsymbol{\Sigma}}_{FP}$ exceed the threshold $t$ defined in \eqref{eqseuil}. The estimated number of sources is $\hat{p} = \#\bigl\{k : \lambda_k(\check{\boldsymbol{\Sigma}}_{FP}) > t\bigr\}$.

\subsection{Results}
\label{sec::results}

Fig.~\ref{fig45} displays the log-eigenvalue distributions of $\hat{\mathbf{S}}$
and $\check{\boldsymbol{\Sigma}}_{FP}$ under the same simulation setup as
Section~\ref{sec::3} ($c=0.45$, $m=900$, $N=2000$, $\rho=0.7$, $\nu=0.1$),
with the same weight function $u(\cdot)$ and data whitened through
$\check{\mathbf{C}}_{FP}^{-1/2}$. In the left panel ($\mathrm{InvGamma}$ textures, $\Phi_\infty=1{+}\nu$), the eigenvalue distribution of $\check{\boldsymbol{\Sigma}}_{FP}$ closely tracks that of $\hat{\mathbf{S}}$ and all eigenvalues remain below the threshold $t$~\eqref{eqseuil}, validating Theorem~\ref{thm:conv_fp}.
The convergence is noticeably tighter than in Fig.~\ref{fig:whitening_comparison}-Top (SCM branch), consistent with the faster convergence rate of~\eqref{eqth2} relative to~\eqref{eqth1}.
The right panel ($\mathrm{Student-t^2}$ textures, $t\sim\mathrm{Student}$ with 100 degrees of freedom) shows that the Maronna-whitened estimator maintains reliable eigenvalue control even when the texture distribution departs from the MLE design of $u$---in contrast to the SCM branch (Fig.~\ref{fig3}), where the same mismatch causes eigenvalues to approach $t$. This robustness to texture distribution mismatch is a direct consequence of the down-weighting mechanism built into Maronna's $M$-estimator during the whitening step itself.

Fig.~\ref{fig6} presents the same setup as in Fig.~\ref{fig45}-\textit{left}, but with three sources having signal-to-noise ratios of $[5, 10, 15]\,$dB. It can be seen that three distinct groups of eigenvalues (10 Monte-Carlo runs) exceed the threshold, and that the noise eigenvalue distribution of $\check{\boldsymbol{\Sigma}}_{FP}$ closely matches that of $\hat{\mathbf{S}}$.

\begin{figure}[htbp]
\centering\includegraphics[width=0.95\columnwidth]{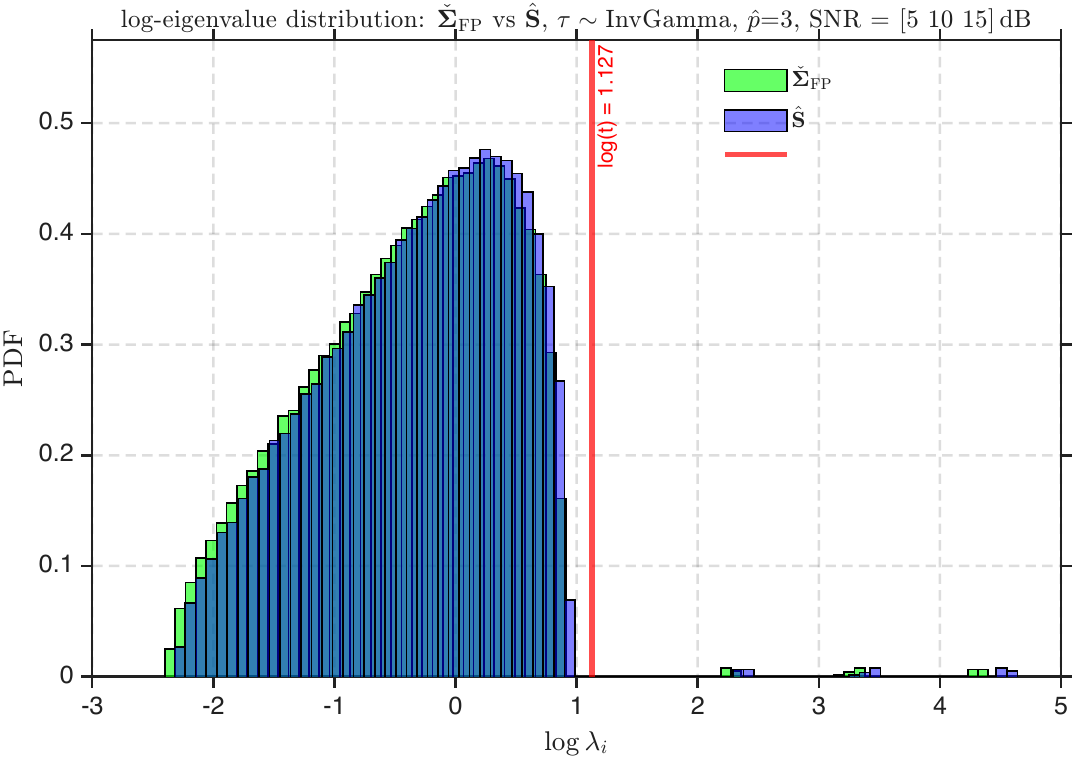}
\caption{Eigenvalues of the scatter matrices $\hat{\mathbf{S}}$ and $\check{\boldsymbol{\Sigma}}_{FP}$ for three sources with $\mathrm{SNR}=[5, 10, 15]\,$dB present in the samples and the calculated threshold ($\rho=0.7$, $m=900$, $N=2000$, $\tau\sim \mathrm{InvGamma}$). }
\label{fig6}
\end{figure}

The results obtained with the robust method improve upon those of the previous section (e.g., Fig.\ref{fig3}). The proposed approach is robust to the distribution of $\tau$: even when the texture distribution differs from that assumed in the design of $u$, the method remains reliable. This behavior stems from the robustness of the scatter matrix estimation. In contrast, without whitening, several eigenvalues exceed the threshold, preventing reliable conclusions or model order estimation. These results extend \cite{Vinogradova14, Vinogradova2015} to the case of left-correlated noise. Moreover, the $L_2$-norm of the difference between the estimated scatter matrix and the SCM tends to zero as $N,m\to\infty$ with constant ratio $c$. Since many signal subspace rank estimation methods rely on the eigenvalues of the scatter matrix, the proposed estimator improves their consistency.

While Sections~\ref{sec::3}--\ref{sec::4} require specifying a weight function $u$ tuned to the texture distribution---a requirement that may be difficult to fulfill in practice---Section~\ref{sec::5} eliminates this constraint by adopting Tyler's $M$-estimator, which is distribution-free and achieves a universal Mar\v{c}enko--Pastur threshold.

\begin{figure*}
\begin{tabular}{ccc}
\centering
\includegraphics[width=0.65\columnwidth]{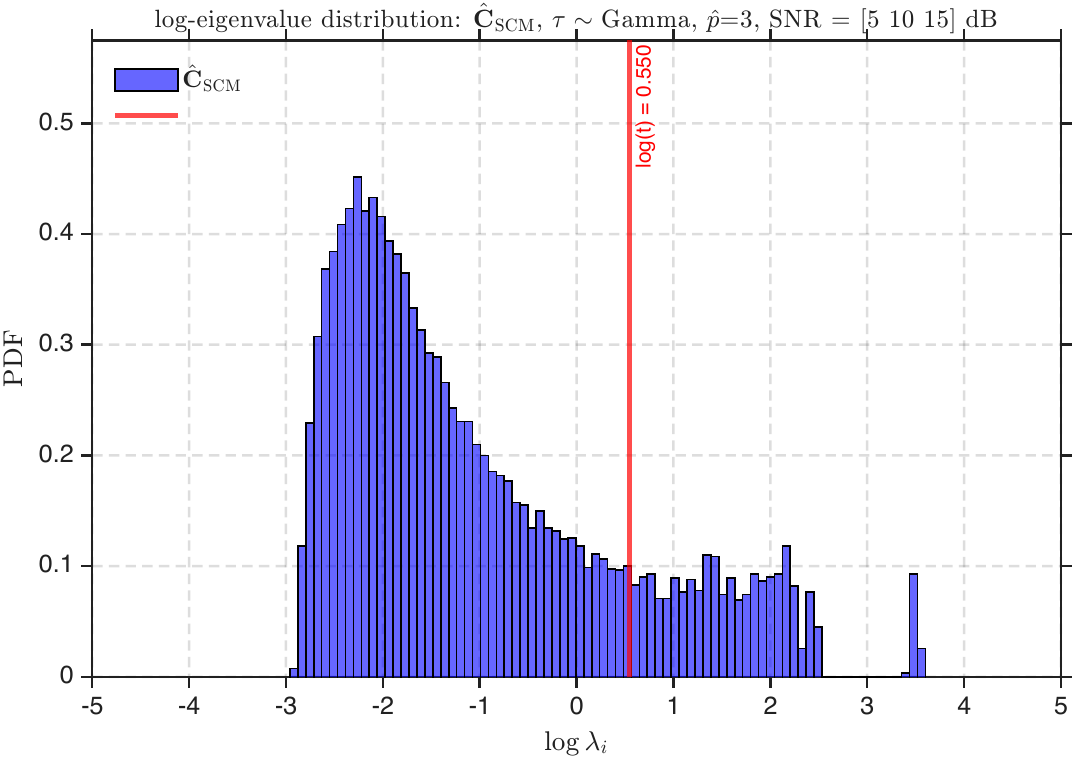}
&
\includegraphics[width=0.65\columnwidth]{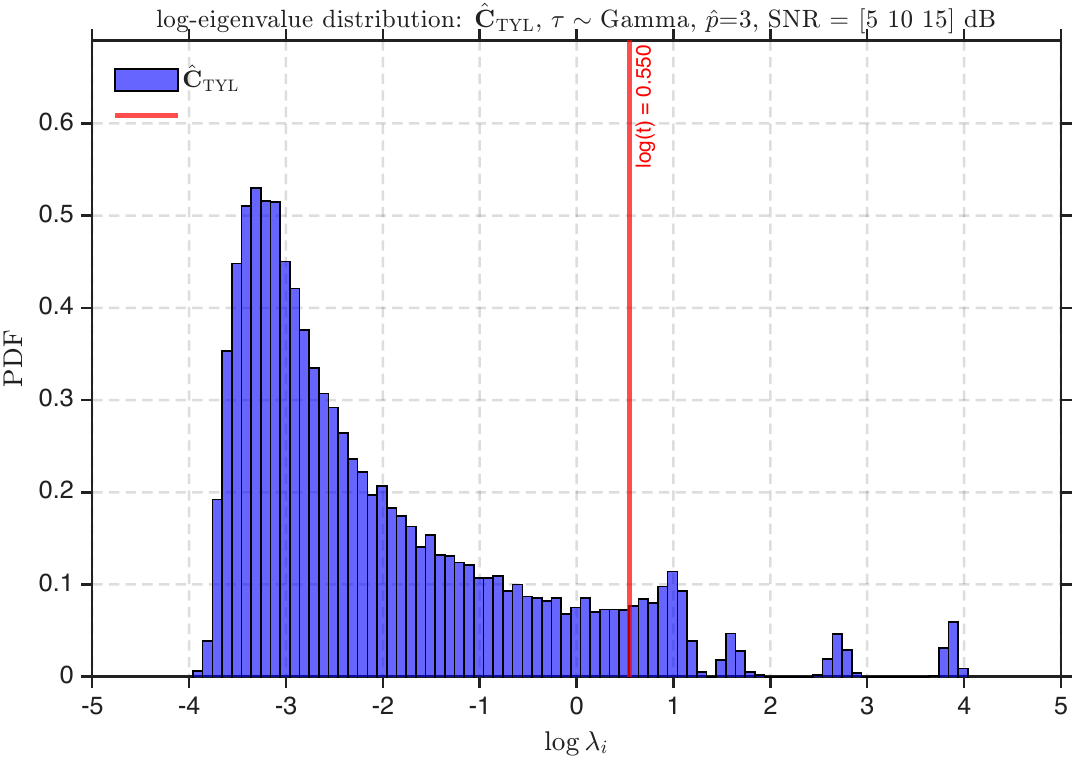}
&
\includegraphics[width=0.65\columnwidth]{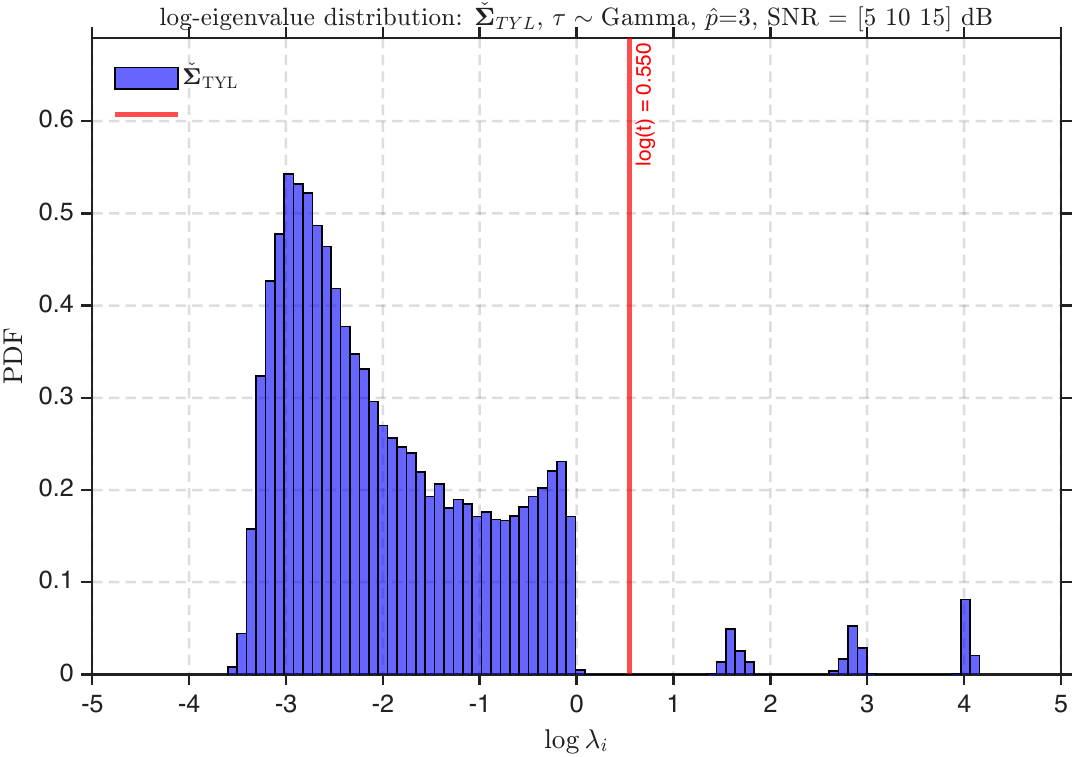}
\end{tabular}
\caption{Log-eigenvalue distributions for $m=100$, $N=1000$, $\rho=0.8$, $p=3$ sources (SNR $[5,10,15]$ dB), and K-distributed noise ($\nu=0.5$): \textit{Left}: SCM $\hat{\mathbf{C}}_{SCM}$. \textit{Middle}: Tyler’s estimator $\hat{\mathbf{C}}_{TYL}$. \textit{Right}: whitened Tyler’s estimator $\hat{\boldsymbol{\Sigma}}_{TYL}$. The Mar\v{c}enko--Pastur upper bound is $\log(\bar{\lambda}) = \log(1.7325)$.}
\label{allfig}
\end{figure*}

\section{Tyler $M$-Estimator: A Distribution-Free Approach}
\label{sec::5}

A major practical challenge in robust signal processing is the lack of prior knowledge regarding the noise texture distribution. To circumvent this, we specialize our framework to Tyler's $M$-estimator \cite{Tyler87}, which is distribution-free \cite{Ollila15}, and does not require specifying a weight function $u$. Unlike Maronna's framework of Sections~III--IV, it was proved in \cite{Zhang16} that the eigenvalue distribution of Tyler's $M$-estimator converges, in the white CES case, to the Mar\v{c}enko--Pastur law \cite{Marchenko67}; the detection threshold is therefore the upper edge $(1+\sqrt{c})^2$, which replaces~\eqref{eqseuil}. The proposed method proceeds in two steps: (i) whitening via a Toeplitz-rectified Tyler estimator of $\mathbf{C}$, and (ii) eigenvalue thresholding of the Tyler scatter matrix of the whitened signal at $(1+\sqrt{c})^2$ to infer the number of sources.

\subsection{Signal Whitening}

The first step of the procedure is to whiten the signal with a consistent estimation of $\mathbf{C}$, the scatter matrix of the CES noise. The proposed estimator $\hat{\mathbf{C}}_{TYL}$ is a Tyler $M$-estimator of the scatter matrix $\mathbf{C}$ enforced to be Toeplitz-structured with the operator $\mathcal{T} \left( \hat{\mathbf{C}}_{TYL} \right)$. The Tyler $M$-estimator $\hat{\mathbf{C}}_{TYL}$ is defined as the unique solution of \cite{Tyler87}: 
\begin{equation}
\boldsymbol{\Sigma} =  \frac{m}{N} \displaystyle \sum_{i=0}^{N-1} \frac{\mathbf{y}_i \,\mathbf{y}_i^H}{\mathbf{y}_i^H \,\boldsymbol{\Sigma}^{-1} \,\mathbf{y}_i} \, . 
\label{eqTyler}
\end{equation}
\begin{remark}
Tyler's $M$-estimator is unique only up to a positive scalar;  throughout this paper, uniqueness is enforced by the trace 
normalisation $\operatorname{tr}(\boldsymbol{\Sigma})=m$~\cite{Tyler87}.
This convention is applied consistently:
\begin{enumerate}[(i)]
  \item For $\hat{\mathbf{C}}_{TYL}$: the normalisation ensures that the unknown scale factor introduced during whitening in 
  $\hat{\mathbf{Y}}_w = \check{\mathbf{C}}_{TYL}^{-1/2}\mathbf{Y}$ 
 cancels out, since $\operatorname{tr}(\mathbf{C})=m$ holds for the true scatter matrix.
  \item For $\check{\boldsymbol{\Sigma}}_{TYL}$: the same normalisation ensures that the detection threshold $(1+\sqrt{c})^2$ corresponds exactly to the upper edge of the Mar\v{c}enko--Pastur law with ratio $c$; any other normalisation by a factor $\alpha \neq 1$ would shift 
  the threshold to $\alpha(1+\sqrt{c})^2$.
\end{enumerate}
\end{remark}

 This equation requires an iterative approach to compute $\hat{\mathbf{C}}_{TYL}$ as the final value of $\boldsymbol{\Sigma}$. Note that the $\mathbf{y}_i$ also contain the sources. The consistency of  $\mathcal{T} \left( \hat{\mathbf{C}}_{TYL} \right)$ is proven thanks to the following theorem: 
\begin{theorem}[Consistency of $\check{\mathbf{C}}_{TYL}$]
\label{thm:consist_tyl}
Under Assumptions~\ref{ass:regime}--\ref{ass:signal} (Assumption~\ref{ass:model}-(v) is not required), we have the following convergence:
 \begin{equation}
 \left\Vert \mathcal{T} \left( \hat{\mathbf{C}}_{TYL} \right) - \mathbf{C} \right\Vert \overset{a.s.}{\longrightarrow} 0 \, .
 \end{equation}
\end{theorem}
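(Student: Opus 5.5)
The plan mirrors the two-term decomposition used for Theorem~\ref{thm:consist_fp}, with Tyler's estimator replacing Maronna's. The first step replaces $\hat{\mathbf{C}}_{TYL}$ by a tractable surrogate with oracle weights. In the correlated CES setting, results of the type of \cite{Zhang16} and \cite{Couillet15b} (Tyler case, $u(x)=m/x$, i.e.\ $\phi\equiv 1$ scaled by $m$) should give
\[
\Bigl\| \hat{\mathbf{C}}_{TYL} - \kappa_N\,\frac{1}{N}\sum_{i=0}^{N-1} \mathbf{C}^{1/2}\mathbf{x}_i\mathbf{x}_i^H\mathbf{C}^{1/2} \Bigr\| \overset{a.s.}{\longrightarrow} 0 ,
\]
for a scalar $\kappa_N$. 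The reason is that $\mathbf{y}_i^H\boldsymbol{\Sigma}^{-1}\mathbf{y}_i\propto\tau_i$ up to vanishing fluctuations, so the texture cancels in each summand. The sources form a rank-$p$ perturbation. By a rank-one update argument, each $\mathbf{y}_i$'s normalized form is affected only at order $O(1/m)$, and the perturbation contributes a rank-$p$ term $\mathbf{M}\mathbf{S}\mathbf{S}^H\mathbf{M}^H$-like of bounded norm. Write
\[
\|\mathcal{T}(\hat{\mathbf{C}}_{TYL})-\mathbf{C}\|\le \eta_1+\eta_2+\eta_3 ,
\]
with:
\begin{itemize}
\item $\eta_1=\|\mathcal{T}(\hat{\mathbf{C}}_{TYL}-\hat{\mathbf{S}}_{T})\|$, where $\hat{\mathbf{S}}_T$ is the surrogate above including the signal part;
\item $\eta_2=\|\mathcal{T}(\text{rank-}p\text{ signal part})\|$;
\item $\eta_3=\|\mathcal{T}(\kappa_N\frac1N\mathbf{C}^{1/2}\mathbf{X}\mathbf{X}^H\mathbf{C}^{1/2})-\mathbf{C}\|$.
\end{itemize}

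For $\eta_1$ we will use that $\mathcal{T}$ is spectrally contractive (as invoked in the proof of Theorem~\ref{thm:consist_fp}), so $\eta_1\le\|\hat{\mathbf{C}}_{TYL}-\hat{\mathbf{S}}_T\|\to0$. For $\eta_2$ we will reuse the argument of Theorem~\ref{thm:consist_scm}, fourth term. Under Assumption~\ref{ass:signal}-(i), the Toeplitz diagonal averages $\frac1m\sum_{i-j=k}[\mathbf{M}\mathbf{A}\mathbf{M}^H]_{i,j}$ of a low-rank matrix with absolutely summable columns give a symbol whose sup-norm is $O(1/m)$ times $\|\mathbf{A}\|\,\sum_j\|\mathbf{m}_j\|_1^2$. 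Since the spectral norm of a Toeplitz matrix is bounded by the sup of its symbol $\sum_k|\tilde a_k|$, this term vanishes. Here $\mathbf{A}=\boldsymbol{\Gamma}^{1/2}\boldsymbol{\delta}^H\boldsymbol{\delta}\boldsymbol{\Gamma}^{1/2}/N$, whose norm is a.s.\ bounded by the SLLN. Cross terms between signal and noise will be handled the same way, using sub-Gaussianity of $\mathbf{X}$.

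For $\eta_3$ we will follow Theorem~\ref{thm:consist_scm} with $\mathbf{T}$ replaced by the identity. The spectral norm of the Toeplitz error is controlled by $\sup_{\lambda}|\hat\gamma_m(\lambda)-\kappa_N\,\mathbf{d}_m^H(\lambda)\mathbf{C}\mathbf{d}_m(\lambda)|$ over the Fourier vectors $\mathbf{d}_m(\lambda)$. We will use a discretization of $\lambda$ on a polynomial grid, Hanson--Wright on the quadratic forms $\frac1N\sum_i|\mathbf{d}_m^H\mathbf{C}^{1/2}\mathbf{x}_i|^2$ (whose variance is $O(1/N)$ thanks to the summability of $c_k$, giving $\|\mathbf{C}\|<\infty$), a union bound, and Borel--Cantelli. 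The bias term gives $\mathbb{E}=\kappa_N\,\mathcal{T}(\mathbf{C})=\kappa_N\mathbf{C}$, since $\mathbf{C}$ is already Toeplitz.

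It remains to identify the scale $\kappa_N\to1$. This follows from the trace normalization: $\operatorname{tr}\mathcal{T}(\mathbf{A})=\operatorname{tr}\mathbf{A}$, $\operatorname{tr}\hat{\mathbf{C}}_{TYL}=m$, and $\operatorname{tr}\mathbf{C}=m$. Dividing by $m$ and using the above convergences forces $\kappa_N\to1$. The main obstacle is the first step: establishing the surrogate approximation for Tyler's estimator uniformly, with correlated $\mathbf{C}$ and sources present. In particular one must show $\max_i|\frac1m\mathbf{y}_i^H\hat{\mathbf{C}}_{TYL}^{-1}\mathbf{y}_i/\tau_i-\text{const}|\to0$ a.s. I would attempt this via the leave-one-out fixed-point argument of \cite{Couillet15b}, applied to $\mathbf{C}^{-1/2}\mathbf{Y}$, since Tyler's estimator is affine equivariant, so whitening by the true $\mathbf{C}$ is free. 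Heavy tails of $\tau$ do not matter because $\tau_i$ cancels exactly in Tyler's weights.
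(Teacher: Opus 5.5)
Your architecture matches the paper's sketch: a texture-cancelling oracle surrogate, contractivity of $\mathcal{T}$ for the surrogate error, the Theorem~1 machinery for the noise part with $\mathbf{T}$ effectively replaced by $\mathbf{I}$, and absolute summability of the columns of $\mathbf{M}$ for the signal part. Fixing the scale through $\operatorname{tr}\mathcal{T}(\mathbf{A})=\operatorname{tr}\mathbf{A}$ and the convention $\operatorname{tr}\mathbf{C}=m$ is sound, and it makes explicit what the paper leaves to its normalisation remark.

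The genuine gap is the signal term. In model (1) the sources are \emph{not} textured. The per-sample scale invariance you invoke therefore shows that $\hat{\mathbf{C}}_{TYL}$ is Tyler's estimator of $\tau_i^{-1/2}\mathbf{M}\mathbf{s}_i+\mathbf{C}^{1/2}\mathbf{x}_i$. The texture leaves the noise but reappears, inverted, on the sources. With your surrogate weights $\kappa_N/\tau_i$, the signal part is $\frac{\kappa_N}{N}\mathbf{M}\boldsymbol{\Gamma}^{1/2}\boldsymbol{\delta}^H\mathbf{T}^{-1}\boldsymbol{\delta}\boldsymbol{\Gamma}^{1/2}\mathbf{M}^H$, and the cross term carries $\mathbf{T}^{-1/2}$. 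It is not a term with $\mathbf{A}=\boldsymbol{\Gamma}^{1/2}\boldsymbol{\delta}^H\boldsymbol{\delta}\boldsymbol{\Gamma}^{1/2}/N$. Your symbol bound for $\eta_2$ thus becomes $\frac{C}{mN}\sum_i\|\boldsymbol{\delta}_i\|^2/\tau_i$.

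This bound vanishes if, e.g., $\mathbb{E}[\tau^{-1}]<\infty$ (inverse-gamma textures). But Assumption~2-(iii) only controls the \emph{upper} tail of $\tau$. For the Student-$t^2$ textures that the paper lists as satisfying it, $P(\tau^{-1}>y)\sim Cy^{-1/2}$. Then $\sum_i\tau_i^{-1}$ grows like $N^2$ and the bound stays of order one.

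The same samples also break your uniform claim $\frac1m\mathbf{y}_i^H\boldsymbol{\Sigma}^{-1}\mathbf{y}_i\approx\tau_i\,\mathrm{const}$. When $\tau_i\lesssim 1/m$, the signal dominates the quadratic form and the true Tyler weight saturates near $m/(\mathbf{s}_i^H\mathbf{M}^H\boldsymbol{\Sigma}^{-1}\mathbf{M}\mathbf{s}_i)$. So $\|\hat{\mathbf{C}}_{TYL}-\hat{\mathbf{S}}_T\|\to0$ cannot hold with $1/\tau_i$ weights on the signal. This is exactly where a leave-one-out argument modelled on the Maronna case breaks: that case uses bounded $u$ and $\Phi_\infty>1$, and neither holds for $u(x)=m/x$. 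In short, heavy tails of $\tau$ are harmless for the noise, but the \emph{lower} tail of $\tau$ governs the signal.

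The paper's one-line sketch has the same blind spot. Its oracle weights are bounded independently of the texture only in the product $v(\tau_i\xi)\tau_i$ that multiplies the noise, so mirroring it does not close this step. To repair it you need two things:
\begin{enumerate}[(i)]
\item an extra lower-tail hypothesis, such as $\mathbb{E}[\tau^{-1}]<\infty$, or at least $\frac{1}{mN}\sum_i\tau_i^{-1}\to0$ a.s.;
\item a separate treatment of the small-$\tau_i$ samples using the true, saturating weights, e.g.\ via $\mathbf{y}_i^H\boldsymbol{\Sigma}^{-1}\mathbf{y}_i\ge\|\mathbf{y}_i\|^2/\|\boldsymbol{\Sigma}\|$.
\end{enumerate}
This is probably not merely technical. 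A heuristic fixed-point computation along the direction of $\mathbf{M}$ suggests that when $\tau^{-1}$ has tail index $1/2$ (the Student-$t^2$ case), the signal eigenvalue of $\hat{\mathbf{C}}_{TYL}$ grows like $m$. It would then absorb a non-vanishing share of the trace and add an $O(1)$ amount to the Toeplitz symbol, in which case the stated convergence itself would fail.
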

\begin{IEEEproof} See \cite{TerreauxICASSP18}. The proof follows the same four-term decomposition as Theorem~1, with two differences: the weight function is $u(x)=m/x$
(Tyler's choice), so the surrogate matrix has oracle weights $v(\tau_i\,\xi)\equiv 1/\xi$
that are bounded independently of the texture distribution, removing
Assumption~2-(v); and the signal term vanishes by the same absolute summability
argument (Assumption~3-(i)).
\end{IEEEproof}

Set $\mathbf{\check{C}}_{TYL}=\mathcal{T} \left( \hat{\mathbf{C}}_{TYL}\right)$. The Tyler-branch whitened data are:
\begin{align}
\hat{\mathbf{Y}}_w
    &= \mathbf{\check{C}}_{TYL}^{-1/2}\,\mathbf{Y} \\
    &= \mathbf{\check{C}}_{TYL}^{-1/2}\,\mathbf{M}\,\boldsymbol{\Gamma}^{1/2}\, \boldsymbol{\delta}^H + \mathbf{\check{C}}_{TYL}^{-1/2} \,\mathbf{C}^{1/2}\,\mathbf{X}\,\mathbf{T}^{1/2}.
\end{align}
with $\hat{\mathbf{Y}}_w = [ \, \hat{\mathbf{y}}_{w0}, \,  ... , \, \hat{\mathbf{y}}_{wN-1}]$.

\subsection{Estimation of the Tyler Scatter Matrix}

The signal being whitened, it is now possible to apply a Tyler $M$-estimator to threshold its eigenvalues. Let $\check{\boldsymbol{\Sigma}}_{TYL}$ denote a Tyler estimation of the scatter matrix of the whitened data $\hat{\mathbf{Y}}_w$, that is, $\check{\boldsymbol{\Sigma}}_{TYL}$ is the unique solution if it exists of: 
\begin{equation}
\boldsymbol{\Sigma} = \frac{m}{N} \displaystyle \sum_{i=0}^{N-1} \frac{\hat{\mathbf{y}}_{wi} \,\hat{\mathbf{y}}_{wi} ^H}{\hat{\mathbf{y}}_{wi} ^H \,\boldsymbol{\Sigma}^{-1} \, \hat{\mathbf{y}}_{wi} } \, .
\label{eqTylerbis}
\end{equation}
Existence and uniqueness of~\eqref{eqTylerbis} require $m < N$ (i.e., $c_N < 1$), which is guaranteed by Assumption~\ref{ass:regime} (specifically, $c < 1$, which may be assumed without loss of generality in the large-dimensional regime; no constraint from Assumption~\ref{ass:model}-(v) is needed here), as well as the almost sure general position of the whitened observations $\hat{\mathbf{y}}_{wi}$ in $\mathbb{C}^m$; since 
the entries of $\mathbf{X}$ have a density under Assumption~\ref{ass:model}-(iv), any $m$ of them are linearly independent almost surely~\cite{Tyler87}.

\subsection{Eigenvalue Thresholding via the Mar\v{c}enko--Pastur Law}  

\begin{theorem}[Convergence of $\check{\boldsymbol{\Sigma}}_{TYL}$]
\label{thm:tyler_est}
Let us define $\displaystyle \hat{\mathbf{S}}_w = \displaystyle \frac{1}{N} \,\mathbf{X} \,\mathbf{X}^H$. Under the same assumptions as previously,
\begin{equation}
\left \Vert \check{\boldsymbol{\Sigma}}_{TYL} - \hat{\mathbf{S}}_w \right\Vert \overset{a.s.}{\longrightarrow} 0\, .
\label{eqth_tyler}
\end{equation}
\end{theorem}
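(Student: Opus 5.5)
We will follow the same perturbation scheme as in the proof outline of Theorem~\ref{thm:conv_scm}. The plan is to reduce the statement to a known result for Tyler's estimator on \emph{ideally} whitened data and then transfer it through the estimated whitening matrix.

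\textbf{Step 1: equivariance.} Let $\hat{\boldsymbol{\Sigma}}_{TYL,w}$ be Tyler's estimator computed on $\mathbf{Y}_w=\mathbf{C}^{-1/2}\mathbf{Y}$. Write $\mathbf{A}\triangleq\check{\mathbf{C}}_{TYL}^{-1/2}\mathbf{C}^{1/2}$, so that $\hat{\mathbf{y}}_{wi}=\mathbf{A}\,\mathbf{y}_{wi}$. Substituting into \eqref{eqTylerbis} shows that $\mathbf{A}^{-1}\boldsymbol{\Sigma}\mathbf{A}^{-H}$ solves the Tyler equation on $\mathbf{Y}_w$ whenever $\boldsymbol{\Sigma}$ solves \eqref{eqTylerbis}. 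This holds because the Tyler equation is homogeneous and affine equivariant: the quadratic forms $\hat{\mathbf{y}}_{wi}^H\boldsymbol{\Sigma}^{-1}\hat{\mathbf{y}}_{wi}$ are unchanged. Uniqueness up to scale then gives
\[
\check{\boldsymbol{\Sigma}}_{TYL}=\kappa\,\mathbf{A}\,\hat{\boldsymbol{\Sigma}}_{TYL,w}\,\mathbf{A}^H .
\]
The constant $\kappa$ is fixed by the normalisation $\operatorname{tr}(\cdot)=m$. Theorem~\ref{thm:consist_tyl} gives $\|\check{\mathbf{C}}_{TYL}-\mathbf{C}\|\to 0$ a.s. Since $\mathbf{C}$ has spectrum bounded away from $0$ and $\infty$ (absolute summability and $\mathbf{C}\succ 0$, which we assume uniformly), we get $\|\mathbf{A}-\mathbf{I}\|\to 0$ and $\|\mathbf{A}\|$ bounded. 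Hence $\frac1m\operatorname{tr}(\mathbf{A}\hat{\boldsymbol{\Sigma}}_{TYL,w}\mathbf{A}^H)-\frac1m\operatorname{tr}\hat{\boldsymbol{\Sigma}}_{TYL,w}\to 0$, provided $\|\hat{\boldsymbol{\Sigma}}_{TYL,w}\|$ is a.s.\ bounded, so $\kappa\to 1$. It therefore suffices to prove $\|\hat{\boldsymbol{\Sigma}}_{TYL,w}-\hat{\mathbf{S}}_w\|\to 0$, since $\|\hat{\mathbf{S}}_w\|\to(1+\sqrt c)^2$ by Bai--Silverstein.

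\textbf{Step 2: source-free white case.} When $\mathbf{M}=\mathbf{0}$, Tyler's estimator is invariant to the textures, since each $\mathbf{y}_{wi}$ enters only through its direction. It therefore equals Tyler's estimator on $\mathbf{X}$. By \cite{Zhang16} (equivalently, \cite{Couillet15} with $u(x)=m/x$, where $\xi$ solves \eqref{xi} trivially and the oracle weights $v\equiv 1/\xi$ are constant), this estimator satisfies $\|\hat{\boldsymbol{\Sigma}}-\frac{1}{N}\sum_i \frac{\mathbf{x}_i\mathbf{x}_i^H}{\frac1m\|\mathbf{x}_i\|^2}\|\to 0$. Sub-Gaussian Hanson--Wright concentration gives $\max_i|\frac1m\|\mathbf{x}_i\|^2-1|\to 0$ a.s.\ (union bound over $N$ indices plus Borel--Cantelli). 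Combined with the trace normalisation, this yields convergence to $\hat{\mathbf{S}}_w$ in spectral norm.

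\textbf{Step 3: the sources.} This is the step we expect to be the main obstacle. With $p$ fixed sources, $\mathbf{Y}_w$ is a rank-$p$ perturbation of $\mathbf{X}\mathbf{T}^{1/2}$. Spectral-norm convergence to $\hat{\mathbf{S}}_w$ is then \emph{not} expected, because spiked eigenvalues would separate; the statement must be read for the noise part, as in Theorem~\ref{thm:conv_scm}, i.e.\ under $\mathbf{M}=\mathbf{0}$, or up to a finite-rank term. We would first establish the source-free case rigorously. Then we would treat the signal as a rank-$p$ perturbation: the quadratic forms $\frac1m\mathbf{y}_{wi}^H\hat{\boldsymbol{\Sigma}}^{-1}\mathbf{y}_{wi}$ change uniformly by $o(1)$ relative to $\tau_i$, since $\frac1m\|\mathbf{C}^{-1/2}\mathbf{M}\mathbf{s}_i\|^2\to 0$ for fixed $p$ with $\|\mathbf{M}\|$ bounded. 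The weights are therefore asymptotically unchanged, and the difference from $\hat{\mathbf{S}}_w$ is a matrix of rank at most $p$ plus an $o(1)$ term in norm. The delicate part is controlling the uniformity over $i$ of these weight perturbations when the textures are heavy-tailed. We would handle this through $\tau$-invariance of Tyler's weights, normalising each $\mathbf{y}_{wi}$ by $\sqrt{\tau_i}$, so that only bounded, sub-Gaussian quantities remain.
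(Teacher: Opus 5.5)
Your Steps~1--2 follow the paper's route. The paper's proof combines Theorem~\ref{thm:consist_tyl} with the equivariance/Lipschitz transfer used for Theorem~\ref{thm:conv_scm}, and applies the white-noise Tyler result of Zhang et al.\ to the ideally whitened data. You are more careful than the paper on two points it skips:
\begin{enumerate}[(i)]
\item Tyler's scale indeterminacy produces the scalar $\kappa$, which you correctly drive to $1$ through the trace normalisation.
\item The Lipschitz bound on the inverse square root needs $\lambda_{\min}(\mathbf{C})$ bounded away from $0$, which you rightly impose. It does not follow from absolute summability of the $c_k$, as the paper asserts: $c_0=1$, $c_{\pm1}=1/2$ gives the symbol $1+\cos\lambda$ and $\lambda_{\min}(\mathbf{C})\to0$.
\end{enumerate}
One caveat is shared with the paper. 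Zhang et al.\ prove the operator-norm result for Gaussian/elliptical samples, so for general sub-Gaussian $\mathbf{X}$ your Step~2 needs an additional argument. The Maronna theory is not an ``equivalent'' substitute: Tyler is the excluded boundary case $\phi\equiv1$ ($u(x)=1/x$ in the normalisation of \eqref{eqsigma1}), for which \eqref{xi} holds for every $\xi$.

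Your Step~3 diagnosis is correct, and the paper's proof does not address it. Suppose the claim held with $\mathbf{M}\neq\mathbf{0}$. Then $\|\hat{\mathbf{S}}_w\|\to(1+\sqrt{c})^2$ a.s.\ would force $\lambda_{\max}(\check{\boldsymbol{\Sigma}}_{TYL})\to(1+\sqrt{c})^2$, and the rule $\hat p=\#\{k:\lambda_k>(1+\sqrt{c})^2\}$ could never asymptotically detect a source. So the theorem is really the source-free statement, which your Steps~1--2 establish, as does the paper's argument.

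The finite-rank repair you sketch, however, does not go through. The threat to Tyler's weights comes from \emph{small} textures, not heavy upper tails:
\begin{itemize}
\item Assumption~\ref{ass:model}-(iii) constrains only the upper tail of $\tau$. If, for instance, $P(\tau<\varepsilon)\asymp\varepsilon^{a}$ with $a<1$, then $\min_i\tau_i\asymp N^{-1/a}\ll 1/m$.
\item For those samples $\frac1m\|\mathbf{C}^{-1/2}\mathbf{M}\mathbf{s}_i\|^2$ is not $o(\tau_i)$, and the direction of $\mathbf{y}_{wi}$ is dominated by the signal.
\item Dividing by $\sqrt{\tau_i}$ makes this worse, since it produces the unbounded term $\tau_i^{-1/2}\mathbf{C}^{-1/2}\mathbf{M}\mathbf{s}_i$.
\item Even heuristically, with weights $\approx 1/\tau_i$ the signal block $\frac1N\sum_i\tau_i^{-1}\mathbf{C}^{-1/2}\mathbf{M}\mathbf{s}_i\mathbf{s}_i^H\mathbf{M}^H\mathbf{C}^{-1/2}$ requires $\mathbb{E}[\tau^{-1}]<\infty$.
\item The signal--noise cross terms are of order one in norm, not $o(1)$, so the perturbation has rank up to $2p$ rather than $p$.
\end{itemize}
To get ``finite rank plus $o(1)$'' you need a lower-tail hypothesis on $\tau$, or a separate treatment of the signal-dominated samples.
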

\begin{proof}
See \cite{TerreauxICASSP18}. By Theorem~5 and the same Lipschitz argument as Theorem~2,
$\|\check{\boldsymbol{\Sigma}}_{TYL}-\frac{1}{N}\mathbf{X}\mathbf{X}^H\|
\xrightarrow{\mathrm{a.s.}}0$; the Mar\v{c}enko--Pastur upper edge
$(1+\sqrt{c})^2$ then follows by the stability of ESD convergence under
spectral-norm perturbations.
 \end{proof}

Since $\hat{\mathbf{S}}_w = \displaystyle\frac{1}{N}\,\mathbf{X}\,\mathbf{X}^H$, the Mar\v{c}enko--Pastur theorem~\cite{Marchenko67} guarantees that the Empirical Spectral Distribution (ESD) of $\hat{\mathbf{S}}_w$ converges weakly to the Mar\v{c}enko--Pastur law with ratio $c$. Theorem~\ref{thm:tyler_est} gives $\left\|\check{\boldsymbol{\Sigma}}_{TYL} - \hat{\mathbf{S}}_w\right\| \overset{a.s.}{\to} 0$; since ESD convergence is continuous with respect to spectral-norm perturbations, the ESD of $\check{\boldsymbol{\Sigma}}_{TYL}$ converges to the same Mar\v{c}enko--Pastur law.

Eigenvalues of $\check{\boldsymbol{\Sigma}}_{TYL}$ that exceed $t=(1+\sqrt{c})^2$ are therefore attributed to signal components. Note that this threshold assumes the normalisation $\operatorname{tr}\left(\check{\boldsymbol{\Sigma}}_{TYL}\right)=m$, consistent with~\eqref{eqTyler} and~\eqref{eqTylerbis}.

If sources are present with sufficiently high SNR, some eigenvalues of $\check{\boldsymbol{\Sigma}}_{TYL}$ exceed $t$. The estimated number of sources is $\hat{p} = \#\bigl\{k : \lambda_k(\check{\boldsymbol{\Sigma}}_{TYL}) > t\bigr\}$.

\subsection{Simulations}
The efficiency of the whitening process is illustrated in Fig.~\ref{allfig}. We simulate $N=1000$ observations of dimension $m=100$ from a correlated $K$-distributed process with shape parameter $\nu=0.5$ and Toeplitz covariance matrix coefficient $\rho=0.8$, whose $(i,j)$-th entry is $\rho^{|i-j|}$ for $i,j=1,\ldots,m$. We then embed $p=3$ informative sources with SNR $=[5,10,15]\,$dB into the non-Gaussian correlated noise and compare the resulting eigenvalue distributions with the Mar\v{c}enko--Pastur upper bound. The eigenvalues are computed from: i)-\textit{Left} the sample covariance matrix $\hat{\mathbf{C}}_{SCM}$, ii)-\textit{Middle} the Tyler $M$-estimator $\hat{\mathbf{C}}_{TYL}$, and iii)-\textit{Right} the Tyler $M$-estimator $\check{\boldsymbol{\Sigma}}_{TYL}$ applied to whitened observations. The results show that the $p=3$ factors (10 Monte Carlo trials) are clearly identifiable only after whitening. 

\medskip
The following section validates the three branches on simulated and real-world data and compares them against baseline methods.

\section{Numerical Results and Comparisons}
\label{sec:applications}
The theoretical guarantees of Sections~\ref{sec::3}--\ref{sec::5} are validated through Monte Carlo simulations (Section~\ref{sec:simulations}), real hyperspectral images (Section~\ref{sec::Hyperspectral}), EEG source detection (Section~\ref{sec::EEG}), and portfolio optimization (Section~\ref{sec::finance}).

\begin{remark}
Although the paper is stated in the complex domain $\mathbb{C}^m$, all results extend verbatim to the real-valued case $\mathbb{R}^m$ by replacing Hermitian transposes $(\cdot)^H$ with transposes $(\cdot)^T$, complex Gaussian distributions $\mathcal{CN}(\mathbf{0}, \mathbf{I})$ with real Gaussian distributions $\mathcal{N}(\mathbf{0}, \mathbf{I})$, and the spectral integration domain $[0,2\pi)$ with $[0,\pi]$. The asymptotic eigenvalue thresholds~\eqref{eqseuil} and $(1+\sqrt{c})^2$ are unchanged; finite-sample fluctuations around these thresholds follow the Tracy-Widom$_1$ law in the real case rather than Tracy--Widom$_2$~\cite{Zhang16}, but this does not affect the almost sure consistency results. The EEG and finance applications fall in this real-valued setting.
\end{remark}

\subsection{Data-Driven Estimation of the Scaling Parameters}
\label{sec::practicalestimation}

In practice, neither the textures $\{\tau_i\}$ nor $\xi$ are directly accessible. We describe here how both $\xi$ and the scaling factor $\mathbb{E}[v(\tau\,\xi)\,\tau]$---required to form the consistent estimator $\check{\mathbf{C}}_{FP} = \widetilde{\mathbf{C}}_{FP}/\mathbb{E}[v(\tau\,\xi)\, \tau]$ of the true scatter matrix $\mathbf{C}$---can be estimated from the data alone.

\medskip
\subsubsection{Empirical estimator of $\xi$}

An empirical estimator of $\xi$ based on the leave-one-out approximation~\cite{Couillet15b} is used:
\begin{equation}
    \hat{\xi} = \frac{1}{N}\sum_{i=0}^{N-1}
\frac{1}{m}\,\mathbf{y}_i^H\,\check{\boldsymbol{\Sigma}}^{-1}_{(i)}\, \mathbf{y}_i \,,
  \label{eq:xi_loo}
\end{equation}
where $\check{\boldsymbol{\Sigma}}_{(i)}$ denotes the rank-one update of $\check{\boldsymbol{\Sigma}}_{FP}$ with observation $i$ removed:
\begin{equation}
\check{\boldsymbol{\Sigma}}_{(i)} = \check{\boldsymbol{\Sigma}}_{FP} - 
\frac{1}{N}\,u\!\left(\frac{1}{m}\,\mathbf{y}_i^H\, 
\check{\boldsymbol{\Sigma}}_{FP}^{-1}\,\mathbf{y}_i\right)
\mathbf{y}_i\,\mathbf{y}_i^H\, .
  \label{eq:Sigma_loo}
\end{equation}
It is shown in~\cite{Couillet15b} that $\xi - \hat{\xi} \overset{a.s.}{\longrightarrow} 0$.

Computing~\eqref{eq:xi_loo} naively would require inverting $\check{\boldsymbol{\Sigma}}_{(i)}$ for each $i$, at a cost of $O(Nm^3)$. Applying the matrix inversion lemma to~\eqref{eq:Sigma_loo} instead yields the closed-form expression
\begin{equation}
  \frac{1}{m}\,\mathbf{y}_i^H\,\check{\boldsymbol{\Sigma}}_{(i)}^{-1}\,
  \mathbf{y}_i
  \;=\; t_i
  \;+\; \frac{u(t_i)\,m^2\,t_i^2}{N - u(t_i)\,m\,t_i}\, ,
  \label{eq:SM_xi}
\end{equation}
where $t_i = \displaystyle\frac{1}{m}\,\mathbf{y}_i^H\,\check{\boldsymbol{\Sigma}}_{FP}^{-1}
\,\mathbf{y}_i$. 

The estimator~\eqref{eq:xi_loo} therefore requires a single matrix inversion $\check{\boldsymbol{\Sigma}}_{FP}^{-1}$, followed by $N$ scalar corrections via~\eqref{eq:SM_xi}, at a total cost of $O(m^3 + Nm^2)$.

\medskip
\subsubsection{Estimation of $\mathbb{E}[v(\tau\,\xi)\,\tau]$}
A key observation underlying this estimator is that $t_i$ converges almost surely to $\tau_i\,\xi$.
This can be established in four steps:

\medskip
\paragraph*{Step 1.}
The surrogate matrix $\hat{\mathbf{S}}$ defined in~\eqref{matrixS} admits the rank-one decomposition
\[
  \hat{\mathbf{S}} = \hat{\mathbf{S}}_{(i)}
  + \frac{v(\tau_i\,\xi)}{N}\,\mathbf{y}_{wi}\,\mathbf{y}_{wi}^H,
\]
where $\hat{\mathbf{S}}_{(i)}$ is the leave-one-out version of $\hat{\mathbf{S}}$ with observation $i$ removed. Applying the matrix inversion lemma to $\hat{\mathbf{S}}^{-1}$ yields
\begin{equation}
  t_i
  = \frac{\dfrac{1}{m}\,\mathbf{y}_{wi}^H\,
           \hat{\mathbf{S}}_{(i)}^{-1}\,\mathbf{y}_{wi}}
         {1 + \dfrac{v(\tau_i\xi)}{N}\,
              \dfrac{1}{m}\,\mathbf{y}_{wi}^H\,
              \hat{\mathbf{S}}_{(i)}^{-1}\,\mathbf{y}_{wi}}.
  \label{eq:ti_SM}
\end{equation}

\medskip
\paragraph*{Step 2.}
Since observation $i$ is absent from $\hat{\mathbf{S}}_{(i)}$, the speckle vector $\mathbf{x}_i$ (with $\mathbf{y}_{wi}=\sqrt{\tau_i}\,\mathbf{x}_i$) is independent of $\hat{\mathbf{S}}_{(i)}$. The Hanson--Wright inequality (Lemma~S-2 in the Supplementary Material; see also \cite{Vershynin2018,Rudelson2013}) gives, for any $\varepsilon>0$,
\begin{equation}
\begin{aligned}
\mathbb{P}\Biggl(
\biggl|
\frac{1}{m}\,\mathbf{x}_i^{H}\,\hat{\mathbf S}_{(i)}^{-1}\,\mathbf{x}_i
-\frac{1}{m}\operatorname{tr}\!\bigl(\hat{\mathbf S}_{(i)}^{-1}\bigr)
\biggr|
>\varepsilon
\Biggr)
\le
2\exp\!\Biggl(
-\frac{C\,m\,\varepsilon^2}
{\|\hat{\mathbf S}_{(i)}^{-1}\|^2}
\Biggr).
\end{aligned}
\end{equation}
Since $\|\hat{\mathbf{S}}_{(i)}^{-1}\|$ is almost surely bounded, the right-hand side is summable in $m$. The Borel--Cantelli lemma therefore implies
\begin{equation}
  \frac{1}{m}\,\mathbf{y}_{wi}^H\,\hat{\mathbf{S}}_{(i)}^{-1}\,
  \mathbf{y}_{wi}
  = \frac{\tau_i}{m}\,\mathbf{x}_i^H\,\hat{\mathbf{S}}_{(i)}^{-1}
  \,\mathbf{x}_i
  \overset{a.s.}{\longrightarrow}
  \frac{\tau_i}{m}
  \operatorname{tr}\!\left(\hat{\mathbf{S}}_{(i)}^{-1}\right).
  \label{eq:ti_LLN}
\end{equation}

\medskip
\paragraph*{Step 3.}
By Theorem~1 of~\cite{Couillet15b}, the normalized trace of $\hat{\mathbf{S}}^{-1}_{(i)}$ converges almost surely:
\begin{equation}
  \frac{1}{m}
  \operatorname{tr}\!\left(\hat{\mathbf{S}}^{-1}_{(i)}\right)
  \overset{a.s.}{\longrightarrow}
  \xi,
  \label{eq:trace_xi}
\end{equation}
where $\xi$ is the unique positive solution of~\eqref{xi}.

\medskip
\paragraph*{Step 4.}
Substituting~\eqref{eq:ti_LLN} and~\eqref{eq:trace_xi} into~\eqref{eq:ti_SM}, and using $\psi(x)=x\,v(x)$, the denominator converges to $1+c\,\psi(\tau_i\,\xi)> 0$ (since $v$ is bounded and continuous). The two factors cancel, giving $t_i \overset{a.s.}{\longrightarrow} \tau_i\,\xi$. Since $t_i \overset{a.s.}{\longrightarrow} \tau_i\,\xi$, we have $v(t_i) \overset{a.s.}{\longrightarrow} v(\tau_i\,\xi)$ and $t_i\,v(t_i) \overset{a.s.}{\longrightarrow} \xi\,v(\tau_i\,\xi)\,\tau_i$,
so that
\begin{equation}
\widehat{\rho} \;\triangleq\;
  \frac{1}{\hat{\xi}}\,\frac{1}{N}\sum_{i=0}^{N-1} \psi(t_i)
  \;\overset{a.s.}{\longrightarrow}\; \mathbb{E}[v(\tau\,\xi)\,\tau]
  \triangleq \rho\,,
  \label{eq:Evtau_est}
\end{equation}
where $t_i = \displaystyle\frac{1}{m}\,\mathbf{y}_i^H\,\check{\boldsymbol{\Sigma}}_{FP}^{-1}
\,\mathbf{y}_i$. 

\begin{figure*}[htbp]
\centering
\includegraphics[width=0.9\columnwidth]{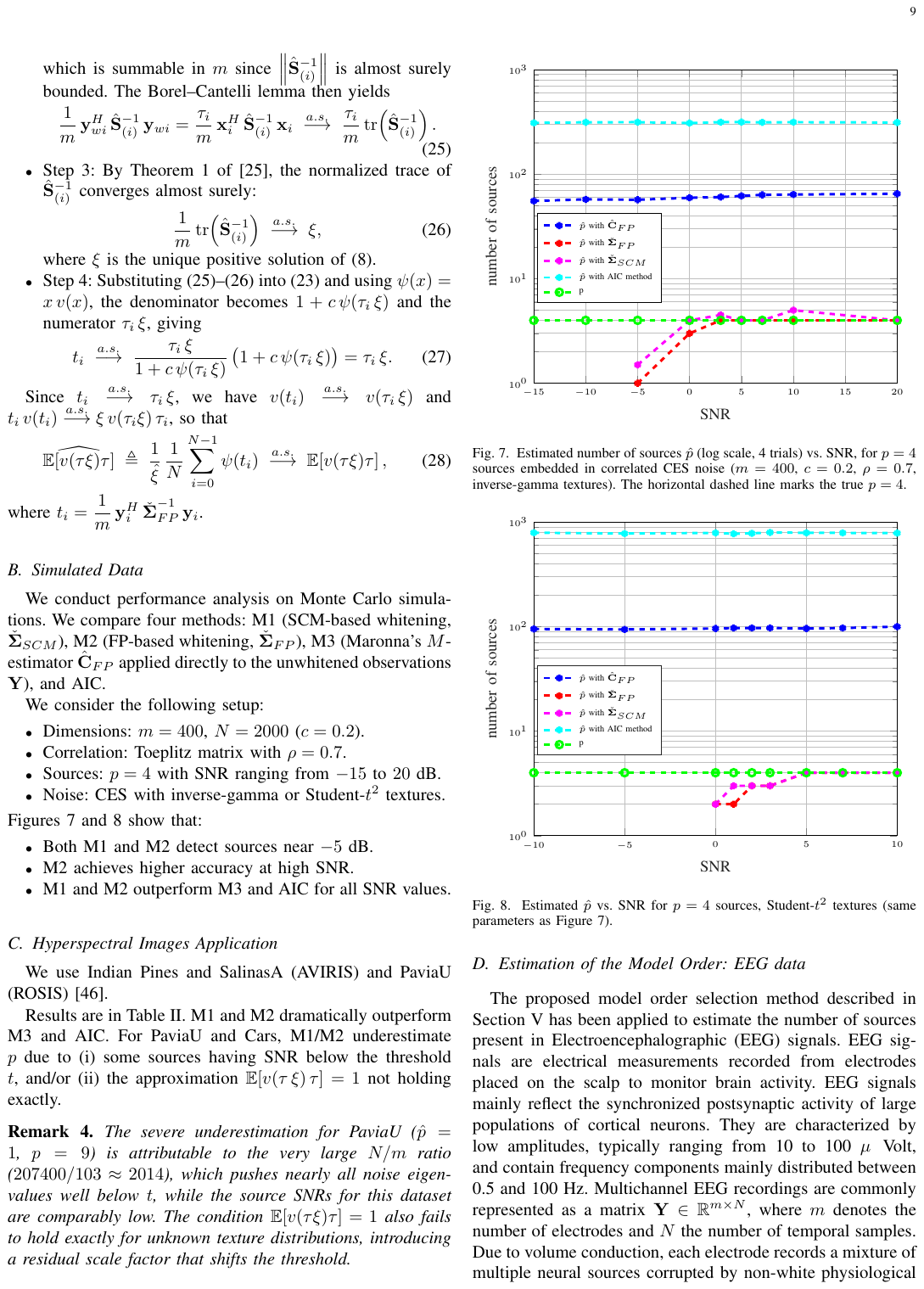}\hfill
\includegraphics[width=0.9\columnwidth]{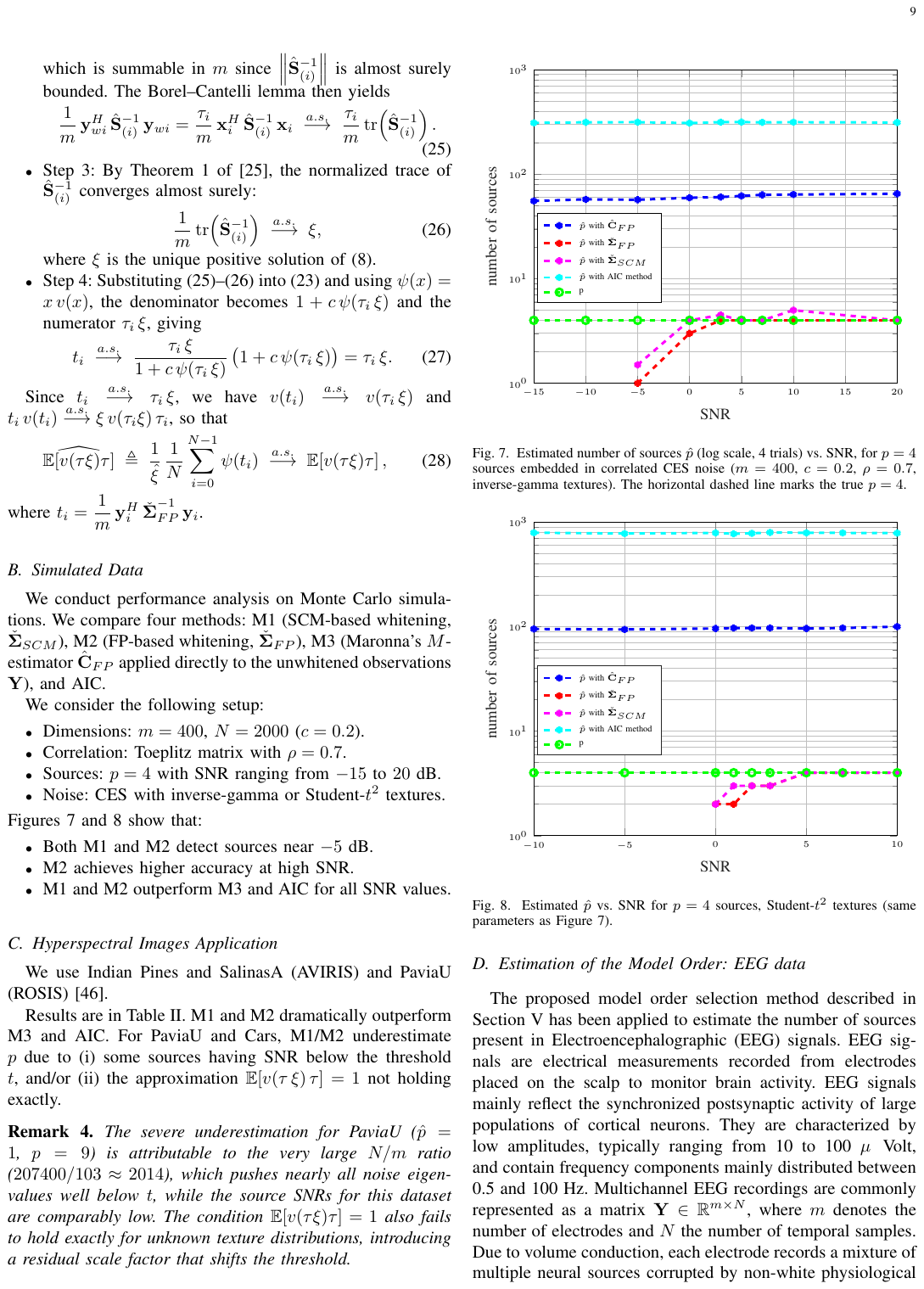}
\caption{Estimated $\hat{p}$ (log scale, 4 Monte-Carlo trials) vs.\ SNR for $p=4$ sources in correlated CES noise ($m=400$, $c=0.2$, $\rho=0.7$). \textit{Left}: inverse-gamma textures. \textit{Right}: Student-$t^2$ textures. The dashed line marks the true $p=4$.}
\label{fig:simul}
\end{figure*}

\begin{figure*}[h]
\centering
\begin{tabular}{cc}
\includegraphics[width=0.95\columnwidth]{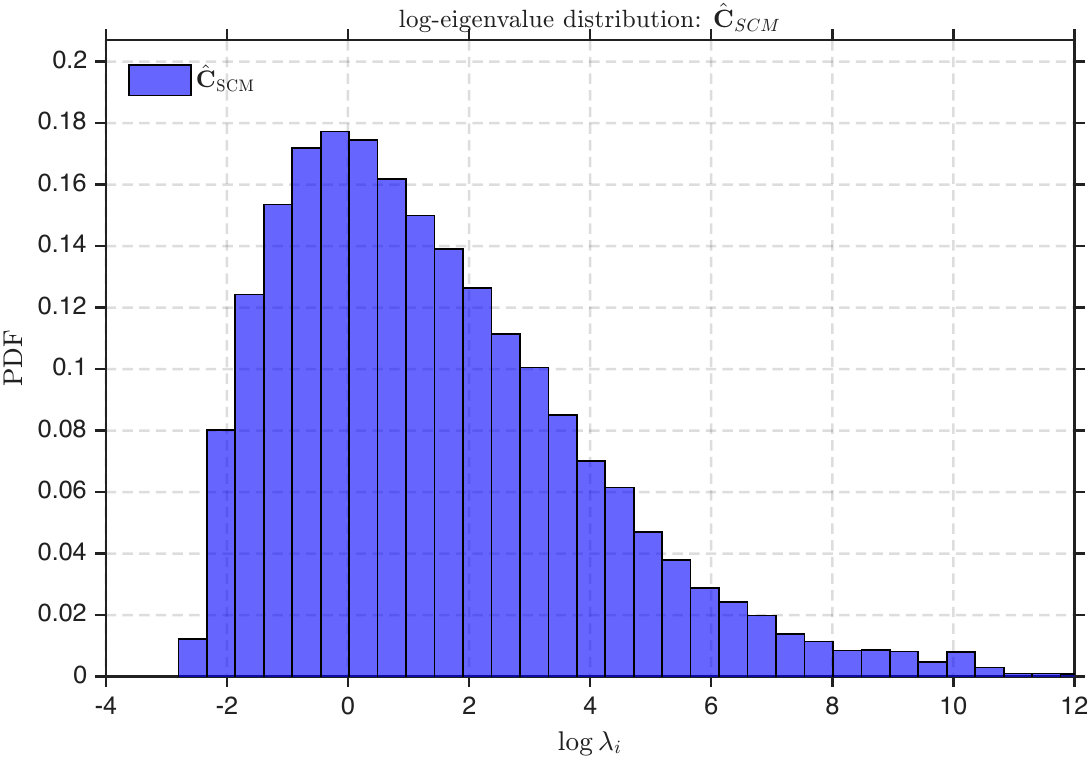}
&
\includegraphics[width=0.95\columnwidth]{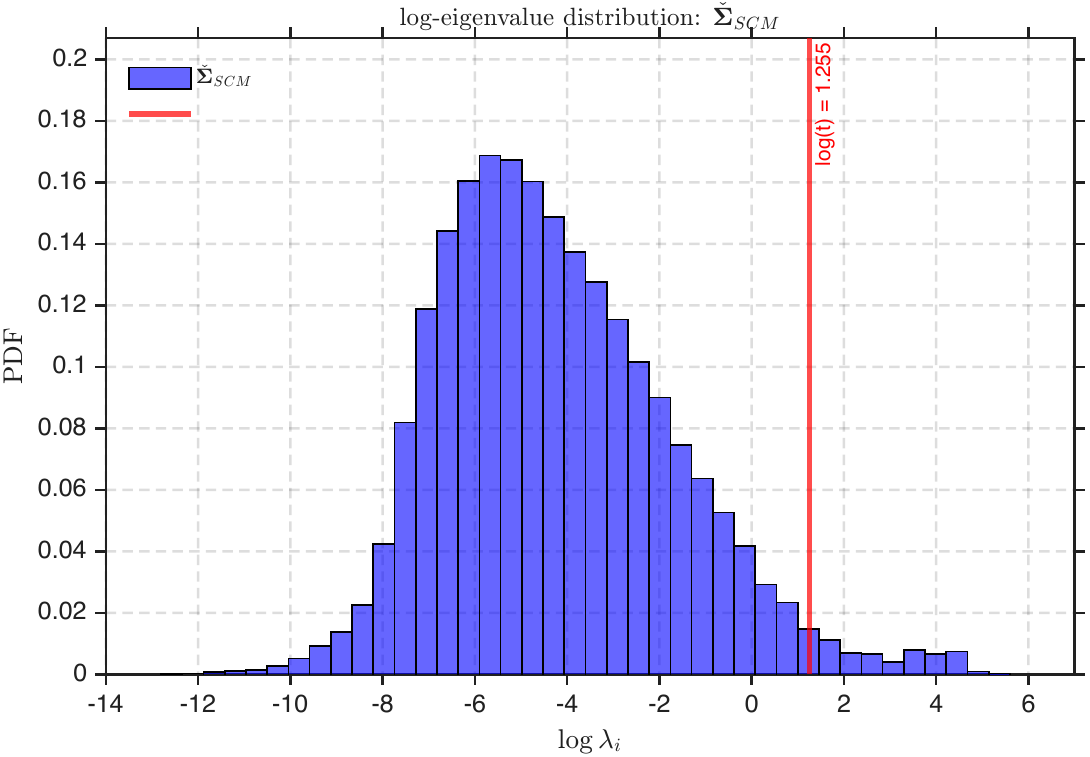} \\
\includegraphics[width=0.95\columnwidth]{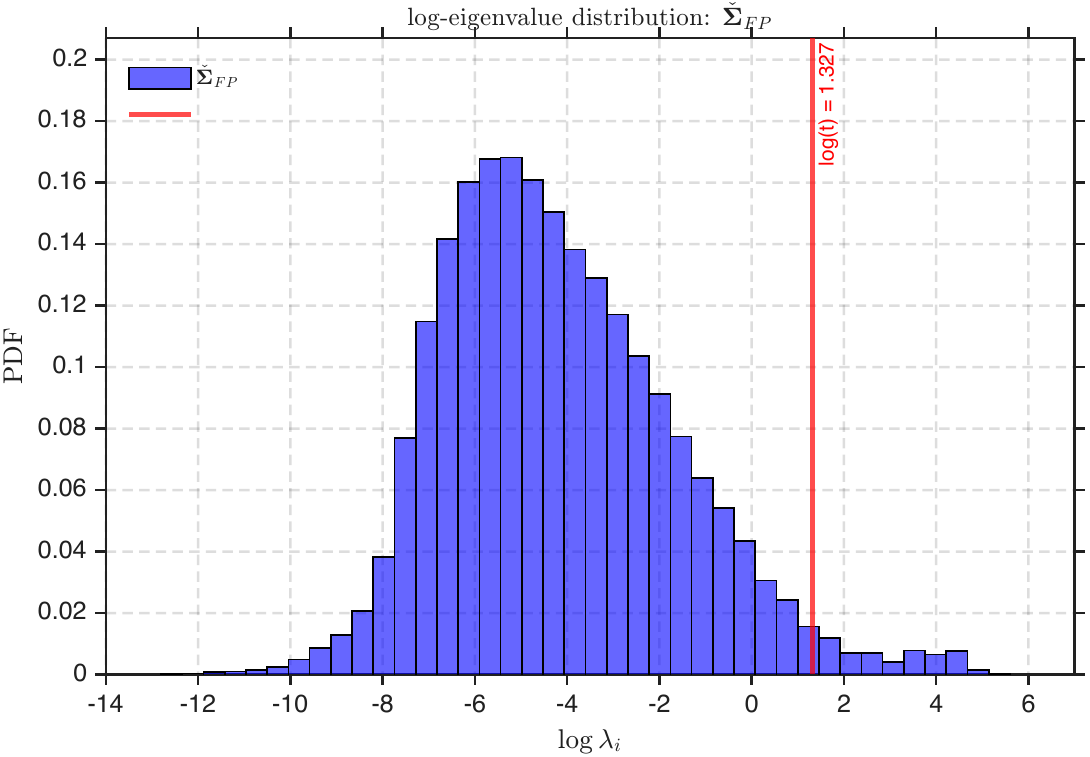}
&
\includegraphics[width=0.95\columnwidth]{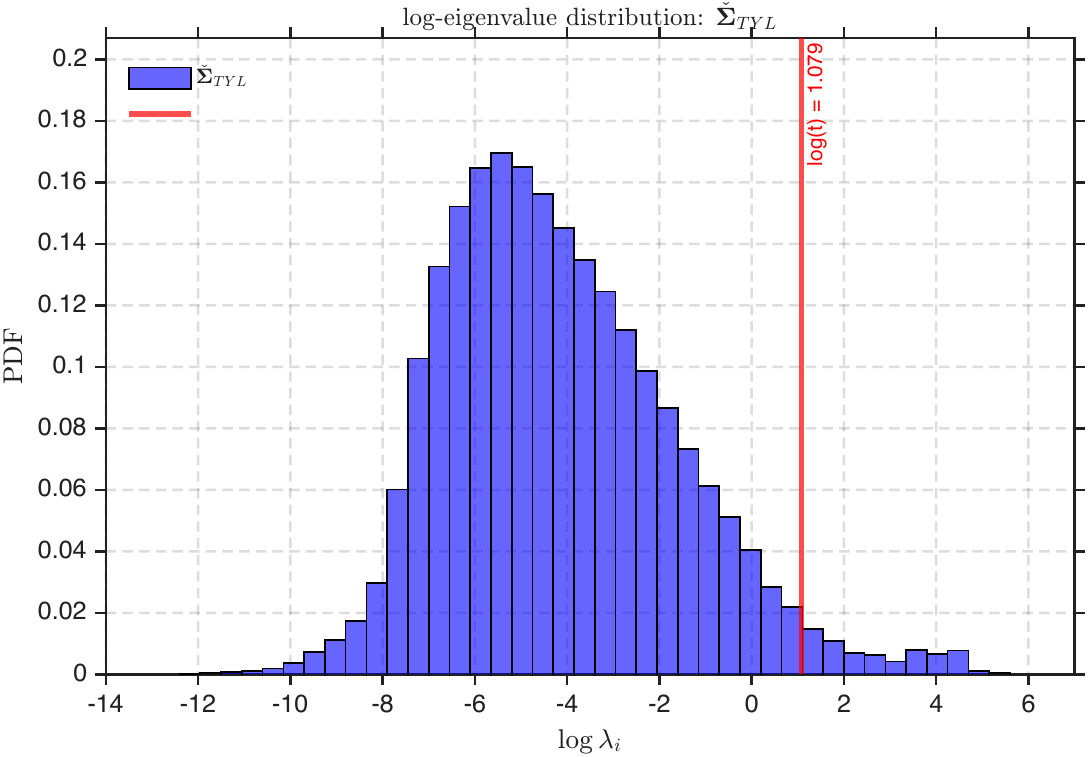}
\end{tabular}
\caption{Log-eigenvalue distributions of $\hat{\mathbf{C}}_{\mathrm{SCM}}$, 
$\check{\boldsymbol{\Sigma}}_{\mathrm{SCM}}$, $\check{\boldsymbol{\Sigma}}_{\mathrm{FP}}$, 
and $\check{\boldsymbol{\Sigma}}_{\mathrm{TYL}}$ for the MAMEM EEG SSVEP Dataset~I ($m=256$, $N=500$, $c=0.512$). The RMT thresholds $\log(t)$ differ across branches (SCM: $1.26$, Maronna: $1.33$, Tyler: $1.08$) and are shown as red vertical lines in each panel.}
\label{HS1bis}
\end{figure*}

\subsection{Simulated Data}
\label{sec:simulations}

We evaluate the proposed approach through Monte Carlo simulations by comparing four methods: M1 (SCM-based whitening, $\check{\boldsymbol{\Sigma}}_{SCM})$, M2 (Maronna-based whitening, $\check{\boldsymbol{\Sigma}}_{FP}$), M3 (direct application of Maronna’s M-estimator $\hat{\mathbf{C}}_{FP}$ to the unwhitened observations $\mathbf{Y}$), and the classical AIC criterion.

Simulations use $m=400$, $N=2000$ ($c=0.2$), Toeplitz correlation $\rho=0.7$, $p=4$ sources with SNR $\in[-15,20]$\,dB, and CES noise with inverse-gamma or Student-$t^2$ textures. Fig.~\ref{fig:simul} shows that M1 and M2 detect sources from $\approx{-5}$\,dB, with M2 achieving better accuracy at high SNR; both consistently outperform M3 and AIC across all SNR values.

\subsection{Hyperspectral Images Application}
\label{sec::Hyperspectral}

We use Indian Pines and SalinasA (AVIRIS) and PaviaU (ROSIS) \cite{siteinternet}. 

Results are in Table~\ref{table-Tableau1}. M1 and M2 dramatically outperform M3 and AIC. For PaviaU and Cars, M1/M2 underestimate $p$ due to (i) some sources having SNR below the threshold $t$, and/or (ii) the empirical estimate of $\mathbb{E}[v(\tau\,\xi)\,\tau]$ deviating from its true value due to the unknown texture distribution. 

\begin{remark}
The severe underestimation for PaviaU ($\hat{p}=1$, $p=9$) is attributable to the very large $N/m$ ratio ($207400/103 \approx 2014$), which pushes nearly all noise eigenvalues well below $t$, while the source SNRs for this dataset are comparably low. Moreover, when the texture distribution is unknown, the empirical estimator $\widehat{\rho}$ of $\mathbb{E}[v(\tau\,\xi)\,\tau]$, defined in~\eqref{eq:Evtau_est}, may differ from the true value, introducing a residual scale factor that shifts the threshold~$t$.
\end{remark}

\begin{table}[t]
\caption{Estimated number of sources $\hat{p}$ for four benchmark hyperspectral datasets. M1 and M2 (whitened) consistently outperform M3 (unwhitened) and AIC; the underestimation for PaviaU/Cars is discussed in the text.}
\label{table-Tableau1}
\begin{center}
\renewcommand{\arraystretch}{0.9}
{\setlength{\tabcolsep}{0.25cm}
\begin{tabular}{c|cccc}
\hline
Images & Indian Pines & SalinasA & PaviaU & Cars \\
$N$, $m$ & $21025$, $224$ & $7138$, $224$ & $207400$, $103$ & $40200$, $167$ \\
\hline
$p$         & 16  & 9   & 9   & 6   \\
$\hat{p}$ M1 & 11 & 9   & 1   & 3   \\
$\hat{p}$ M2 & 12 & 9   & 1   & 3   \\
$\hat{p}$ M3 & 220 & 204 & 103 & 1  \\
$\hat{p}$ AIC & 219 & 203 & 102 & 143 \\
\hline
\end{tabular}}
\end{center}
\end{table}

\subsection{Estimation of the Model Order: EEG data}
\label{sec::EEG}

To evaluate the proposed framework in a realistic high-dimensional setting, we consider the publicly available MAMEM EEG SSVEP Dataset~I \cite{moustakas2017mamem,mamemfigshare}. The dataset consists of high-density EEG recordings acquired from healthy subjects using \(m=256\) electrodes sampled at 250 Hz during steady-state visually evoked potential (SSVEP) experiments. Covariance matrices are estimated from non-overlapping 2-second windows (\(N=500\) samples), yielding a dimensional ratio \(c=m/N=0.512\), representative of modern high-dimensional EEG applications.

EEG signals exhibit spatial correlation, non-Gaussianity, and heavy-tailed fluctuations due to physiological and measurement artifacts. This motivates the use of robust covariance estimators and CES-type models, which have been shown to provide more accurate statistical descriptions of EEG observations in the presence of heavy-tailed fluctuations~\cite{Engemann2015,Li2024}.Although the Toeplitz structure only approximates EEG spatial covariance matrices, it provides an effective regularization for covariance estimation in short observation windows.

Each temporal segment is analyzed using the proposed SCM-, Maronna-, and Tyler-based estimators (Section~\ref{sec::5}), together with AIC, MDL~\cite{wax1985detection}, and the Kritchman--Nadler (KN) detector~\cite{kritchman2009non}. Table~\ref{tab:realdata} reports the empirical mean and variance of the estimated model order \(\hat p\).

\begin{table}[htbp]
\centering
\caption{Mean and variance of the estimated number of sources over multiple EEG temporal windows.}
\label{tab:realdata}
\begin{tabular}{|c|c|c|}
\hline
Method & Mean of $\hat p$ & Variance of $\hat p$ \\
\hline
$\check{\mathbf{\Sigma}}_{\mathrm{SCM}}$ & 6.80 & 1.50 \\
$\check{\mathbf{\Sigma}}_{\mathrm{FP}}$ & 6.75 & 1.56 \\
$\check{\mathbf{\Sigma}}_{\mathrm{TYL}}$ & 7.80 & 1.96 \\
AIC & 254.99 & 0.01 \\
MDL & 254.96 & 0.04 \\
KN & 197.98 & 50.68 \\
\hline
\end{tabular}
\end{table}

Figure~\ref{HS1bis} displays the eigenvalue distributions of the sample covariance matrix \(\hat{\mathbf{C}}_{\mathrm{SCM}}\) and the whitened covariance matrices \(\check{\mathbf{\Sigma}}_{\mathrm{SCM}}\), \(\check{\mathbf{\Sigma}}_{\mathrm{FP}}\), and \(\check{\mathbf{\Sigma}}_{\mathrm{TYL}}\). Despite the moderate sample size, the proposed asymptotic thresholds remain effective for \(c=0.512\).

For the \(K=235\) temporal windows, AIC and MDL systematically estimate a model order close to the observation dimension (\(\hat p \approx 255\)), while KN still yields unrealistically large estimates with high variability. These results reflect the mismatch between the assumptions underlying these methods and the correlated, non-Gaussian nature of EEG observations. In contrast, the proposed robust approaches provide stable estimates between 6.7 and 7.8 sources with variances below 2. As shown in Fig.~\ref{HS1bis}, robust whitening compresses the bulk eigenvalue distribution while preserving a small number of dominant outliers, thereby improving signal/noise subspace separation. Overall, the results demonstrate the effectiveness of robust covariance whitening for model-order estimation in high-dimensional EEG data.

\subsection{Estimation of the Model Order: Portfolio Optimization Application}
\label{sec::finance}

The proposed model-order selection framework has been successfully applied to portfolio optimization problems. In particular, \cite{Jay2018, Jay2020} show that replacing standard covariance-based factor selection with the proposed robust approach improves both maximum diversification and global minimum variance portfolios, leading to lower out-of-sample risk and improved risk-adjusted performance.

\medskip
\subsubsection{Financial factor model}

Let \(\mathbf{Y}=[\mathbf{y}_0,\ldots,\mathbf{y}_{N-1}]\) denote the \(m\times N\) matrix of asset returns. We assume that returns follow a \(p\)-factor model
\[
\mathbf{y}_t=\mathbf{M}\,\mathbf{s}_t+\sqrt{\tau_t}\,\mathbf{C}^{1/2}\,\mathbf{x}_t\, ,
\]
where \(\mathbf{M}\) is the factor loading matrix, \(\mathbf{s}_t\) contains the latent factors, \(\mathbf{x}_t\) is a zero-mean unitarily invariant random vector with \(\|\mathbf{x}_t\|^2=1\), \(\mathbf{C}\) is a Toeplitz-structured scatter matrix, and \(\tau_t\) models time-varying volatility. This formulation accounts for both cross-sectional dependence and heavy-tailed fluctuations commonly observed in financial returns.

Estimating the number of factors \(p\) is a central problem in portfolio construction and risk management, since an accurate separation of signal and noise subspaces directly impacts covariance estimation and portfolio allocation \cite{jayduvdar13, Jay11}.

\medskip
\subsubsection{Real financial data}

We consider the publicly available \emph{100 Portfolios Formed on Size and Book-to-Market} dataset from the Kenneth R. French Data Library. The dataset consists of 100 value-weighted portfolios obtained from the intersection of 10 market-capitalization deciles and 10 Book-to-Market deciles. Monthly returns are available over several decades, providing a high-dimensional dataset with strong cross-sectional dependence.

The \(m=100\) portfolio return series are analyzed over rolling windows of \(N=150\) observations, corresponding to a dimensional ratio \(c=m/N=0.67\). Each window is processed using the proposed SCM-, Maronna-, and Tyler-based estimators (Section~\ref{sec::5}), together with AIC, MDL~\cite{wax1985detection}, and the Kritchman--Nadler (KN) detector~\cite{kritchman2009non}. Table~\ref{tab:realdatafinance} reports the empirical mean and variance of the estimated model order \(\hat p\).

\begin{table}[htbp]
\centering
\caption{Mean and variance of the estimated number of sources over $N=150$ days for \(m=100\) portfolio return series.}
\label{tab:realdatafinance}
\begin{tabular}{|c|c|c|}
\hline
Method & Mean of $\hat p$ & Variance of $\hat p$ \\
\hline
$\check{\mathbf{\Sigma}}_{\mathrm{SCM}}$ & 2.43 & 0.44 \\
$\check{\mathbf{\Sigma}}_{\mathrm{FP}}$ & 2.39 & 0.43 \\
$\check{\mathbf{\Sigma}}_{\mathrm{TYL}}$ & 5.09 & 0.63 \\
AIC & 32.43 & 156.53 \\
MDL & 7.26 & 4.93 \\
KN & 28.09 & 49.17 \\
\hline
\end{tabular}
\end{table}

Figure~\ref{financeplots} displays the eigenvalue distributions of \(\hat{\mathbf{C}}_{\mathrm{SCM}}\), \(\check{\mathbf{\Sigma}}_{\mathrm{SCM}}\), \(\check{\mathbf{\Sigma}}_{\mathrm{FP}}\), and \(\check{\mathbf{\Sigma}}_{\mathrm{TYL}}\). The robust whitening procedures concentrate the bulk of the eigenvalues while preserving only a few dominant outliers associated with the factor structure.

In contrast with AIC, MDL, and KN, which tend to overestimate the number of factors, the proposed robust approaches consistently identify between three and five dominant factors with low variability across rolling windows. This behavior is consistent with the widely documented low-dimensional factor structure of equity returns and confirms the relevance of robust covariance whitening for model-order estimation in financial applications.

\begin{figure*}[htbp]
\centering
\begin{tabular}{cc}
\includegraphics[width=0.95\columnwidth]{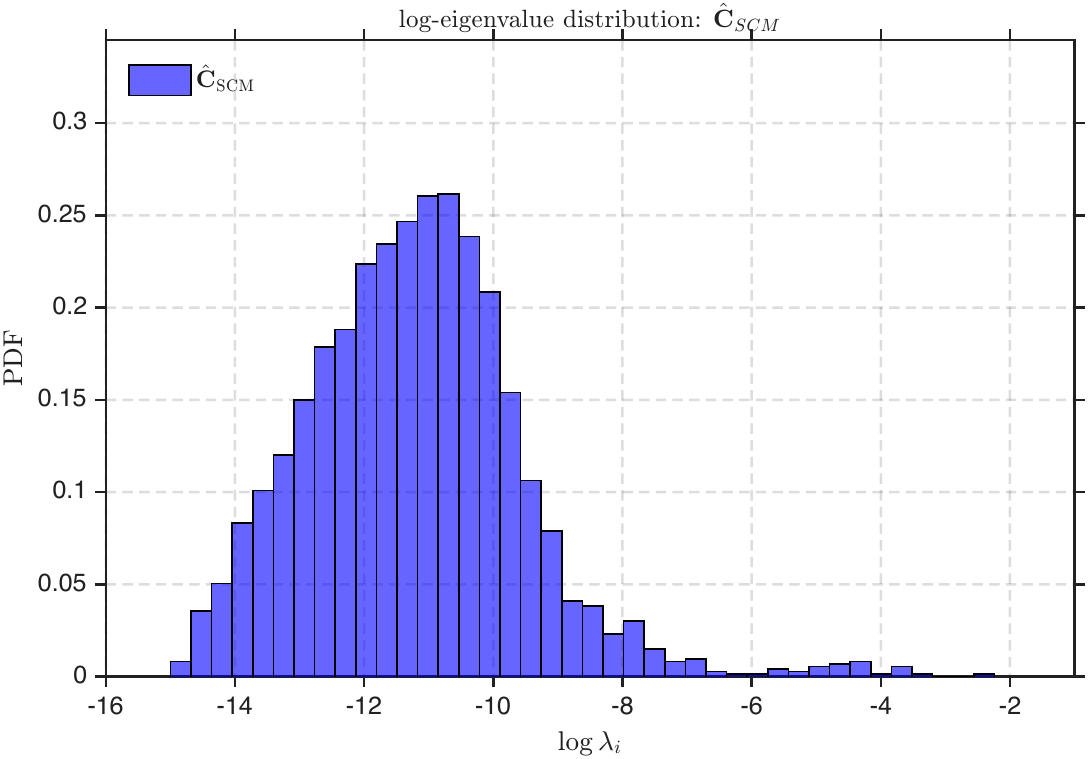}
&
\includegraphics[width=0.95\columnwidth]{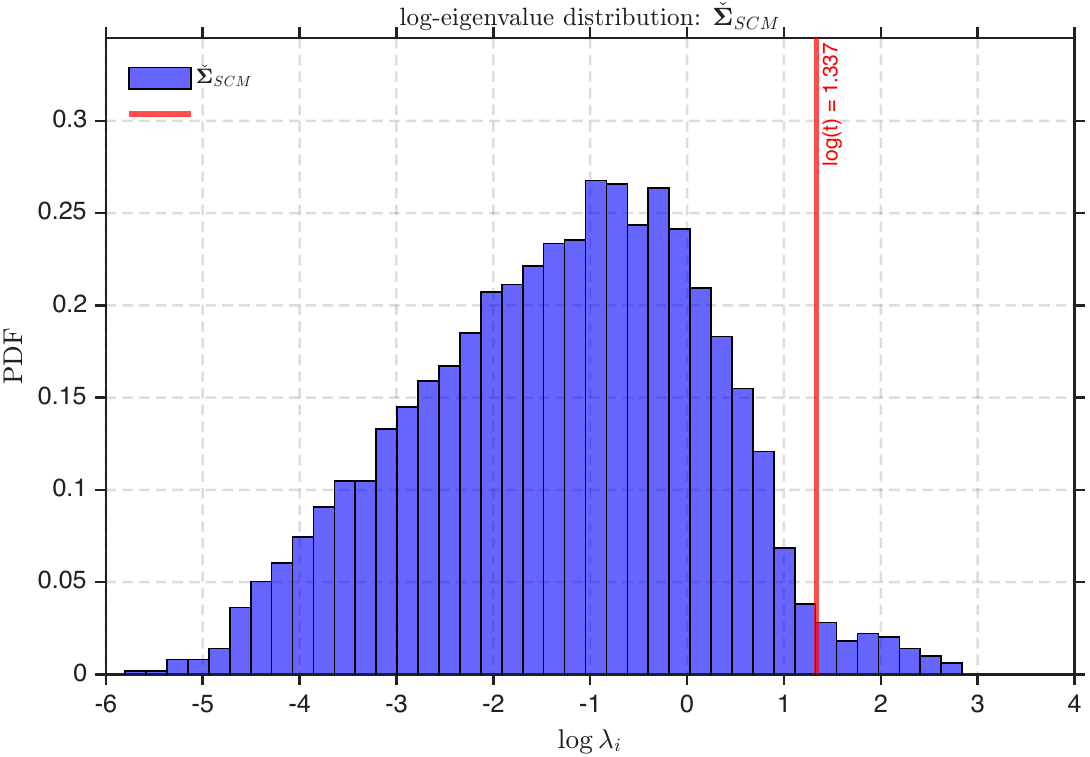} \\
\includegraphics[width=0.95\columnwidth]{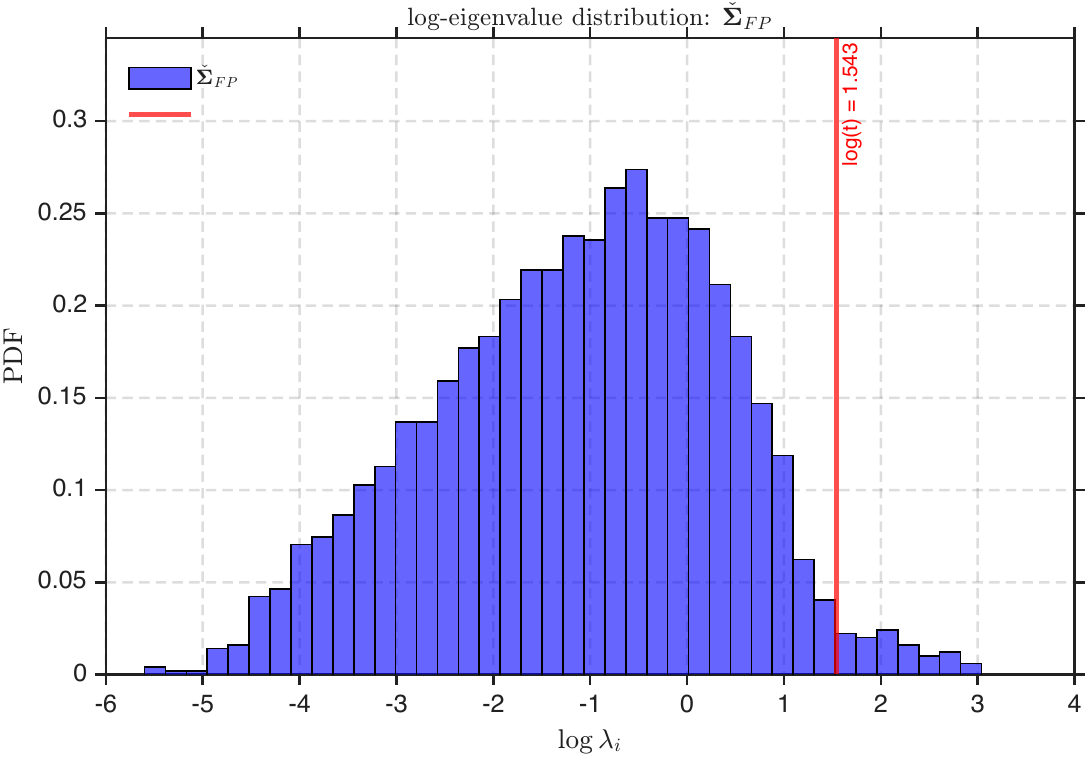}
&
\includegraphics[width=0.95\columnwidth]{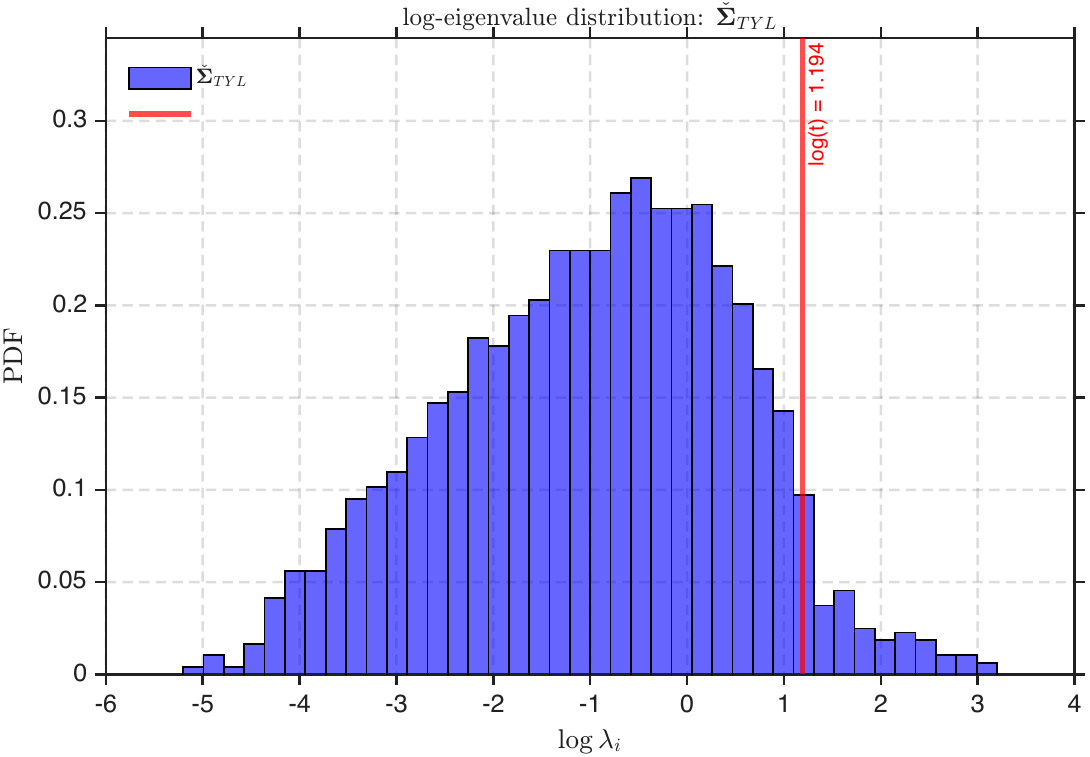}
\end{tabular}
\caption{Log-eigenvalue distributions of
\(\hat{\mathbf{C}}_{\mathrm{SCM}}\),
\(\check{\mathbf{\Sigma}}_{\mathrm{SCM}}\),
\(\check{\mathbf{\Sigma}}_{\mathrm{FP}}\),
and
\(\check{\mathbf{\Sigma}}_{\mathrm{TYL}}\)
for the 100 Fama--French portfolios dataset.
Here, \(m=100\), \(N=150\), and \(c=0.67\).}
\label{financeplots}
\end{figure*}

\section{Conclusion}

This paper has proposed and analysed a robust two-stage framework for model order selection under large-dimensional, correlated, non-Gaussian (CES) noise. Three estimation branches---SCM-based, Maronna $M$-estimator-based, and distribution-free Tyler-based---were developed within a unified Toeplitz whitening and RMT thresholding pipeline. Almost sure consistency of each branch was established, and an explicit closed-form eigenvalue threshold was derived, valid in the proportional-growth regime $m/N \to c > 0$.
 
The proposed approach demonstrates strong performance across all experimental settings---synthetic data, real hyperspectral images, EEG recordings, and financial data---highlighting its robustness to correlation and heavy-tailed noise distributions. In contrast with existing methods, the framework requires no prior knowledge of the noise structure and remains effective in challenging high-dimensional regimes.

Beyond hyperspectral imaging, the proposed methodology is particularly relevant for high-dimensional financial data, where asset returns exhibit latent factor structures embedded in correlated and non-Gaussian noise. In this context, it provides a principled tool for robust factor selection and covariance estimation, with potential applications to EEG source detection, portfolio optimization,
and risk management.

More broadly, this work opens several perspectives. On the theoretical side, extensions to non-Toeplitz structured noise (e.g., block-Toeplitz or low-rank-plus-Toeplitz covariance models) and to the spiked CES model with unknown $p$ and unknown $\mathbf{C}$ would be valuable. On the practical side, natural targets include radar clutter rank estimation, source localisation in antenna arrays, hyperspectral anomaly detection, and online (recursive) implementations for streaming financial data.



\putbib[bibliographieeug]
\end{bibunit}               

\clearpage
\appendices
\setcounter{page}{1}
\section*{Supplementary Material}
\setcounter{section}{0}
\setcounter{equation}{0}
\setcounter{theorem}{0}
\renewcommand{\thesection}{S\arabic{section}}
\renewcommand{\theequation}{S\arabic{equation}}
\renewcommand{\thetheorem}{S\arabic{theorem}}

\begin{bibunit}[IEEEtran]
This Supplementary Material provides detailed proofs of Theorems~1--4 stated in the main paper. We follow the same notations as in the main paper throughout. Theorem and lemma numbers in the Supplementary Material are prefixed by ``S-'' to distinguish them from the main paper. Reference numbers in this document are local to the Supplementary Material.

\setcounter{section}{0}
\renewcommand{\thesection}{S-\Alph{section}}
\renewcommand{\thetheorem}{S-\arabic{theorem}}
\renewcommand{\thelemma}{S-\arabic{lemma}}
 
\section{Proofs of Theorem~1 and Theorem~3}
\label{sec:supA}

The proofs follow~\cite{Vinogradova14}. Throughout, $c$ denotes the
limit of $c_N=m/N\to c>0$. We use the Toeplitz spectral-norm bound
from Lemma~4.1 of~\cite{Gray6}: for a Hermitian Toeplitz matrix
$\mathbf{A}=\mathcal{L}((a_0,\ldots,a_{m-1})^T)$ with absolutely
summable coefficients,
\begin{equation}
  \|\mathbf{A}\|\leq M_f
  =\sup_{\lambda\in[0,2\pi)}\!\left|\sum_{k=1-m}^{m-1}a_k e^{i\lambda k}\right|.
  \label{eq:toeplitz_bound}
\end{equation}

\subsection*{S-A.1\quad Proof of Theorem~1 (Consistency of
$\check{\mathbf{C}}_{SCM}$)}
\paragraph{Proof outline}
We bound $\left\|\mathcal{T}\left(\displaystyle\frac{1}{N}\mathbf{Y}\mathbf{Y}^H\right)) - \mathbf{C}\right\|$ into four spectral terms---see~\eqref{eq:thm1_four}---corresponding to: (1)~stochastic fluctuations of the noise, (2)~bias of the noise term, (3)~cross noise--signal interference, and (4)~pure signal contribution. Each term is shown to vanish almost surely via, respectively: (1)~Hanson--Wright concentration and Bernstein's inequality (Lemmas~\ref{lem:chi1_bound}--\ref{lem:chi3_bound}), (2)~direct computation, (3)~bilinearity and conditional centering, and (4)~absolute summability of the columns of $\mathbf{M}$ (Assumption~3-(i) of the main paper).

\paragraph{Proof}
Since $\mathbb{E}[\tau]=1$ (Assumption~2), the target identity is $\mathcal{T}\!\left(\displaystyle\frac{1}{N}\mathbf{Y}\mathbf{Y}^H\right) \overset{a.s.}{\to}\mathbf{C}$. We decompose the error into noise, cross, and signal terms and show each vanishes a.s.: the noise term via Hanson--Wright and Bernstein concentration, the cross term via bilinearity in $\mathbf{X}$ and $\boldsymbol{\delta}$, and the signal term via the absolute summability of $\mathbf{M}$ (Assumption~3). As in the main body, let
$\mathbf{Y}=\mathbf{M}\,\boldsymbol{\Gamma}^{1/2}\, \boldsymbol{\delta}^H +\mathbf{C}^{1/2}\,\mathbf{X}\,\mathbf{T}^{1/2}$ and let $\mathcal{T}$ be the Toeplitz operator defined in the introduction. Under Assumptions~1, 2 and~3, since $\mathcal{T}\!\left(\displaystyle \frac{1}{N}\mathbf{Y}\, \mathbf{Y}^H\right)$ and $\mathbb{E}[\tau]\,\mathbf{C}$ are both Toeplitz matrices, one can write thanks to~\eqref{eq:toeplitz_bound}:
\begin{equation}
\left\|\mathcal{T}\!\left(\displaystyle \frac{1}{N}\mathbf{Y}\, \mathbf{Y}^H\right) -\mathbb{E}[\tau]\,\mathbf{C}\right\|
\leq \!\sup_{\lambda\in[0,2\pi)} \left|\hat{\gamma}_m(\lambda) -\mathbb{E}[\tau]\,\gamma_m(\lambda)\right|\, ,
\label{eq:thm1_main}
\end{equation}
where $\gamma_m(\lambda)=\displaystyle\sum_{k=1-m}^{m-1}c_{k}\,e^{ik\lambda}$, $\hat{\gamma}_m(\lambda)=\displaystyle\sum_{k=1-m}^{m-1}\check{c}_{k}\,e^{ik\lambda}$ with $c_{-k}=c_k^\star$ and $\check{c}_{-k}=\check{c}_k^\star$. \\

The following lemma is essential for the development of the proof.
\begin{lemma}[Spectral representation]
\label{lem:gamma_hat}
The spectral density $\hat{\gamma}_m(\lambda)$ admits the representation:
\begin{equation}
\hat{\gamma}_m(\lambda) = \mathbf{d}_m^H(\lambda) \frac{\mathbf{Y} \mathbf{Y}^H}{N} \mathbf{d}_m(\lambda)\, ,
\label{eq:gamma_dm}
\end{equation}
with $\mathbf{d}_m(\lambda)=\displaystyle \frac{1}{\sqrt{m}}
\left(1,e^{-i\lambda},\ldots,e^{-i(m-1)\lambda}\right)^T$.
\end{lemma}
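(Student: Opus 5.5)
The plan is a direct computation that expands the quadratic form and regroups its terms along diagonals. Write $\mathbf{A}=\frac{1}{N}\mathbf{Y}\mathbf{Y}^H$, which is Hermitian. The $(i+1)$-th entry of $\mathbf{d}_m(\lambda)$ is $m^{-1/2}e^{-i\lambda i}$ for $i=0,\ldots,m-1$, so the $(i+1)$-th entry of $\mathbf{d}_m^H(\lambda)$ is $m^{-1/2}e^{i\lambda i}$. Hence
\begin{equation*}
\mathbf{d}_m^H(\lambda)\,\mathbf{A}\,\mathbf{d}_m(\lambda)=\frac{1}{m}\sum_{i,j=0}^{m-1}A_{i,j}\,e^{i\lambda(i-j)} .
\end{equation*}

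Next I will reindex by the lag $k=i-j\in\{1-m,\ldots,m-1\}$. This gives
\begin{equation*}
\mathbf{d}_m^H(\lambda)\,\mathbf{A}\,\mathbf{d}_m(\lambda)=\sum_{k=1-m}^{m-1}e^{ik\lambda}\,\frac{1}{m}\sum_{i-j=k}A_{i,j}.
\end{equation*}
For $k\geq 0$, the inner coefficient $\frac{1}{m}\sum_{i-j=k}A_{i,j}$ is exactly $\tilde a_k=\check c_k$, the diagonal average defining the Toeplitz rectification $\mathcal{T}(\mathbf{A})$.

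For $k<0$, I use Hermitian symmetry $A_{i,j}=A_{j,i}^\star$. This gives $\sum_{i-j=k}A_{i,j}=\bigl(\sum_{j-i=-k}A_{j,i}\bigr)^\star$, so the coefficient equals $\check c_{-k}^\star$. That is precisely the convention $\check c_{-k}=\check c_k^\star$ used to define $\hat\gamma_m$, and it matches how the operator $\mathcal{L}$ fills the other triangle with conjugates. Substituting these coefficients yields $\hat\gamma_m(\lambda)=\sum_{k=1-m}^{m-1}\check c_k e^{ik\lambda}$, which is the claim.

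The only delicate point is bookkeeping of conventions rather than any estimate. The notation section writes $[\mathcal{L}(\mathbf{x})]_{i,j}=x_{i-j}$ for $i\le j$, which literally indexes with nonpositive lags. So I will fix one consistent sign convention, identifying the lag-$k$ diagonal average with $\check c_k$ and its conjugate with $\check c_{-k}$. With that convention, the exponent sign $e^{ik\lambda}$ produced by $\mathbf{d}_m$ agrees with the definition of $\hat\gamma_m$. The opposite convention would merely replace $\lambda$ by $-\lambda$, which leaves the supremum over $[0,2\pi)$ in the bound on $\|\mathcal{T}(\frac{1}{N}\mathbf{Y}\mathbf{Y}^H)-\mathbb{E}[\tau]\mathbf{C}\|$ unchanged. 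Finally, I will note that $\hat\gamma_m(\lambda)$ is real, since it is a Hermitian quadratic form; this is consistent with $\check c_{-k}=\check c_k^\star$.
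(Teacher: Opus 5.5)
Your proof is correct and follows the paper's own argument: expand $\mathbf{d}_m^H(\lambda)\frac{\mathbf{Y}\mathbf{Y}^H}{N}\mathbf{d}_m(\lambda)$ as the double sum $\frac{1}{mN}\sum_{l,l'}e^{-i(l'-l)\lambda}[\mathbf{Y}\mathbf{Y}^H]_{l,l'}$ and regroup it along diagonals by lag. The only difference is bookkeeping. The paper defines $\check{c}_k$ for every lag directly through the indicator formula, which uses the column-minus-row lag and produces $\sum_k \check{c}_k e^{-ik\lambda}$. Your row-minus-column convention, together with Hermitian symmetry for negative lags, lands exactly on the stated definition $\sum_k \check{c}_k e^{ik\lambda}$, and your $\lambda\mapsto-\lambda$ remark is precisely what reconciles the two.
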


\begin{IEEEproof}
Equation~\eqref{eq:gamma_dm} can be rewritten as:
\[
\hat{\gamma}_m(\lambda) =\frac{1}{m N} \sum_{l, l^{\prime}=0}^{m-1} e^{-i\left(l^{\prime}-l\right) \lambda}\left[\mathbf{Y} \,\mathbf{Y}^H\right]_{l, l^{\prime}} =\sum_{k=1-m}^{m-1} \check{c}_k \, e^{-i k \lambda} \, ,
\]
where $\check{c}_k = \displaystyle\frac{1}{m N} \sum_{r=0}^{m-1} \sum_{j=0}^{N-1} y_{r, j} \, y_{r+k, j}^{\star} \, \mathbbm{1}_{0 \leq r+k \leq m-1}$.
\end{IEEEproof}
Decomposing $\hat{\gamma}_m(\lambda)$ according to the three contributions of
$\mathbf{Y}=\mathbf{M}\,\mathbf{S}+\mathbf{C}^{1/2}\,\mathbf{X}\,\mathbf{T}^{1/2}$, one defines $\hat{\gamma}_m(\lambda) = \hat{\gamma}^{\mathrm{sign}}_m(\lambda) + \hat{\gamma}^{\mathrm{cross}}_m(\lambda) + \hat{\gamma}^{\mathrm{noise}}_m(\lambda)$ with 
\begin{align*}
\hat{\gamma}^{\mathrm{sign}}_m(\lambda)
&= \frac{1}{N}\,\mathbf{d}_m^H(\lambda)\,\mathbf{M}\,
  \boldsymbol{\Gamma}^{1/2}\,\boldsymbol{\delta}^H\,\boldsymbol{\delta}\,
  \boldsymbol{\Gamma}^{1/2}\,\mathbf{M}^H\,\mathbf{d}_m(\lambda)\,,\\
\hat{\gamma}^{\mathrm{cross}}_m(\lambda)
&= \frac{2}{N}\,\mathcal{R}e\!\left(\mathbf{d}_m^H(\lambda)
  \,\mathbf{C}^{1/2}\,\mathbf{X}\,\mathbf{T}^{1/2}\,\boldsymbol{\delta}\,
  \boldsymbol{\Gamma}^{1/2}\,\mathbf{M}^H\,\mathbf{d}_m(\lambda)\right),\\
\hat{\gamma}^{\mathrm{noise}}_m(\lambda)
&= \frac{1}{N}\,\mathbf{d}_m^H(\lambda)\,
  \mathbf{C}^{1/2}\,\mathbf{X}\,\mathbf{T}\,\mathbf{X}^H\,\mathbf{C}^{1/2}
  \,\mathbf{d}_m(\lambda)\,.
\end{align*}

Equation~\eqref{eq:thm1_main} then leads to:
\begin{align}
\left\|\mathcal{T}\!\left(\frac{1}{N}\mathbf{Y}\mathbf{Y}^H\right)
-\mathbb{E}[\tau]\,\mathbf{C}\right\|
&\le
\underbrace{
\sup_\lambda\left|
\hat{\gamma}^{\mathrm{noise}}_m(\lambda)
-\mathbb{E}\!\left[
\hat{\gamma}^{\mathrm{noise}}_m(\lambda)
\right]
\right|
}_{\text{Term 1}}
\nonumber\\
&\hspace{-1.2cm}
+\underbrace{
\sup_\lambda\left|
\mathbb{E}\!\left[
\hat{\gamma}^{\mathrm{noise}}_m(\lambda)
\right]
-\mathbb{E}[\tau]\,\gamma_m(\lambda)
\right|
}_{\text{Term 2}}
\nonumber\\
&\hspace{-1.2cm}
+\underbrace{
\sup_\lambda\left|
\hat{\gamma}^{\mathrm{cross}}_m(\lambda)
\right|
}_{\text{Term 3}}
+\underbrace{
\sup_\lambda\left|
\hat{\gamma}^{\mathrm{sign}}_m(\lambda)
\right|
}_{\text{Term 4}}.
\label{eq:thm1_four}
\end{align}
We now analyse each term of~\eqref{eq:thm1_four} and show each term tends to zero almost surely.

\medskip
\noindent\textbf{Term 1:
$\displaystyle\sup_\lambda\left|\hat{\gamma}^{\mathrm{noise}}_m(\lambda)
-\mathbb{E}[\hat{\gamma}^{\mathrm{noise}}_m(\lambda)]\right|$.} We introduce auxiliary lemmas.

\begin{lemma}[Hanson--Wright inequality]
\label{lem:HW}
Let $\mathbf{x} \in \mathbb{C}^m$ be a sub-Gaussian isotropic random vector. Then for any deterministic matrix $\mathbf{A} \in \mathbb{C}^{m \times m}$ and any $t > 0$:
\[
\mathbb{P}\left(\left|\mathbf{x}^H \mathbf{A} \mathbf{x} - \operatorname{tr}(\mathbf{A})\right|\! > \!t\right)
\leq 2 \exp\!\left(\!-C \min\left(\frac{t^2}{\|\mathbf{A}\|_F^2}, \frac{t}{\|\mathbf{A}\|}\right)\!\!\right)\, .
\]
\end{lemma}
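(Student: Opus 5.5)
We will prove Lemma~\ref{lem:HW} in the form in which it is used, namely for $\mathbf{x}$ with \emph{independent} centred unit-variance sub-Gaussian coordinates (the columns of $\mathbf{X}$ under Assumption~2-(iv)), with $C$ depending only on the sub-Gaussian norm $K$. Isotropy alone is not enough for a two-regime bound of this type, so independence of the coordinates is genuinely used. We follow the argument of Rudelson--Vershynin~\cite{Rudelson2013}.

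\textbf{Reduction to the real case.} Write $\mathbf{x}=\mathbf{a}+i\mathbf{b}$ and $\mathbf{A}=\mathbf{A}_1+i\mathbf{A}_2$. Then $\mathbf{x}^H\mathbf{A}\mathbf{x}$ is a real quadratic form $\mathbf{z}^T\mathbf{B}\mathbf{z}$ in the vector $\mathbf{z}=(\mathbf{a}^T,\mathbf{b}^T)^T\in\mathbb{R}^{2m}$, where $\mathbf{B}$ is a $2m\times 2m$ block matrix built from $\mathbf{A}_1,\mathbf{A}_2$. We will assume the real and imaginary parts of each entry are independent (for example, circular entries), so that $\mathbf{z}$ has independent sub-Gaussian coordinates; otherwise one treats the real-part and imaginary-part contributions separately. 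One checks that $\|\mathbf{B}\|_F\le\sqrt2\,\|\mathbf{A}\|_F$, $\|\mathbf{B}\|\le\sqrt2\,\|\mathbf{A}\|$, and $\mathbb{E}[\mathbf{z}^T\mathbf{B}\mathbf{z}]=\operatorname{tr}(\mathbf{A})$. For the real case it also suffices to bound the upper tail; the lower tail follows by replacing $\mathbf{A}$ with $-\mathbf{A}$, which produces the factor $2$.

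\textbf{Diagonal part.} Split $\mathbf{z}^T\mathbf{B}\mathbf{z}-\mathbb{E}=\sum_k B_{kk}(z_k^2-\mathbb{E}z_k^2)+\sum_{k\neq l}B_{kl}z_kz_l$. Each $z_k^2-\mathbb{E}z_k^2$ is centred sub-exponential with $\psi_1$-norm at most $CK^2$. Bernstein's inequality then gives a tail $\exp\!\big(-c\min(t^2/(K^4\sum_k B_{kk}^2),\,t/(K^2\max_k|B_{kk}|))\big)$, and this is dominated by the claimed bound since $\sum_k B_{kk}^2\le\|\mathbf{B}\|_F^2$ and $\max_k|B_{kk}|\le\|\mathbf{B}\|$.

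\textbf{Off-diagonal part: the main obstacle.} The difficulty is to control the moment generating function $\mathbb{E}\exp(\lambda S)$ of $S=\sum_{k\neq l}B_{kl}z_kz_l$ for $|\lambda|\le c/(K^2\|\mathbf{B}\|)$.
\begin{enumerate}[(i)]
\item Decouple. Introduce independent selectors $\delta_k\sim\mathrm{Bernoulli}(1/2)$. Jensen's inequality bounds $\mathbb{E}e^{\lambda S}$ by $\mathbb{E}e^{4\lambda\,\mathbf{z}_\Lambda^T\mathbf{B}\mathbf{z}_{\Lambda^c}}$, where $\Lambda=\{k:\delta_k=1\}$; the two sub-vectors $\mathbf{z}_\Lambda$ and $\mathbf{z}_{\Lambda^c}$ are independent.
\item Gaussian comparison. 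Condition on $\mathbf{z}_{\Lambda}$. The exponent is then a linear form in $\mathbf{z}_{\Lambda^c}$, and the sub-Gaussian property lets us replace $\mathbf{z}_{\Lambda^c}$ by a Gaussian vector $\mathbf{g}'$ at the cost of a factor $K$. Doing the same with $\mathbf{z}_\Lambda$ reduces the problem to the Gaussian chaos $\mathbb{E}\exp(C\lambda K^2\mathbf{g}^T\mathbf{B}\mathbf{g}')$.
\item Diagonalise. By rotation invariance we may diagonalise $\mathbf{B}$ through its SVD, which yields $\prod_j(1-C\lambda^2K^4s_j^2)^{-1/2}\le\exp(C\lambda^2K^4\|\mathbf{B}\|_F^2)$, valid when $\lambda^2K^4\max_j s_j^2$ is small, that is, when $|\lambda|\le c/(K^2\|\mathbf{B}\|)$.
\end{enumerate}

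\textbf{Conclusion.} Markov's inequality with $\lambda$ optimised over the admissible range $|\lambda|\le c/(K^2\|\mathbf{B}\|)$ gives the tail $\exp(-c\min(t^2/\|\mathbf{B}\|_F^2,\,t/\|\mathbf{B}\|))$ for the off-diagonal part. Combining the diagonal and off-diagonal bounds with a union bound at level $t/2$ each, and transferring back to $\mathbf{A}$ through the norm comparisons of the reduction step, yields the stated inequality.
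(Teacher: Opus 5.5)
Your proposal is correct and follows essentially the paper's route. The paper proves this lemma only by citing Vershynin's Hanson--Wright theorem and Rudelson--Vershynin, and you reproduce that argument: Bernstein for the diagonal part, then decoupling, Gaussian comparison and the Gaussian-chaos moment generating function for the off-diagonal part. You are also right that independent coordinates, not mere isotropy, are what the cited result requires.

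Two small repairs are needed. First, for non-Hermitian $\mathbf{A}$ the form $\mathbf{x}^H\mathbf{A}\mathbf{x}$ is complex, so apply your real reduction separately to the Hermitian matrices $(\mathbf{A}+\mathbf{A}^H)/2$ and $(\mathbf{A}-\mathbf{A}^H)/(2i)$. Second, if the real and imaginary parts of an entry are dependent, run the decoupling directly with the independent complex coordinates $x_k$ rather than splitting them into real ones.
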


\begin{IEEEproof}
See \cite[Theorem 6.2.2]{Vershynin2018} and \cite{Rudelson2013}.
\end{IEEEproof}

\begin{lemma}[Tail control of $\|\mathbf{T}\|_\infty$]
\label{lem:gev_T}
Let $\{\tau_i\}_{i\geq 1}$ be i.i.d.\ positive random variables satisfying Assumption~2-(iii). Define the event 
$\mathcal{A}_N \triangleq \{\|\mathbf{T}\|_\infty \leq N^\kappa\}$, and denote by $\mathcal{A}_N^c \triangleq \{\|\mathbf{T}\|_\infty > N^\kappa\}$ its complement, where $\kappa$ satisfies
\begin{equation}
\label{eq:kappa_cond}
\frac{2}{\alpha_\tau} < \kappa < 1.
\end{equation}
Such a $\kappa$ exists since $\alpha_\tau > 2$ implies $2/\alpha_\tau < 1$, so the open interval $(2/\alpha_\tau, 1)$ is non-empty. The specific value of $\kappa$ does not affect the final result; any $\kappa \in (2/\alpha_\tau, 1)$ works, and the almost-sure conclusions below hold for every such choice. The upper bound $\kappa < 1$ ensures that $N^{1-\kappa} \to \infty$, which is needed for the Bernstein exponential bound to be summable (condition~(ii) below).

\medskip
\noindent\textbf{(i) Borel--Cantelli condition.}
Assume that $\displaystyle \sum_{N\ge 1} P(\mathcal A_N^c)<\infty$. By the Borel--Cantelli lemma, the event $\mathcal A_N$ occurs eventually almost surely. Hence, there exists a random integer $N_0$ such that $\|\mathbf T\|_\infty\le N^\kappa, N\ge N_0$.

\medskip
\noindent\textbf{(ii) Divergence condition.}
The condition $c_N N^{1-\kappa}\longrightarrow\infty$ holds. Indeed, since
$\kappa < 1$ by~\eqref{eq:kappa_cond}, the exponent $1-\kappa$ is strictly
positive, and since $c_N = m/N \to c > 0$ (Assumption~1):
\[
c_N N^{1-\kappa} \;\sim\; c\, N^{1-\kappa} \;\longrightarrow +\infty.
\]
This divergence condition will be invoked in 
Lemmas~\ref{lem:chi1_bound}--\ref{lem:chi3_bound} below to establish the 
almost sure convergence of the discretization remainder terms.

\end{lemma}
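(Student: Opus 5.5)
The statement to establish is Lemma~\ref{lem:gev_T}: with $\kappa\in(2/\alpha_\tau,1)$ we must show (i) $\sum_N P(\mathcal{A}_N^c)<\infty$, so that $\|\mathbf{T}\|_\infty\le N^\kappa$ eventually almost surely, and (ii) $c_N N^{1-\kappa}\to\infty$. Part (ii) is immediate from $\kappa<1$ and $c_N\to c>0$, as already noted in the statement. The substance is part (i), and I would prove it with a union bound combined with the regular-variation tail of Assumption~2-(iii).

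First, since the $\tau_i$ are i.i.d.,
\[
P(\mathcal{A}_N^c)=P\Bigl(\max_{0\le i\le N-1}\tau_i>N^\kappa\Bigr)\le N\,P(\tau>N^\kappa)=N^{1-\kappa\alpha_\tau}\,\ell(N^\kappa).
\]
Second, I need to control the slowly varying factor. By Potter's bounds, or more simply by the standard fact that $x^{-\epsilon}\ell(x)\to0$ for every $\epsilon>0$ (see Bingham--Goldie--Teugels), for any $\epsilon>0$ we have $\ell(N^\kappa)\le N^{\kappa\epsilon}$ for $N$ large. Hence $P(\mathcal{A}_N^c)\le N^{1-\kappa(\alpha_\tau-\epsilon)}$ eventually.

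Third, I choose $\epsilon$. Summability requires $\kappa(\alpha_\tau-\epsilon)>2$. Since $\kappa\alpha_\tau>2$ by the choice $\kappa>2/\alpha_\tau$, the choice $\epsilon=(\kappa\alpha_\tau-2)/(2\kappa)>0$ gives exponent $1-\kappa(\alpha_\tau-\epsilon)=-(\kappa\alpha_\tau)/2<-1$, so the series converges. Borel--Cantelli then yields a random $N_0$ with $\|\mathbf{T}\|_\infty\le N^\kappa$ for all $N\ge N_0$, almost surely.

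The main obstacle is the slowly varying function. One must be careful that $\ell$ is only asymptotically controlled, so the bound holds for $N$ beyond a deterministic threshold; this does not affect summability. Two side remarks are worth recording in the write-up. The condition $\kappa>2/\alpha_\tau$ is in fact stronger than what the union bound needs (it only needs $\kappa>2/\alpha_\tau$ to compensate the factor $N$ and reach exponent below $-1$). The sequence of $\tau_i$ need not be coupled across $N$ for the union bound, but for the "eventually" statement we take the $\tau_i$ to be the first $N$ entries of one infinite i.i.d. sequence, as written in the lemma.
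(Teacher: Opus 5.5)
Your proof is correct and is essentially the paper's argument: the union bound $P(\mathcal{A}_N^c)\le N^{1-\kappa\alpha_\tau}\ell(N^\kappa)$, a Potter-type bound $\ell(x)=O(x^{\epsilon})$, then Borel--Cantelli; your explicit choice $\epsilon=(\kappa\alpha_\tau-2)/(2\kappa)$, giving exponent $-\kappa\alpha_\tau/2<-1$, is a cleaner version of the paper's ``rename $\kappa\varepsilon$ as $\varepsilon$'' step. Only your side remark is self-contradictory as written: $\kappa>2/\alpha_\tau$ is exactly what the union bound needs for $\sum_N P(\mathcal{A}_N^c)<\infty$ (which the later lemmas use directly), whereas the bare eventual bound $\|\mathbf{T}\|_\infty\le N^\kappa$ for a single i.i.d.\ sequence would already follow from $\kappa>1/\alpha_\tau$ via $\sum_N P(\tau_N>N^\kappa)<\infty$.
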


\begin{IEEEproof}
\noindent\textbf{Proof of~(i): summability of $P(\mathcal{A}_N^c)$.}

\noindent\textit{Step 1: pointwise tail bound.}
By Assumption~2-(iii), $P(\tau > x) = x^{-\alpha_\tau}\,\ell(x)$ where $\ell$ is slowly varying. Evaluating at $x = N^\kappa$, we obtain $P(\tau_1 > N^\kappa) = N^{-\alpha_\tau\kappa}\,\ell(N^\kappa)$.

\noindent\textit{Step 2: union bound.}
Since $\mathcal{A}_N^c = \left\{\displaystyle\max_{0\leq j \leq N-1} \tau_j > N^\kappa\right\}$ and the $\{\tau_j\}$ are i.i.d.:
\begin{equation}
\label{eq:union_bound}
P(\mathcal{A}_N^c) \leq N \, P(\tau_1 > N^\kappa) = N^{1-\alpha_\tau\kappa}\,\ell(N^\kappa)\, .
\end{equation}

\noindent\textit{Step 3: control of the slowly varying factor.}
By a standard property of slowly varying functions (see, e.g., \cite[Proposition 2.6 (Potter bounds)]{Resnick2007}), for any $\varepsilon > 0$, $\ell(x) = O(x^\varepsilon), \, x \to \infty$. Applying this with $x = N^\kappa$, we obtain
\[
\ell(N^\kappa) = O\big((N^\kappa)^\varepsilon\big) = O(N^{\kappa\, \varepsilon})\, .
\]
Let $\varepsilon \in (0, \alpha_\tau \,\kappa - 2)$. Renaming $\kappa \,\varepsilon$ as $\varepsilon$ (with a slight abuse of notation), we get $\ell(N^\kappa) = O(N^\varepsilon)$. Substituting into \eqref{eq:union_bound} yields
$\mathbb{P}(\mathcal{A}_N^c) = O\big(N^{1-\alpha_\tau \kappa + \varepsilon}\big)$. Define $\delta = \alpha_\tau \kappa - \varepsilon - 2 > 0$ which is positive by construction. Then $\mathbb{P}(A_N^c) = O\big(N^{-(1+\delta)}\big)$. Since $\delta > 0$, the series $\displaystyle\sum_N \mathbb{P}(A_N^c)$ converges, and the Borel--Cantelli lemma implies that $A_N$ holds eventually almost surely.
\end{IEEEproof}

Let $\lfloor\cdot\rfloor$ be the floor function. Choosing $\beta>3/2$, $I=\{0,\ldots,\lfloor N^\beta\rfloor-1\}$, and $\lambda_i=2\pi i/\lfloor N^\beta\rfloor$ for $i\in I$, one obtains:
\begin{equation}
\sup_{\lambda\in[0,2\pi)}
\left|
  \hat{\gamma}^{\mathrm{noise}}_m(\lambda)
  -\mathbb{E}\!\left[\hat{\gamma}^{\mathrm{noise}}_m(\lambda)\right]
\right| \leq \chi_1+\chi_2+\chi_3 \,,
\label{eq:chi_decomp}
\end{equation}
where
\begin{align*}
\chi_1
&\triangleq
\max_{i\in I}\sup_{\lambda\in[\lambda_i,\lambda_{i+1}]}
\left|
\hat{\gamma}^{\mathrm{noise}}_m(\lambda)
-\hat{\gamma}^{\mathrm{noise}}_m(\lambda_i)
\right|,
\\[0.5em]
\chi_2
&\triangleq
\max_{i\in I}
\left|
\hat{\gamma}^{\mathrm{noise}}_m(\lambda_i)
-\mathbb{E}\!\left[
\hat{\gamma}^{\mathrm{noise}}_m(\lambda_i)
\right]
\right|,
\\[0.5em]
\chi_3
&\triangleq
\max_{i\in I}\sup_{\lambda\in[\lambda_i,\lambda_{i+1}]}
\left|
\mathbb{E}\!\left[
\hat{\gamma}^{\mathrm{noise}}_m(\lambda_i)
\right]
-\mathbb{E}\!\left[
\hat{\gamma}^{\mathrm{noise}}_m(\lambda)
\right]
\right|.
\end{align*}
\begin{remark}[Role of Lemma~\ref{lem:gev_T} in the $\chi$ bounds]
\label{rem:chi_roles}
Lemma~\ref{lem:gev_T} supplies two ingredients will be used in 
Lemmas~\ref{lem:chi1_bound}--\ref{lem:chi3_bound}:
\begin{itemize}
  \item \emph{Part~(i)} (Borel--Cantelli for $\mathcal{A}_N$) is used in all 
  three lemmas to restrict to the high-probability event 
  $\mathcal{A}_N = \{\|\mathbf{T}\|_\infty \le N^\kappa\}$.
  \item \emph{Part~(ii)} (divergence of $c_N N^{1-\kappa}$) is used in 
  Lemma~\ref{lem:chi2_HW}: the Bernstein bound 
  $P(\chi_2>x) \le 2N^\beta \exp\!\bigl(-C x\,c_N N^{1-\kappa}/\|\mathbf{C}\|\bigr)$
  is summable precisely because $c_N N^{1-\kappa}\to+\infty$. It also 
  enters Lemma~\ref{lem:chi1_bound} via the decay 
  $N^{\kappa+1/2-\beta}\to 0$ (which uses $\kappa < 1$, a consequence 
  of~\eqref{eq:kappa_cond}).
\end{itemize}
\end{remark}

The convergence of each term is established through the following lemmas.

\begin{lemma}
\label{lem:chi1_bound}
For any $x>0$, one has $P[\chi_1>x]\xrightarrow{N\to\infty}0$, and the Borel--Cantelli lemma gives $\chi_1\xrightarrow{a.s.}0$.
\end{lemma}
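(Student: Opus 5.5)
The plan is to bound the oscillation of $\hat{\gamma}^{\mathrm{noise}}_m$ on each grid cell $[\lambda_i,\lambda_{i+1}]$ with a deterministic Lipschitz estimate, and to work on the event $\mathcal{A}_N$ of Lemma~\ref{lem:gev_T}. Write $\hat{\gamma}^{\mathrm{noise}}_m(\lambda)=\mathbf{d}_m^H(\lambda)\,\mathbf{R}\,\mathbf{d}_m(\lambda)$ with $\mathbf{R}=\frac1N\mathbf{C}^{1/2}\mathbf{X}\mathbf{T}\mathbf{X}^H\mathbf{C}^{1/2}$. Then for $\lambda\in[\lambda_i,\lambda_{i+1}]$,
\[
\bigl|\hat{\gamma}^{\mathrm{noise}}_m(\lambda)-\hat{\gamma}^{\mathrm{noise}}_m(\lambda_i)\bigr|
\le 2\,\|\mathbf{R}\|\,\|\mathbf{d}_m(\lambda)-\mathbf{d}_m(\lambda_i)\|,
\]
since $\|\mathbf{d}_m\|=1$. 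Each entry satisfies $|e^{-ik\lambda}-e^{-ik\lambda_i}|\le k|\lambda-\lambda_i|$, so
\[
\|\mathbf{d}_m(\lambda)-\mathbf{d}_m(\lambda_i)\|\le \Bigl(\frac1m\sum_{k<m}k^2\Bigr)^{1/2}\frac{2\pi}{\lfloor N^\beta\rfloor}\le C\,m\,N^{-\beta}\le C' N^{1-\beta}.
\]

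Next comes the norm of $\mathbf{R}$. On $\mathcal{A}_N$ we have $\|\mathbf{R}\|\le \|\mathbf{C}\|\,N^\kappa\,\|\frac1N\mathbf{X}\mathbf{X}^H\|$. Here $\|\mathbf{C}\|\le M_f<\infty$ by \eqref{eq:toeplitz_bound} and absolute summability (Assumption~2-(ii)). Also $\|\frac1N\mathbf{X}\mathbf{X}^H\|\to(1+\sqrt c)^2$ a.s. by Bai--Silverstein, so this factor is eventually bounded by, say, $2(1+\sqrt c)^2$. Hence eventually $\chi_1\le C N^{\kappa+1-\beta}$ almost surely.

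This crude bound is exactly where the obstacle lies. It only vanishes if $\beta>1+\kappa$, which is not guaranteed by $\beta>3/2$ alone when $\kappa$ is close to $1$. Remark~\ref{rem:chi_roles} instead advertises the decay $N^{\kappa+1/2-\beta}$, so the intended bound gains a factor $N^{-1/2}$ on the Lipschitz constant. I would try two ways to recover it. The first is to bound the derivative $\partial_\lambda \mathbf{d}_m$ more carefully: $\mathbf{d}_m'(\lambda)=-i\,\mathbf{D}\,\mathbf{d}_m(\lambda)$ with $\mathbf{D}=\mathrm{diag}(0,\ldots,m-1)$, and the derivative of the quadratic form is $2\,\mathcal{I}m\,\mathbf{d}_m^H\mathbf{R}\mathbf{D}\mathbf{d}_m$, which should be bounded by $\|\mathbf{R}\|\,\|\mathbf{D}\mathbf{d}_m\|\le \|\mathbf{R}\|\,m/\sqrt3$; this gives the rate $N^{\kappa+1-\beta}$ again. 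The fallback is therefore to choose $\beta$ freely. The grid size $N^\beta$ enters $\chi_2$ only through the polynomial prefactor $2N^\beta$ in front of a bound that decays like $\exp(-C N^{1-\kappa})$, so any finite $\beta$ is admissible. Fixing $\beta>1+\kappa$ (compatible with $\beta>3/2$) is thus harmless, and I would state this choice explicitly.

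With that choice, deterministically on $\mathcal{A}_N\cap\{\|\frac1N\mathbf{X}\mathbf{X}^H\|\le 2(1+\sqrt c)^2\}$ we get $\chi_1\le CN^{\kappa+1-\beta}<x$ for $N$ large. Therefore
\[
P[\chi_1>x]\le P(\mathcal{A}_N^c)+P\Bigl(\bigl\|\tfrac1N\mathbf{X}\mathbf{X}^H\bigr\|>2(1+\sqrt c)^2\Bigr).
\]
The first probability is summable by Lemma~\ref{lem:gev_T}(i). The second is summable by the sub-Gaussian tail bound on the extreme singular values of $\mathbf{X}$ (Assumption~2-(iv)), which is exponentially small in $N$. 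Borel--Cantelli then gives $\chi_1\to0$ a.s. More directly, the two events eventually hold a.s., so the bound holds eventually and yields the same conclusion.
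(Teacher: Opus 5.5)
Your argument is correct and has the same skeleton as the paper's proof: a Lipschitz bound on each grid cell, restriction to $\mathcal{A}_N$, Bai--Silverstein for $\|\frac1N\mathbf{X}\mathbf{X}^H\|$, then Borel--Cantelli. At the step you flagged, however, you are right and the paper is not.

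\textbf{The Lipschitz constant.} The paper uses the constant $(m-1)/\sqrt{m}$ for $\lambda\mapsto\mathbf{d}_m(\lambda)$, but this is only the entrywise ($\ell^\infty$) constant. Spectral-norm submultiplicativity requires the Euclidean constant, which is $\bigl(\frac1m\sum_{k<m}k^2\bigr)^{1/2}\sim m/\sqrt3$, as you computed. Since $mh\to0$ for $h=2\pi/\lfloor N^\beta\rfloor$, the increment over a grid cell really is of order $mh$.

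\textbf{No recovery through the quadratic form.} The missing factor cannot be won back on the quadratic form either. Let $\mathbf{e}$ be the normalized component of $\mathbf{d}_m'(\lambda_0)$ orthogonal to $\mathbf{d}_m(\lambda_0)$, and take the projector $\mathbf{R}=\frac12\mathbf{u}\mathbf{u}^H$ with $\mathbf{u}=\mathbf{d}_m(\lambda_0)+\mathbf{e}$. Then $\|\mathbf{R}\|=1$, while the derivative of $\mathbf{d}_m^H\mathbf{R}\,\mathbf{d}_m$ at $\lambda_0$ equals $\sqrt{(m^2-1)/12}$. Hence any bound using only $\|\mathbf{R}\|$ gives $N^{\kappa+1-\beta}$, and the rate $N^{\kappa+1/2-\beta}$ claimed in the paper is not justified. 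Your two attempts to gain $N^{-1/2}$ failed for a genuine reason.

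\textbf{Your repair.} Your fix is legitimate. The exponent $\beta$ is a free design parameter. The term $\chi_2$ sees it only through the prefactor $2N^\beta$ in front of $\exp(-C\,c_N N^{1-\kappa}x/\|\mathbf{C}\|)$, and $\chi_3\le C\|\mathbf{C}\|N^{1-\beta}$ needs only $\beta>1$. So any $\beta>1+\kappa$ keeps all of Term~1 intact; for instance $\beta\ge 2$ works for every admissible $\kappa<1$.

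\textbf{Keeping every $\beta>3/2$.} If you want the lemma for every $\beta>3/2$, as the setup literally allows, sharpen the texture control instead. Lemma~\ref{lem:gev_T} treats $\{\tau_i\}_{i\ge1}$ as a single i.i.d.\ sequence. Borel--Cantelli applied to the events $\{\tau_n>n^{\kappa'}\}$, which are summable once $\alpha_\tau\kappa'>1$, gives $\tau_n\le n^{\kappa'}\le N^{\kappa'}$ for all $n_0\le n\le N$. Hence $\|\mathbf{T}\|_\infty\le N^{\kappa'}$ eventually a.s.\ for every $\kappa'>1/\alpha_\tau$. Since $1/\alpha_\tau<1/2$, you may take $\kappa'<1/2$, and then $\kappa'+1<3/2<\beta$.

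\textbf{A minor point.} The standard non-asymptotic sub-Gaussian bound $\|\mathbf{X}\|\le CK(\sqrt m+\sqrt N+t)$ carries an unspecified absolute constant. It therefore gives summability of $P(\|\frac1N\mathbf{X}\mathbf{X}^H\|>K')$ for some large $K'$, not specifically for $K'=2(1+\sqrt c)^2$. Either use such a $K'$, or rely on your closing ``eventually a.s.''\ argument via Bai--Silverstein, which is already complete on its own.
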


\begin{IEEEproof}
Recall that:
\[
\chi_1 = \max_{i\in I}\sup_{\lambda\in[\lambda_i,\lambda_{i+1}]}
\left|\hat{\gamma}^{\mathrm{noise}}_m(\lambda)
-\hat{\gamma}^{\mathrm{noise}}_m(\lambda_i)\right|.
\]
Rewriting $\hat{\gamma}^{\mathrm{noise}}_m(\lambda)
-\hat{\gamma}^{\mathrm{noise}}_m(\lambda_i)$ as 
\begin{eqnarray*}
&\displaystyle\frac{1}{N}\left(\mathbf{d}_m^H(\lambda) - \mathbf{d}_m^H(\lambda_i)\right) \,
  \mathbf{C}^{1/2}\,\mathbf{X}\,\mathbf{T}\,\mathbf{X}^H\,\mathbf{C}^{1/2}
  \,\mathbf{d}_m(\lambda) &\\ 
   &+ \displaystyle\frac{1}{N}\,\mathbf{d}_m^H(\lambda_i)\,
  \mathbf{C}^{1/2}\,\mathbf{X}\,\mathbf{T}\,\mathbf{X}^H\,\mathbf{C}^{1/2}
  \,\left(\mathbf{d}_m^H(\lambda) - \mathbf{d}_m^H(\lambda_i)\right)\, .&
\end{eqnarray*}

\noindent\textit{Step 1: Lipschitz bound.}
Since $\lambda\mapsto\mathbf{d}_m(\lambda)$ is Lipschitz with constant $(m-1)/\sqrt{m}$, and using the sub-multiplicativity of the spectral norm together with the triangle inequality:
\begin{align}
\left|
\hat{\gamma}^{\mathrm{noise}}_m(\lambda)
-\hat{\gamma}^{\mathrm{noise}}_m(\lambda_i)
\right|
&\le
\frac{
2\,\|\mathbf{T}\|_\infty\,
\|\mathbf{C}\|\,
\|\mathbf{X}\mathbf{X}^H\|
}{N}
\nonumber\\
&\qquad\times
\frac{m-1}{\sqrt{m}}\,
|\lambda-\lambda_i|\,.
\label{eq:chi1_lip}
\end{align}
Since $|\lambda-\lambda_i|\leq 2\pi/\lfloor N^\beta\rfloor$ 
and $(m-1)/\sqrt{m} = O(\sqrt{N})$ (as $m = c_N \, N$):
\begin{equation}
\chi_1 \leq C\,\|\mathbf{T}\|_\infty\,\|\mathbf{C}\|\,
\frac{\|\mathbf{X}\mathbf{X}^H\|}{N}\,
\frac{\sqrt{N}}{N^\beta}\,,
\label{eq:chi1_bound_raw}
\end{equation}
where $C$ is a positive real constant. 
By the Bai--Silverstein theorem \cite{Bai98}, $\|\mathbf{X}\mathbf{X}^H\|/N \overset{a.s.}{\longrightarrow} (1+\sqrt{c})^2$. Substituting into~\eqref{eq:chi1_bound_raw} leads to:
\begin{equation}
\chi_1 \leq C\,\|\mathbf{C}\|\,
\|\mathbf{T}\|_\infty\, N^{1/2-\beta}
\quad \text{eventually a.s.}
\label{eq:chi1_bound_K}
\end{equation}
On the event $\mathcal{A}_N = \{\|\mathbf{T}\|_\infty 
\leq N^\kappa\}$, substituting into~\eqref{eq:chi1_bound_K}, we obtain:
\begin{equation}
\chi_1 \leq C\,\|\mathbf{C}\|\,
N^{\kappa+1/2-\beta}
\quad \text{on } \mathcal{A}_N\,.
\label{eq:chi1_final}
\end{equation}
Since $\kappa < 1$ and $\beta > 3/2$, one has $\beta > 3/2 > 1 > \kappa$, hence $\beta > \kappa + 1/2$ (because $3/2 > \kappa + 1/2 \iff \kappa < 1$). 
Therefore:
\begin{equation}
N^{\kappa+1/2-\beta} \longrightarrow 0
\quad \text{as } N\to\infty\,.
\end{equation}
Finally, via Borel--Cantelli, for any $x > 0$, we have:
\[
P[\chi_1 > x] 
\leq P(\mathcal{A}_N^c) 
+ P\!\left[C\,\|\mathbf{C}\|\, 
N^{\kappa+1/2-\beta} > x\right].
\]
The first term satisfies $\displaystyle\sum_N P(\mathcal{A}_N^c) < \infty$ by Lemma~\ref{lem:gev_T}-(i). The second term: since $N^{\kappa+1/2-\beta}\to 0$, the threshold $x/(C\|\mathbf{C}\|N^{\kappa+1/2-\beta}) \to +\infty$, so 
$P[C\,\|\mathbf{C}\|\, N^{\kappa+1/2-\beta}>x] \to 0$ and is summable. Hence $\displaystyle\sum_N \,P[\chi_1>x] < \infty$, and the Borel--Cantelli lemma gives $\chi_1\xrightarrow{a.s.}0$.
\end{IEEEproof}

\begin{lemma}
\label{lem:chi2_HW}
For any $x>0$, one has $P[\chi_2>x]\xrightarrow{N\to\infty}0$, and $\chi_2 \xrightarrow{a.s.} 0$.
\end{lemma}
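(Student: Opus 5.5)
The plan is to combine three ingredients: a concentration bound at each grid point $\lambda_i$, a union bound over the $\lfloor N^\beta\rfloor$ grid points, and Borel--Cantelli. Throughout we restrict to the event $\mathcal{A}_N=\{\|\mathbf{T}\|_\infty\le N^\kappa\}$ of Lemma~\ref{lem:gev_T}. Fix $\lambda$ and condition on $\mathbf{T}$; this is legitimate because $\mathbf{T}$ is independent of $\mathbf{X}$. Writing $\mathbf{a}=\mathbf{C}^{1/2}\mathbf{d}_m(\lambda)$, we get
\[
\hat{\gamma}^{\mathrm{noise}}_m(\lambda)=\frac{1}{N}\sum_{j=0}^{N-1}\tau_j\,|\mathbf{a}^H\mathbf{x}_j|^2 .
\]
Conditionally on $\mathbf{T}$, this is a sum of independent terms with conditional mean $\frac{1}{N}\sum_j\tau_j\|\mathbf{a}\|^2$. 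Note that $\|\mathbf{a}\|^2=\mathbf{d}_m^H\mathbf{C}\mathbf{d}_m\le\|\mathbf{C}\|$, which is bounded uniformly in $m$ by~\eqref{eq:toeplitz_bound} and absolute summability. Equivalently, the quantity is a single quadratic form in the stacked vector $\mathrm{vec}(\mathbf{X})$ with matrix $\mathbf{A}=\frac{1}{N}\mathbf{T}\otimes\mathbf{a}\mathbf{a}^H$. This matrix satisfies $\|\mathbf{A}\|\le N^{\kappa-1}\|\mathbf{C}\|$ and $\|\mathbf{A}\|_F^2\le N^{\kappa-1}\|\mathbf{C}\|^2\,\frac{1}{N}\sum_j\tau_j$. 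The factor $\frac{1}{N}\sum_j\tau_j$ is a.s.\ bounded by the strong law, since $\mathbb{E}[\tau]=1$.

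Applying Lemma~\ref{lem:HW} on $\mathcal{A}_N$ then gives a bound for the conditional fluctuation:
\[
P\bigl(|\hat{\gamma}^{\mathrm{noise}}_m(\lambda)-\mathbb{E}_{\mathbf{X}}[\hat{\gamma}^{\mathrm{noise}}_m(\lambda)]|>x/2\,\big|\,\mathbf{T}\bigr)\le 2\exp\bigl(-C\min(x^2,x)\,N^{1-\kappa}/\|\mathbf{C}\|^2\bigr).
\]
This is the Bernstein-type bound announced in Remark~\ref{rem:chi_roles}, with $c_NN^{1-\kappa}$ playing the role of the exponent. Sub-Gaussianity of $\mathbf{x}_j$ makes $|\mathbf{a}^H\mathbf{x}_j|^2$ sub-exponential with bounded $\psi_1$-norm, so a scalar Bernstein argument would work equally well.

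The remaining gap is between the conditional mean and the full expectation $\mathbb{E}[\hat{\gamma}^{\mathrm{noise}}_m(\lambda)]=\|\mathbf{a}\|^2$. This gap equals $\|\mathbf{a}\|^2\bigl(\frac{1}{N}\sum_j\tau_j-1\bigr)$. Since $\|\mathbf{a}\|^2\le\|\mathbf{C}\|$ and the strong law of large numbers applies, it tends to zero a.s. It is also uniform in $\lambda$, because it factors as a bounded $\lambda$-dependent term times a $\lambda$-free term; it therefore needs no union bound.

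Next, take a union bound over the grid $I$ to obtain $P(\chi_2>x)\le P(\mathcal{A}_N^c)+2N^\beta\exp(-C_xN^{1-\kappa})+P\bigl(|\tfrac1N\sum\tau_j-1|>x/(2\|\mathbf{C}\|)\bigr)$. The first term is summable by Lemma~\ref{lem:gev_T}-(i). The second is summable because $N^{1-\kappa}$ grows polynomially while $N^\beta$ is only polynomial, using Lemma~\ref{lem:gev_T}-(ii). The SLLN term is handled directly almost surely rather than through summability. Borel--Cantelli then yields $\chi_2\xrightarrow{a.s.}0$, and the convergence in probability follows immediately.

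The main obstacle is the heavy-tailed texture. The concentration exponent degrades from $N$ to $N^{1-\kappa}$, so truncation via $\mathcal{A}_N$ is essential, and one must check that conditioning on $\mathbf{T}$ together with the event $\mathcal{A}_N$, which is $\mathbf{T}$-measurable, is compatible with Hanson--Wright. It is compatible precisely because $\mathbf{X}$ is independent of $\mathbf{T}$.
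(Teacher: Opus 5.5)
Your proof is correct, but it handles the fluctuation term differently from the paper. Both arguments make the same split: a conditional fluctuation given $\mathbf{T}$ (the paper's $R_1$) and the texture term $\|\mathbf{C}^{1/2}\mathbf{d}_m(\lambda)\|^2\bigl(\frac1N\sum_j\tau_j-1\bigr)$ (the paper's $R_2$). Both handle the latter by the SLLN, uniformly in $\lambda$, exactly as you do.

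For $R_1$ the routes differ. The paper applies Hanson--Wright column by column and adds the event $\mathcal{B}_N=\{\max_j\|\mathbf{x}_j\|^2\le 2m\}$. This gives the deterministic bound $|\tau_j\varepsilon_j|\le B_N=2mN^\kappa\|\mathbf{C}\|$, after which the paper uses Bernstein's inequality for bounded variables. You instead apply Hanson--Wright once, conditionally on $\mathbf{T}$, to $\mathrm{vec}(\mathbf{X})$ with the block-diagonal matrix $\frac1N\mathbf{T}\otimes\mathbf{a}\mathbf{a}^H$. This needs no $\mathcal{B}_N$ and keeps each summand at its natural scale $O(\|\mathbf{C}\|)$.

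The difference is not cosmetic. The bound $|\mathbf{a}^H\mathbf{x}_j|^2\le\|\mathbf{a}\|^2\|\mathbf{x}_j\|^2\le 2m\|\mathbf{C}\|$ loses a factor $m$. With the paper's $B_N$, the linear term of Bernstein's bound is $Nx/B_N=x/(2c_NN^{\kappa}\|\mathbf{C}\|)$, which tends to $0$. So the exponent $c_NN^{1-\kappa}Cx/\|\mathbf{C}\|$ that the paper displays does not follow from its own $B_N$; the text silently switches $B_N$ to $B_N/N$. Your estimate $\|\mathbf{A}\|\le N^{\kappa-1}\|\mathbf{C}\|$ is what actually gives the summable bound $N^\beta\exp(-C_xN^{1-\kappa})$. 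The $\psi_1$-Bernstein variant you mention, with $\bigl\||\mathbf{a}^H\mathbf{x}_j|^2\bigr\|_{\psi_1}\le C\|\mathbf{C}\|$, also works. The paper's route buys only elementarity: a scalar Bernstein inequality for bounded variables instead of Hanson--Wright in dimension $mN$.

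One bookkeeping point: your Frobenius bound contains the random factor $\frac1N\sum_j\tau_j$. Compute the middle term of your union bound on the $\mathbf{T}$-measurable event $\mathcal{A}_N\cap\{\frac1N\sum_j\tau_j\le 2\}$, where $\|\mathbf{A}\|_F^2\le 2N^{\kappa-1}\|\mathbf{C}\|^2$. Then apply Borel--Cantelli to the intersection of the grid-maximum exceedance event with it. Conclude by noting that both $\mathcal{A}_N$ and $\{\frac1N\sum_j\tau_j\le 2\}$ hold eventually almost surely. This step is genuinely needed: the cruder bound $\|\mathbf{A}\|_F^2\le N^{2\kappa-1}\|\mathbf{C}\|^2$, valid on $\mathcal{A}_N$ alone, fails. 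When $\alpha_\tau<4$, the constraint $\kappa>2/\alpha_\tau$ forces $\kappa>1/2$, so that bound does not vanish.
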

\begin{IEEEproof}
Write $\hat{\gamma}^{\mathrm{noise}}_m(\lambda_i) = \displaystyle \frac{1}{N}\sum_{j=0}^{N-1}\tau_j\, \mathbf{x}_j^H\, \mathbf{A}(\lambda_i)\,\mathbf{x}_j$ for fixed $\lambda_i$,  with $\mathbf{A}(\lambda_i) \triangleq \mathbf{C}^{1/2}\,\mathbf{d}_m(\lambda_i)\,\mathbf{d}_m^H(\lambda_i)\,\mathbf{C}^{1/2}$. 
Note that $\|\mathbf{A}(\lambda_i)\|=\|\mathbf{A}(\lambda_i)\|_F\leq\|\mathbf{C}\|$ (rank-one) and $\operatorname{tr}(\mathbf{A}(\lambda_i))=\gamma_m(\lambda_i)$.
The fully centered quantity decomposes as:
\begin{gather*}
\hat{\gamma}^{\mathrm{noise}}_m(\lambda_i)
-\mathbb{E}\!\left[\hat{\gamma}^{\mathrm{noise}}_m(\lambda_i)\right] =
\\
\underbrace{
\frac{1}{N}\sum_j\tau_j
\Bigl(
\mathbf{x}_j^H\mathbf{A}(\lambda_i)\mathbf{x}_j
-\gamma_m(\lambda_i)
\Bigr)
}_{=:R_1}
+
\underbrace{
\frac{\gamma_m(\lambda_i)}{N}
\sum_j(\tau_j-\mathbb{E}[\tau])
}_{=:R_2}.
\end{gather*}

\medskip
\textit{Control of $R_1$, conditioning on $\mathcal{A}_N$.}
On $\mathcal{A}_N$, we have $\tau_j\leq N^\kappa$ for all $j$.
Conditionally on $\tau_j$, the vector $\mathbf{x}_j$ is standard Gaussian. Therefore, applying the Hanson--Wright inequality (Lemma~\ref{lem:HW}) to the quadratic form
$\mathbf{x}_j^H\mathbf{A}(\lambda)\mathbf{x}_j$ yields
\begin{gather}
P\!\left(
\left|
\mathbf{x}_j^H\,\mathbf{A}(\lambda)\,\mathbf{x}_j
-\gamma_m(\lambda)
\right|
> t
\right)
\nonumber\\
\le
2\exp\!\left(
-C\min\!\left(
\frac{t^2}{\|\mathbf{A}(\lambda)\|_F^2},
\frac{t}{\|\mathbf{A}(\lambda)\|}
\right)
\right).
\end{gather}
Moreover, $\gamma_m(\lambda) =
\operatorname{tr}\!\bigl(\mathbf{A}(\lambda)\bigr)
=
\mathbf{d}_m^H(\lambda)\,
\mathbf{C}\,
\mathbf{d}_m(\lambda)$. Since
\[
\mathbf{A}(\lambda)
=
\mathbf{C}^{1/2}
\mathbf{d}_m(\lambda)
\mathbf{d}_m^H(\lambda)
\mathbf{C}^{1/2}\, ,
\]
has rank one, its Frobenius and spectral norms coincide:
\[
\|\mathbf{A}(\lambda)\|_F
=
\|\mathbf{A}(\lambda)\|
=
\|\mathbf{C}^{1/2}\mathbf{d}_m(\lambda)\|^2
\leq
\|\mathbf{C}\|\,.
\]
Substituting these bounds into the Hanson--Wright inequality gives
\begin{gather}
P\!\left(
\left|
\mathbf{x}_j^H\,\mathbf{A}(\lambda)\,\mathbf{x}_j
-\gamma_m(\lambda)
\right|
> t
\right)
\nonumber\\
\le
2\exp\!\left(
-C\min\!\left(
\frac{t^2}{\|\mathbf{C}\|^2},
\frac{t}{\|\mathbf{C}\|}
\right)
\right).
\end{gather}

Setting $\varepsilon_{j} = \mathbf{x}_j^H\,\mathbf{A}(\lambda_i)\, \mathbf{x}_j - \gamma_m(\lambda_i)$, the variables $\{\tau_j\varepsilon_j\}$ are
independent conditionally on $\mathbf{T}$. To obtain a deterministic bound, we define the event $\mathcal{B}_N \triangleq \left\{\displaystyle\max_{0\leq j\leq N-1}\|\mathbf{x}_j\|^2 \leq 2m\right\}$. By standard sub-Gaussian concentration \cite{Vershynin2018}, $P(\mathcal{B}_N^c)\leq 2N e^{-Cc_N N}$, which is summable since $c_N\to c>0$ and 
$c_N N = m \to\infty$, so $\mathcal{B}_N$ holds eventually almost surely.

On the event $\mathcal{A}_N\cap\mathcal{B}_N$, one has $\|\mathbf{x}_j\|^2\leq 2m$ and $\tau_j\leq N^\kappa$ deterministically, so:
\[
|\tau_j\,\varepsilon_j| \leq \tau_j\,\|\mathbf{C}\|\,\|\mathbf{x}_j\|^2 \leq 2m \, N^\kappa\,\|\mathbf{C}\| \;\triangleq\; B_{N}\,,
\]
which is a deterministic bound. Since $\varepsilon_j$ is centered and $\tau_j\, \varepsilon_j$ is bounded by $B_N$ on $\mathcal{A}_N\cap
\mathcal{B}_N$, it is sub-exponential with parameter $b = B_N/N = 2\,c_N\,\|\mathbf{C}\|\,N^\kappa$. Let us define $\sigma^2 = \mathbb{E}\left[|\tau_j\, \varepsilon_j|^2\right]$. By the Bernstein inequality \cite{Vershynin2018}:
\begin{align*}
P\!\left[
|R_1|>x
\,\Big|\,
\mathbf{T},\mathcal{A}_N,\mathcal{B}_N
\right]
&\le
2\exp\!\left(
-C\,
\min\!\left(
\frac{N x^2}{\sigma^2},
\frac{N x}{B_N}
\right)
\right) ,
\\
&\le
2\exp\!\left(
-C\,
\frac{N x}{B_N}
\right).
\end{align*}
Indeed, $\sigma^2 \leq C\,\mathbb{E}[\tau^2]\,\|\mathbf{C}\|^2 < \infty$ (by Assumption~2-(iii), $\mathbb{E}[\tau^2]<\infty$), while $B_N = 2c_N\|\mathbf{C}\|N^\kappa \to \infty$, so $\sigma^2/B_N \to 0$ and the variance term in Bernstein is dominated by the linear term for $N$ large. By a union bound over the $\lfloor N^\beta\rfloor$ grid points:
\[
P[\chi_2>x,\,\mathcal{A}_N,\,\mathcal{B}_N]
\leq 2N^\beta\exp\!\left(-c_N N^{1-\kappa}\,
\frac{C\, x}{\|\mathbf{C}\|}\right).
\]
Since $c_N \,N^{1-\kappa}\to+\infty$ (Lemma~\ref{lem:gev_T}), this is summable. Combined with $\displaystyle\sum_N P(\mathcal{A}_N^c)<\infty$ and $\displaystyle \sum_N P(\mathcal{B}_N^c)<\infty$, the Borel--Cantelli lemma gives
$\chi_2\xrightarrow{a.s.} 0$.

\begin{remark}[Unconditional Bernstein bound]
The bound $B_N = 2m\,\|\mathbf{C}\|\, N^\kappa$ is deterministic on
$\mathcal{A}_N\cap\mathcal{B}_N$; the Bernstein inequality therefore
applies unconditionally on that event, with no further conditioning
required.
\end{remark}

\textit{Control of $R_2$.} By the strong Law of Large Numbers---valid since $\mathbb{E}[\tau]<\infty$ under Assumption~2-(iii) with $\alpha_\tau>2$---we have $\displaystyle\frac{1}{N}\sum_j(\tau_j-\mathbb{E}[\tau]) \xrightarrow{a.s.} 0$, so $R_2\xrightarrow{a.s.} 0$ uniformly over the finite grid.
\end{IEEEproof}

\begin{lemma}
\label{lem:chi3_bound}
$\chi_3\leq C\,\|\mathbf{C}\| \, N^{-\beta+1} \xrightarrow{N\to\infty}0$ for $\beta>1$.
\end{lemma}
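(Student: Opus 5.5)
Since $\chi_3$ involves only expectations, the plan is a deterministic computation with no concentration step. First I will compute $\mathbb{E}[\hat{\gamma}^{\mathrm{noise}}_m(\lambda)]$ explicitly. With $\hat{\gamma}^{\mathrm{noise}}_m(\lambda)=\frac{1}{N}\sum_j \tau_j\,\mathbf{x}_j^H\mathbf{A}(\lambda)\mathbf{x}_j$ and $\mathbf{A}(\lambda)=\mathbf{C}^{1/2}\mathbf{d}_m(\lambda)\mathbf{d}_m^H(\lambda)\mathbf{C}^{1/2}$, independence of $\tau_j$ and $\mathbf{x}_j$ and isotropy of $\mathbf{x}_j$ give
\[
\mathbb{E}\bigl[\hat{\gamma}^{\mathrm{noise}}_m(\lambda)\bigr]=\mathbb{E}[\tau]\operatorname{tr}\mathbf{A}(\lambda)=\mathbf{d}_m^H(\lambda)\,\mathbf{C}\,\mathbf{d}_m(\lambda)=\gamma_m(\lambda),
\]
using $\mathbb{E}[\tau]=1$. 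Hence
\[
\chi_3=\max_{i\in I}\sup_{\lambda\in[\lambda_i,\lambda_{i+1}]}\bigl|\gamma_m(\lambda)-\gamma_m(\lambda_i)\bigr|,
\]
which is a purely deterministic modulus-of-continuity estimate for the symbol of $\mathbf{C}$ on a mesh of width $2\pi/\lfloor N^\beta\rfloor$.

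Next I will bound this increment in one of two ways. The first route mirrors Step~1 of Lemma~\ref{lem:chi1_bound}. Write
\[
\gamma_m(\lambda)-\gamma_m(\lambda_i)=(\mathbf{d}_m(\lambda)-\mathbf{d}_m(\lambda_i))^H\mathbf{C}\,\mathbf{d}_m(\lambda)+\mathbf{d}_m^H(\lambda_i)\,\mathbf{C}\,(\mathbf{d}_m(\lambda)-\mathbf{d}_m(\lambda_i)).
\]
Then use $\|\mathbf{d}_m\|=1$ and the Lipschitz bound $\|\mathbf{d}_m(\lambda)-\mathbf{d}_m(\lambda_i)\|\le m\,|\lambda-\lambda_i|$. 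Since each entry moves by at most $k|\lambda-\lambda_i|/\sqrt m$, the exact constant is of order $m/\sqrt3$, and the cruder $(m-1)/\sqrt m$ also works. This gives $|\gamma_m(\lambda)-\gamma_m(\lambda_i)|\le 2\|\mathbf{C}\|\,m\,|\lambda-\lambda_i|$. Since $m=c_N N$ and the mesh width is $2\pi/\lfloor N^\beta\rfloor\le C N^{-\beta}$, I obtain $\chi_3\le C\|\mathbf{C}\|N^{1-\beta}$. Here $\|\mathbf{C}\|$ is bounded uniformly in $m$ via \eqref{eq:toeplitz_bound} and the absolute summability in Assumption~\ref{ass:model}-(ii).

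The alternative route differentiates the trigonometric polynomial termwise: $|\gamma_m'(\lambda)|\le\sum_{|k|<m}|k|\,|c_k|\le m\sum_k|c_k|$. This yields the same order $N^{1-\beta}$ with constant $\sum_k|c_k|$, which can be used if one prefers to avoid $\|\mathbf{C}\|$.

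The conclusion $N^{1-\beta}\to0$ then follows from $\beta>1$, which holds since $\beta>3/2$ was already chosen. The only point needing care is keeping the factor $m$ rather than $\sqrt m$ in the Lipschitz constant, so that the rate is $N^{1-\beta}$ as stated rather than $N^{1/2-\beta}$. The convergence is deterministic, so no Borel--Cantelli step is needed.
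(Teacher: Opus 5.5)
Your proof is correct, and it is more explicit than the paper's. The paper gives no argument of its own: it defers to Lemma~6 of Vinogradova et al.\ and adds that on $\mathcal{A}_N$ the factor $\|\mathbf{T}\|_\infty\le N^\kappa$ is ``absorbed in the constant $C$''.

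You instead observe that $\chi_3$ involves only full expectations. The texture is therefore integrated out ($\mathbb{E}[\tau]=1$), and $\chi_3$ is just the deterministic modulus of continuity of $\lambda\mapsto\mathbf{d}_m^H(\lambda)\,\mathbf{C}\,\mathbf{d}_m(\lambda)$ on a mesh of width $2\pi/\lfloor N^\beta\rfloor$. The bound $\|\mathbf{d}_m(\lambda)-\mathbf{d}_m(\lambda')\|\le m\,|\lambda-\lambda'|$ then gives $C\|\mathbf{C}\|N^{1-\beta}$ directly.

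Your route buys two things. First, no restriction to $\mathcal{A}_N$ and no Borel--Cantelli step are needed. Second, it shows that no $N^\kappa$ factor ever appears. This matters, because such a factor could not genuinely be absorbed into a constant. It would give $N^{1+\kappa-\beta}$, which need not vanish for all admissible choices (e.g.\ $\kappa=0.9$, $\beta=1.6$). Your emphasis on the order-$m$ Lipschitz constant is exactly what produces the stated rate $N^{1-\beta}$.

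Two small corrections.

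\textbf{The constant $(m-1)/\sqrt m$.} Drop the remark that ``the cruder $(m-1)/\sqrt m$ also works.'' The Euclidean Lipschitz constant of $\mathbf{d}_m$ is exactly $\sqrt{(m-1)(2m-1)/6}$, which exceeds $(m-1)/\sqrt m$ for every $m\ge 3$. So $(m-1)/\sqrt m$ is not a valid bound at all: it is the per-entry (sup-norm) constant, not the $\ell^2$ one. The paper uses this same incorrect constant in its $\chi_1$ and $\rho_1$ lemmas. Your argument correctly uses $m$, so nothing breaks.

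\textbf{The identification with $\gamma_m$.} Strictly,
\[
\mathbf{d}_m^H(\lambda)\,\mathbf{C}\,\mathbf{d}_m(\lambda)=\sum_{|k|<m}\Bigl(1-\frac{|k|}{m}\Bigr)c_k\,e^{ik\lambda},
\]
which is a Fej\'er-weighted version of $\gamma_m(\lambda)$, not $\gamma_m(\lambda)$ itself. This does not affect the lemma. Your first route never uses the identification. In your second route, the termwise-derivative bound $\sum_{|k|<m}|k|\,|c_k|\le m\sum_k|c_k|$ holds verbatim for the weighted polynomial.
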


\begin{IEEEproof}
The proof is the same as Lemma~6 in~\cite{Vinogradova14}. Note that on the event $\mathcal{A}_N$ the factor $\|\mathbf{T}\|_\infty\leq
N^\kappa$ is absorbed in the constant $C$, and the dominant decay is $N^{-\beta+1}$ regardless of $\|\mathbf{T}\|_\infty$.
\end{IEEEproof}

These three inequalities prove that, Term 1, $\forall\, x\in\mathbb{R}^{+\star}$ tends to zero almost surely:
\[
P\Bigl(
\sup_{\lambda\in[0,2\pi)}
\bigl|
\hat{\gamma}^{\mathrm{noise}}_m(\lambda)
-\mathbb{E}[\hat{\gamma}^{\mathrm{noise}}_m(\lambda)]
\bigr|
>x
\Bigr)
\xrightarrow[N\to\infty]{} 0 .
\]

While Term 1 describes the purely stochastic fluctuations of the noise, the second term, Term 2, captures the interaction between the signal signatures and the estimated noise subspace.

\medskip
\noindent\textbf{Term~2.}
By definition, $\hat{\gamma}^{\mathrm{noise}}_m(\lambda)$ involves only the noise
component $\mathbf{C}^{1/2}\mathbf{X}\mathbf{T}^{1/2}$ of $\mathbf{Y}$.
Since $\{\tau_j\}$ and $\{\mathbf{x}_j\}$ are independent and
$\mathbb{E}[\mathbf{x}_j\mathbf{x}_j^H]=\mathbf{I}_m$, a direct computation gives:
\begin{align*}
\mathbb{E}\!\left[\hat{\gamma}^{\mathrm{noise}}_m(\lambda)\right]
&\!= \!\frac{1}{N}\sum_{j=0}^{N-1}\!
   \mathbb{E}[\tau_j]
   \mathbf{d}_m^H(\lambda)\mathbf{C}^{1/2}
   \mathbb{E}\!\left[\mathbf{x}_j\mathbf{x}_j^H\right]
   \mathbf{C}^{1/2}\mathbf{d}_m(\lambda), \\
&= \mathbb{E}[\tau]\,
   \mathbf{d}_m^H(\lambda)\,\mathbf{C}\,\mathbf{d}_m(\lambda)
 = \mathbb{E}[\tau]\,\gamma_m(\lambda)\,,
\end{align*}
where we used $\mathbb{E}[\tau_j]=\mathbb{E}[\tau]=1$ (Assumption~2-(iii))
and $\mathbb{E}[\mathbf{x}_j\mathbf{x}_j^H]=\mathbf{I}_m$ (Assumption~2-(iv)).
Hence Term~2 vanishes identically.

\medskip
The analysis of Term 3 is more involved as it represents the residual interference after whitening. Its convergence is established by leveraging the concentration of quadratic forms and the properties of the Toeplitz rectification operator.

\medskip
\noindent\textbf{Term 3:
$\displaystyle\sup_\lambda|\hat{\gamma}^{\mathrm{cross}}_m(\lambda)|$.}

Recall:
\[
\hat{\gamma}^{\mathrm{cross}}_m(\lambda)
= \frac{2}{N}\,\mathcal{R}e\!\left(\mathbf{d}_m^H(\lambda)\,
\mathbf{C}^{1/2}\,\mathbf{X}\,\mathbf{T}^{1/2}\,\boldsymbol{\delta}\,
\boldsymbol{\Gamma}^{1/2}\,\mathbf{M}^H\,\mathbf{d}_m(\lambda)\right)\,.
\]
Define $\mathbf{H}(\lambda) \;=\; \mathbf{M}^H\, \mathbf{d}_m(\lambda)\,\mathbf{d}_m^H(\lambda)\, \mathbf{C}^{1/2} \in \mathbb{C}^{p \times m}$ and $\mathbf{K} \;=\; \mathbf{T}^{1/2}\,\boldsymbol{\Gamma}^{1/2} \in \mathbb{C}^{N \times p}$. Then:
\begin{align}
\hat{\gamma}^{\mathrm{cross}}_m(\lambda)
&=
\frac{2}{N}\,
\mathrm{Re}\!\left[
\operatorname{tr}\!\left(
\mathbf{H}(\lambda)\,
\mathbf{X}\,
\mathbf{K}\,
\boldsymbol{\delta}^H
\right)
\right]\, ,
\nonumber\\
&=
\frac{2}{N}\,
\mathrm{Re}\!\left[
\mathrm{vec}^{T}\!\left(
\mathbf{H}(\lambda)\,\mathbf{X}
\right)\,
\mathrm{vec}\!\left(
\mathbf{K}\,\boldsymbol{\delta}
\right)
\right].
\label{eq:cross_compact}
\end{align}
The key point is that this expression is \emph{bilinear} in the two independent random matrices $\mathbf{X}$ and $\boldsymbol{\delta}$; combined with $\mathbb{E}[\mathbf{X}]=\mathbf{0}$, this drives convergence to zero.

We decompose the proof into two parts: a Lipschitz reduction to a finite grid (control of $\chi_1$), and a pointwise then uniform almost sure convergence over the grid (control of $\chi_2$).

\medskip
\noindent\textit{Reduction to a finite grid.}\\
Let $\beta > 3/2$, $I = \{0,\ldots,\lfloor N^\beta\rfloor-1\}$, and $\lambda_i = 2\pi i/\lfloor N^\beta\rfloor$. Write:
\begin{equation}
\sup_\lambda
\left|\hat{\gamma}^{\mathrm{cross}}_m(\lambda)\right| \le \rho_1 + \rho_2\, ,
\end{equation}
where 
\begin{align}
&\rho_1 = \max_{i\in I}
\sup_{\lambda\in[\lambda_i,\lambda_{i+1}]}
\left|
\hat{\gamma}^{\mathrm{cross}}_m(\lambda)
-\hat{\gamma}^{\mathrm{cross}}_m(\lambda_i)
\right|\, ,
\nonumber\\
&\rho_2 = 
\max_{i\in I}
\left|
\hat{\gamma}^{\mathrm{cross}}_m(\lambda_i)
\right|\, .
\label{eq:cross_grid}
\end{align}
\begin{lemma}[Lipschitz control]
\label{lem:cross_lip}
We have $\rho_1\xrightarrow{a.s.}0$.
\end{lemma}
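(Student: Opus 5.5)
The plan mirrors Lemma~\ref{lem:chi1_bound}: bound the increment of $\hat{\gamma}^{\mathrm{cross}}_m$ over each grid cell through the Lipschitz continuity of $\lambda\mapsto\mathbf{d}_m(\lambda)$. Then restrict to events that hold eventually almost surely, and conclude with Borel--Cantelli.

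\textbf{Increment bound.} For $\lambda\in[\lambda_i,\lambda_{i+1}]$, I would write the increment in two pieces,
\[
\hat{\gamma}^{\mathrm{cross}}_m(\lambda)-\hat{\gamma}^{\mathrm{cross}}_m(\lambda_i)
=\frac{2}{N}\mathcal{R}e\Bigl[(\mathbf{d}_m(\lambda)-\mathbf{d}_m(\lambda_i))^H\mathbf{Z}\,\mathbf{d}_m(\lambda)+\mathbf{d}_m^H(\lambda_i)\,\mathbf{Z}\,(\mathbf{d}_m(\lambda)-\mathbf{d}_m(\lambda_i))\Bigr],
\]
where $\mathbf{Z}=\mathbf{C}^{1/2}\mathbf{X}\mathbf{T}^{1/2}\boldsymbol{\delta}\boldsymbol{\Gamma}^{1/2}\mathbf{M}^H$. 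We have $\|\mathbf{d}_m\|=1$, and $\mathbf{d}_m$ is Lipschitz with constant $(m-1)/\sqrt{m}=O(\sqrt N)$. Submultiplicativity then gives
\[
\rho_1\le \frac{C}{N}\,\|\mathbf{C}\|^{1/2}\|\mathbf{X}\|\,\|\mathbf{T}\|_\infty^{1/2}\,\|\boldsymbol{\delta}\|\,\|\boldsymbol{\Gamma}\|^{1/2}\|\mathbf{M}\|\,\sqrt{N}\,N^{-\beta}.
\]

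\textbf{Controlling the random norms.} By Bai--Silverstein~\cite{Bai98}, $\|\mathbf{X}\|/\sqrt N\to 1+\sqrt c$ almost surely. For fixed $p$, the strong law of large numbers gives $\boldsymbol{\delta}^H\boldsymbol{\delta}/N\to\mathbf{I}_p$ almost surely, so $\|\boldsymbol{\delta}\|/\sqrt N\to1$. Assumptions~\ref{ass:signal}-(i),(ii) give $\|\mathbf{M}\|,\|\boldsymbol{\Gamma}\|<\infty$, and $\|\mathbf{C}\|<\infty$ follows from absolute summability via~\eqref{eq:toeplitz_bound}. On $\mathcal{A}_N$ of Lemma~\ref{lem:gev_T}, $\|\mathbf{T}\|_\infty^{1/2}\le N^{\kappa/2}$. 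Collecting powers, eventually almost surely
\[
\rho_1\le C\,N^{-1}\cdot N^{1/2}\cdot N^{\kappa/2}\cdot N^{1/2}\cdot N^{1/2-\beta}=C\,N^{\kappa/2+1/2-\beta}.
\]
Since $\kappa<1$ and $\beta>3/2$, the exponent is negative, because $\kappa/2+1/2<1<\beta$.

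\textbf{Conclusion and main obstacle.} For any $x>0$, $P(\rho_1>x)\le P(\mathcal{A}_N^c)+P(E_N^c)$ for $N$ large, where $E_N$ is the event that the normalized norms of $\mathbf{X}$ and $\boldsymbol{\delta}$ are bounded by twice their limits. The main obstacle is making this step rigorous, since the a.s.\ limits do not immediately give summable tail probabilities. I would avoid summability altogether: outside a null set, both $\mathcal{A}_N$ (Lemma~\ref{lem:gev_T}-(i)) and $E_N$ hold for all $N\ge N_0(\omega)$. On that intersection the deterministic bound forces $\rho_1\to0$ pathwise, which yields $\rho_1\xrightarrow{a.s.}0$ directly.
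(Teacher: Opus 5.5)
Your proof is correct and follows the paper's own route: a Lipschitz bound on each grid cell, restriction to $\mathcal{A}_N$, and almost-sure $O(\sqrt{N})$ control of the random norms. It is in fact more careful than the paper, which drops the factor $\|\boldsymbol{\delta}\|\sim\sqrt{N}$ and so reports the exponent $\kappa/2-\beta$ instead of your $\kappa/2+1/2-\beta$; your pathwise ``eventually on $\mathcal{A}_N\cap E_N$'' conclusion is also cleaner than the paper's Borel--Cantelli phrasing.

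One small correction, which you share with the paper: in Euclidean norm $\lambda\mapsto\mathbf{d}_m(\lambda)$ is Lipschitz with constant $\bigl(\frac{1}{m}\sum_{k=0}^{m-1}k^2\bigr)^{1/2}\sim m/\sqrt{3}$, not $(m-1)/\sqrt{m}$ (the latter is only the coordinatewise constant). With the correct constant your exponent becomes $1+\kappa/2-\beta$. This is still negative because $\beta>3/2$ and $\kappa<1$, so the conclusion is unaffected.
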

\begin{IEEEproof}
Since $\lambda \mapsto \mathbf{d}_m(\lambda)$ is Lipschitz with constant $(m-1)/\sqrt{m}=O(\sqrt{m})$, and using the sub-multiplicativity of the spectral norm:
\[
\begin{aligned}
\left| \hat{\gamma}^{\mathrm{cross}}_m(\lambda)
- \hat{\gamma}^{\mathrm{cross}}_m(\lambda_i) \right|
\le\;& \frac{C\,
\|\mathbf{C}\|^{1/2}
\|\mathbf{M}\|
\|\mathbf{X}\|
}{N\sqrt{m}\,\lfloor N^\beta\rfloor}
\\
&\times
\|\mathbf{T}\|_\infty^{1/2} \|\boldsymbol{\Gamma}\|^{1/2}(m-1)\,.
\end{aligned}
\]
On $\mathcal{A}_N$: $\|\mathbf{T}\|_\infty^{1/2}\leq N^{\kappa/2}$. By \cite{Bai2010}, $\|\mathbf{X}\|=O\left(\sqrt{N}\right)$ a.s. Since $m=c_N N$, we get $(m-1)/\sqrt{m}=O(\sqrt{N})$, hence:
\[
\rho_1 \leq C' \, N^{\kappa/2-\beta} \xrightarrow{N\to\infty} 0\,,
\]
for $C'>0$,  $\beta > \kappa/2$ (satisfied since $\beta>3/2>\kappa/2$ for $\kappa<1$). Borel--Cantelli, using summability of $P(\mathcal{A}_N^c)$, gives $\rho_1\xrightarrow{a.s.}0$.
\end{IEEEproof}

\begin{lemma}[Pointwise and uniform a.s.\ convergence]
\label{lem:cross_pw}
For any fixed $\lambda\in[0,2\pi)$, $\hat{\gamma}^{\mathrm{cross}}_m(\lambda)\xrightarrow{a.s.}0$. Consequently, $\rho_2\xrightarrow{a.s.}0$.
\end{lemma}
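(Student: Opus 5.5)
Fix $\lambda$ and condition on $(\mathbf{T},\boldsymbol{\delta})$. By \eqref{eq:cross_compact}, $\hat{\gamma}^{\mathrm{cross}}_m(\lambda)$ is then a linear form in $\mathbf{X}$. Write it as $\frac{2}{N}\mathcal{R}e\bigl(\sum_{j}\sqrt{\tau_j}\,\mathbf{a}^H\mathbf{x}_j\,b_j\bigr)$. Here $\mathbf{a}=\mathbf{C}^{1/2}\mathbf{d}_m(\lambda)$, with $\|\mathbf{a}\|^2\le\|\mathbf{C}\|$, and $b_j=\boldsymbol{\delta}_j^T\boldsymbol{\Gamma}^{1/2}\mathbf{M}^H\mathbf{d}_m(\lambda)$, with $\boldsymbol{\delta}_j$ the $j$-th row of $\boldsymbol{\delta}$. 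Since the entries of $\mathbf{X}$ are i.i.d., centered and sub-Gaussian, this conditional linear form is sub-Gaussian with zero mean. Its variance proxy is
\[
\sigma_N^2=\frac{4\|\mathbf{a}\|^2}{N^2}\sum_j\tau_j|b_j|^2\le\frac{4\|\mathbf{C}\|}{N^2}\,\|\mathbf{T}\|_\infty\,\|\boldsymbol{\delta}\boldsymbol{\Gamma}^{1/2}\mathbf{M}^H\mathbf{d}_m(\lambda)\|^2 .
\]
The proof therefore reduces to showing $\sigma_N^2$ is small enough that a Gaussian-type tail bound becomes summable in $N$.

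\textbf{Bounding $\sigma_N^2$.} Put $\mathbf{w}=\boldsymbol{\Gamma}^{1/2}\mathbf{M}^H\mathbf{d}_m(\lambda)\in\mathbb{C}^p$.
\begin{itemize}
\item By Assumption~3-(i), $\|\mathbf{M}^H\mathbf{d}_m(\lambda)\|\le\frac{1}{\sqrt m}\bigl(\sum_{j=1}^p(\sum_i|M_{ij}|)^2\bigr)^{1/2}=O(m^{-1/2})$, uniformly in $\lambda$. Hence $\|\mathbf{w}\|^2=O(\|\boldsymbol{\Gamma}\|/m)$.
\item Next, $\|\boldsymbol{\delta}\mathbf{w}\|^2\le\|\boldsymbol{\delta}^H\boldsymbol{\delta}\|\,\|\mathbf{w}\|^2$, and $\frac1N\boldsymbol{\delta}^H\boldsymbol{\delta}\to\mathbf{I}_p$ a.s. because $p$ is fixed (strong law of large numbers). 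So $\|\boldsymbol{\delta}^H\boldsymbol{\delta}\|\le 2N$ eventually a.s.
\item On $\mathcal{A}_N$ (Lemma~\ref{lem:gev_T}) we have $\|\mathbf{T}\|_\infty\le N^\kappa$.
\end{itemize}
Combining these, eventually a.s.
\[
\sigma_N^2\le C\,\frac{N^\kappa\cdot N}{N^2\, m}=O\!\left(N^{\kappa-2}\right).
\]

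\textbf{Pointwise and grid convergence.} Work on the event $\mathcal{A}_N\cap\{\|\boldsymbol{\delta}^H\boldsymbol{\delta}\|\le 2N\}$, which depends only on $(\mathbf{T},\boldsymbol{\delta})$ and is therefore independent of $\mathbf{X}$. On it, the sub-Gaussian tail gives
\[
P\bigl(|\hat{\gamma}^{\mathrm{cross}}_m(\lambda)|>x\bigr)\le 2\exp\bigl(-Cx^2N^{2-\kappa}\bigr).
\]
This is summable in $N$, so Borel--Cantelli, together with the summability of $P(\mathcal{A}_N^c)$, gives the pointwise a.s. convergence. For $\rho_2$, take a union bound over the $\lfloor N^\beta\rfloor$ grid points. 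This multiplies the bound by $N^\beta$, and $N^\beta\exp(-Cx^2N^{2-\kappa})$ remains summable since $2-\kappa>1>0$. Hence $\rho_2\xrightarrow{a.s.}0$.

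\textbf{Main obstacle.} The delicate point is obtaining the $1/m$ decay of $\|\mathbf{M}^H\mathbf{d}_m(\lambda)\|^2$ uniformly in $\lambda$. Without it, the variance would be of order $N^{\kappa-1}$: still summable after the union bound, but only by virtue of $\kappa<1$. Assumption~3-(i) supplies this decay. A second point to handle carefully is that, conditionally on $\mathbf{T}$, the columns $\mathbf{x}_j$ remain independent of the conditioning event. This holds because $\mathbf{X}$, $\mathbf{T}$ and $\boldsymbol{\delta}$ are mutually independent.
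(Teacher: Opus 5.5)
Your proof is correct. It shares the paper's skeleton: condition on $(\mathbf{T},\boldsymbol{\delta})$, use the conditional centering of the $\mathbf{x}_j$, restrict to $\mathcal{A}_N$, take a union bound over the $\lfloor N^\beta\rfloor$ grid points, and apply Borel--Cantelli. The concentration step, however, is different.

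\textbf{The paper's route.} It treats the summands $f_j(\lambda)=\mathbf{x}_j^H\mathbf{H}^H(\lambda)\mathbf{j}_j$ as bounded variables and applies Bernstein's inequality. This forces it to:
\begin{enumerate}[(i)]
\item add the truncation events $\mathcal{B}_N=\{\max_j\|\mathbf{x}_j\|^2\le 2m\}$ and $\mathcal{C}_N=\{\max_j\|\boldsymbol{\delta}_j\|^2\le 4\log N\}$;
\item work with the crude envelope $b_N=O(N^{(1+\kappa)/2}\sqrt{\log N})$ and a variance bound $O(N^{\kappa-1}\log N)$, since it bounds $\|\mathbf{H}(\lambda)\|$ by $\|\mathbf{M}\|\,\|\mathbf{C}\|^{1/2}$ and never uses the decay of $\mathbf{M}^H\mathbf{d}_m(\lambda)$;
\item tune a vanishing threshold $x=N^{-(1-\varepsilon)}$ with $\varepsilon\in((1+\kappa)/2,1)$.
\end{enumerate}

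\textbf{Your route.} You observe that, given $(\mathbf{T},\boldsymbol{\delta})$, the cross term is a linear form in the i.i.d.\ sub-Gaussian entries of $\mathbf{X}$. The general Hoeffding bound then applies directly, with no truncation of the $\mathbf{x}_j$. Your conditioning event depends only on $(\mathbf{T},\boldsymbol{\delta})$, so the conditional law of $\mathbf{X}$ is untouched. By contrast, the paper applies Bernstein conditionally on the $\mathbf{X}$-dependent event $\mathcal{B}_N$, under which the centering of the $f_j$ is no longer automatic for non-symmetric entries.

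Your global control $\|\boldsymbol{\delta}^H\boldsymbol{\delta}\|\le 2N$ is legitimate because $p$ is fixed, and it replaces the row-wise $\log N$ bound. Combined with the column summability of $\mathbf{M}$, it gives the sharper $\sigma_N^2=O(N^{\kappa-2})$, although, as you note, $O(N^{\kappa-1})$ would already suffice.

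The paper's argument is more generic: it needs only boundedness after truncation plus a conditional variance bound, not the linear sub-Gaussian structure. Yours is shorter and avoids the truncation subtleties.

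One detail should be stated precisely. The event $\{\|\boldsymbol{\delta}^H\boldsymbol{\delta}\|\le 2N\}$ comes from the strong law, so it only holds eventually a.s.; you have not shown its complement has summable probability. Citing only the summability of $P(\mathcal{A}_N^c)$ is therefore not enough. Apply Borel--Cantelli to the intersections of $\{|\hat{\gamma}^{\mathrm{cross}}_m(\lambda)|>x\}$ with your conditioning event, whose probabilities your conditional bound makes summable, uniformly over the grid. Then use that the conditioning event itself holds eventually a.s. Alternatively, a Bernstein bound on the $p^2$ entries of $\frac{1}{N}\boldsymbol{\delta}^H\boldsymbol{\delta}-\mathbf{I}_p$ makes that complement exponentially small. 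This also covers a triangular-array setting where the strong law is not directly available.
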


\begin{IEEEproof}
Write $\hat{\gamma}^{\mathrm{cross}}_m(\lambda) = \displaystyle\frac{2}{N}\,\mathrm{Re}\!\Bigl(\sum_{j=0}^{N-1} f_j(\lambda)\Bigr)$
where $f_j(\lambda) = \mathbf{x}_j^H\,\mathbf{H}^H(\lambda)\,\mathbf{j}_j$, with $\mathbf{j}_j = \tau_j^{1/2}\,\boldsymbol{\Gamma}^{1/2}\,\boldsymbol{\delta}_j$ the $j$-th column of $\mathbf{K}\boldsymbol{\delta}$. Since $\mathbf{X}$ and $(\boldsymbol{\delta},\mathbf{T})$ are independent and $\mathbb{E}[\mathbf{x}_j]=\mathbf{0}$:
\[
\mathbb{E}[f_j(\lambda)\mid\mathbf{T},\boldsymbol{\delta}] = 0\,.
\]
Thus $\{f_j\}$ are conditionally centered and independent given $(\mathbf{T},\boldsymbol{\delta})$. The proof proceeds in three steps.

\medskip
\textit{Step 1: High-probability events.}
Define:
\[
\begin{gathered}
\mathcal{A}_N
=
\{\|\mathbf{T}\|_\infty \le N^\kappa\},
\qquad
\mathcal{B}_N
=
\Bigl\{
\max_{0\le j\le N-1}\|\mathbf{x}_j\|^2 \le 2m
\Bigr\},
\\[0.3em]
\text{and } \mathcal{C}_N
=
\Bigl\{
\max_{0\le j\le N-1}\|\boldsymbol{\delta}_j\|^2
\le 4\log N
\Bigr\}\,.
\end{gathered}
\]
By Lemma~\ref{lem:gev_T}, $\displaystyle\sum_N P(\mathcal{A}_N^c)<\infty$. By sub-Gaussian concentration \cite{Vershynin2018}, $P(\mathcal{B}_N^c)\leq 2N e^{-Cm}$, which is summable. For $\mathcal{C}_N$: since $\boldsymbol{\delta}_j\sim\mathcal{CN}(\mathbf{0},\mathbf{I}_p)$, the variable $\|\boldsymbol{\delta}_j\|^2$ is sub-exponential with mean $p$ (i.e., $P(\|\boldsymbol{\delta}_j\|^2) > t)=2\, e^{-C(t-p)})$. For $N$ large enough that $4\log N > p$, a union bound and standard sub-exponential tail bounds give:
\[
P(\mathcal{C}_N^c)
\le
2N\,e^{-C(4\log N - p)}
=
2\,e^{Cp}\,N^{1-4C}.
\]
which is $O\left(N^{1-4C}\right)$ for $N$ large; for some positive constant $C$ depending only on the sub-Gaussian norm, this is summable. Hence, $\displaystyle\sum_N P(\mathcal{C}_N^c) < \infty$.

\medskip
\textit{Step 2: Deterministic bounds on $\mathcal{A}_N\cap\mathcal{B}_N\cap\mathcal{C}_N$.}
On this event, for each $j$:
\[
\|f_j(\lambda)\| \leq \|\mathbf{H}(\lambda)\|\,\|\mathbf{x}_j\|\,\|\mathbf{j}_j\| \leq K_1\sqrt{2m}\, N^{\kappa/2} K_2^{1/2}\sqrt{4\log N}\,,
\]
where $K_1=\|\mathbf{M}\|\|\mathbf{C}\|^{1/2}$ and $K_2=\|\boldsymbol{\Gamma}\|$. Hence $|f_j(\lambda)|\leq b_N \triangleq C\,K_1 K_2^{1/2}\, N^{(1+\kappa)/2}\sqrt{\log N}$.

For the conditional variance:
\[
\mathbb{E}\!\left[|f_j(\lambda)|^2\,\Big|\,\mathbf{j}_j\right] = \mathbf{j}_j^H\,\mathbf{H}(\lambda)\,\mathbf{H}^H(\lambda)\,\mathbf{j}_j \leq K_1^2\,\|\mathbf{j}_j\|^2\,,
\]
so on $\mathcal{A}_N\cap\mathcal{B}_N\cap\mathcal{C}_N$:
\[
\begin{aligned}
\sigma_N^2
&\triangleq
\frac{1}{N^2}
\sum_{j=0}^{N-1}
\mathbb{E}\!\left[
|f_j(\lambda)|^2
\,\Big|\,
\mathbf{T},\boldsymbol{\delta}
\right]
\le
\frac{K_1^2 K_2 N^\kappa}{N^2}
\sum_{j=0}^{N-1}
\|\boldsymbol{\delta}_j\|^2\, ,
\\
&\le
\frac{
4 K_1^2 K_2 N^\kappa \log N
}{N}
\le
4 K_1^2 K_2
\,N^{\kappa-1}\log N\,.
\end{aligned}
\]

\textit{Step 3: Bernstein inequality and Borel--Cantelli.}
The Bernstein inequality \cite[Theorem.~2.8.1]{Vershynin2018} applied to $\mathrm{Re}[f_j(\lambda)]$ (which are real, centered, bounded by $b_N$, conditionally on $\mathbf{T},\boldsymbol{\delta}$) gives, for any $x>0$:
\[
\begin{aligned}
&P\!\left(
\left|\hat{\gamma}^{\mathrm{cross}}_m(\lambda)\right| > x
\,\Big|\,
\mathbf{T},
\boldsymbol{\delta},
\mathcal{A}_N\cap\mathcal{B}_N\cap\mathcal{C}_N
\right)
\\
&\qquad\le
2\exp\!\Biggl(
-\frac{N^2\,x^2/2}{
N^2\,\sigma_N^2+N\,x\,b_N/3
}
\Biggr).
\end{aligned}
\]
Fix $\varepsilon \in \left(\frac{1+\kappa}{2},\, 1\right)$ 
(which is non-empty since $\kappa < 1$ implies 
$\frac{1+\kappa}{2} < 1$):
\begin{itemize}
\item Variance term: setting $x = N^{-(1-\varepsilon)}$ so that $x^2 = N^{-2(1-\varepsilon)}$:
\end{itemize}
\begin{equation}
\frac{N^2 \,x^2}{N^2\,\sigma_N^2}
= \frac{x^2}{\sigma_N^2}
\geq \frac{N^{-2(1-\varepsilon)}}{C\,N^{\kappa-1}\log N}
= C'\, \frac{N^{2\varepsilon - 1 - \kappa}}{\log N}\,.
\end{equation}
This diverges to $+\infty$ since $2\varepsilon > 1+\kappa$.
\begin{itemize}
\item Single-term bound: Similarly,
\end{itemize}
\begin{align*}
N\,x\,b_N
&=
N\,N^{-(1-\varepsilon)}
\,O\!\left(
N^{(1+\kappa)/2}\sqrt{\log N}\, ,
\right)
\\
&=
O\!\left(
N^{\varepsilon+\frac{1+\kappa}{2}}
\sqrt{\log N}
\right).
\end{align*}
Hence, 
$\displaystyle\frac{N^2 \,x^2}{N \, x \,b_N}
= O\!\left(N^{\varepsilon - \frac{1+\kappa}{2}} \sqrt{\log N}\right)$ which diverges to $+\infty$ as soon as $\varepsilon > \displaystyle\frac{1+\kappa}{2}$. Under the condition $\varepsilon > \displaystyle\frac{1+\kappa}{2}$, both terms in the denominator of Bernstein's inequality are dominated by $N^2 \,x^2$, so that:
\begin{equation}
P\!\left(\left|\hat{\gamma}^{\mathrm{cross}}_m(\lambda)\right| > N^{-(1-\varepsilon)}\right)
\leq 2 \exp\!\left(-C \, N^{\delta}\right),
\end{equation}
for some constants $C>0$ and $\delta=\varepsilon - \displaystyle\frac{1+\kappa}{2} >0$. Integrating over $(\mathbf{T},\boldsymbol{\delta})$ leads to:
\begin{align*}
P\!\left(
\left|\hat{\gamma}^{\mathrm{cross}}_m(\lambda)\right|
> N^{-(1-\varepsilon)}
\right)
&\le
P(\mathcal{A}_N^c)
+ P(\mathcal{B}_N^c)
+ P(\mathcal{C}_N^c)
\\
&\quad
+ 2\,e^{-C\,N^{\delta}}\,.
\end{align*}

For the uniform convergence over the grid $\{\lambda_i\}_{i\in I}$ of size $\lfloor N^\beta\rfloor$, a union bound over the grid multiplies the bound by $N^\beta$, yielding $C N^\beta e^{-N^{2\varepsilon}/C}$, which is still summable for any $\beta$. Hence $\rho_2\xrightarrow{a.s.}0$.
\end{IEEEproof}

Combining Lemmas~\ref{lem:cross_lip} and~\ref{lem:cross_pw} in~\eqref{eq:cross_grid}:
\[
\sup_{\lambda\in[0,2\pi)}\left|\hat{\gamma}^{\mathrm{cross}}_m(\lambda)\right|
\leq \rho_1+\rho_2\xrightarrow{a.s.}0\,,
\]
completing the proof that Term~3 tends to zero almost surely.

\medskip
\noindent\textbf{Term 4:
$\displaystyle\sup_\lambda|\hat{\gamma}^{\mathrm{sign}}_m(\lambda)|$.} We recall that:
\[\hat{\gamma}^{\mathrm{sign}}_m(\lambda)
= \frac{1}{N}\,\mathbf{d}_m^H(\lambda)\,\mathbf{M}\, \boldsymbol{\Gamma}^{1/2}\,\boldsymbol{\delta}^H\,\boldsymbol{\delta}\,
  \boldsymbol{\Gamma}^{1/2}\,\mathbf{M}^H\,\mathbf{d}_m(\lambda)\, .
\]
where $\mathbf{M}\in\mathbb{C}^{m\times p}$, $\boldsymbol{\delta}\in\mathbb{C}^{N\times p}$, $\boldsymbol{\Gamma}\in\mathbb{C}^{p\times p}$ is Hermitian nonnegative definite, and $\mathbf{d}_m(\lambda)\in\mathbb{C}^m$ with $\|\mathbf{d}_m(\lambda)\|=1$.

The key structural difference with the cross term is that $\hat{\gamma}^{\mathrm{sign}}_m(\lambda)$ is quadratic in $\boldsymbol{\delta}$ (and not bilinear in $\mathbf{X}$ and $\boldsymbol{\delta}$). There is therefore only a single source of randomness ($\boldsymbol{\delta}$), and the reasoning is more straightforward but different.

\medskip
\noindent\textit{Step 1: reduction to a sum of i.i.d.\ complex
products.} Define the deterministic vectors: $\mathbf{h}(\lambda) \triangleq \mathbf{M}^H\mathbf{d}_m(\lambda) \in\mathbb{C}^p$ and $\mathbf{w}(\lambda) \triangleq \boldsymbol{\Gamma}\,\mathbf{h}(\lambda) = \boldsymbol{\Gamma}\,\mathbf{M}^H\,\mathbf{d}_m(\lambda)\in\mathbb{C}^p$, satisfying $\|\mathbf{h}(\lambda)\|\leq\|\mathbf{M}\|<\infty$ and $\|\mathbf{w}(\lambda)\|\leq\|\boldsymbol{\Gamma}\|\,\|\mathbf{M}\|
\triangleq K < \infty$ uniformly in $\lambda$ (Assumption~3).

Denote by $\boldsymbol{\delta}_j\in\mathbb{C}^p$ the $j$-th row of $\boldsymbol{\delta}$, viewed as a column vector. The
$\{\boldsymbol{\delta}_j\}_{j=0}^{N-1}$ are i.i.d. $\mathcal{CN}(\mathbf{0},\mathbf{I}_p)$.
Define the scalar random variables $w_j(\lambda) \triangleq \boldsymbol{\delta}_j^T\,\mathbf{w}(\lambda)
= \displaystyle\sum_{i=1}^p \delta_{ji}\,w_i(\lambda)$ and $
z_j(\lambda) \triangleq \boldsymbol{\delta}_j^T\,\mathbf{h}(\lambda)
= \displaystyle\sum_{i=1}^p \delta_{ji}\,h_i(\lambda)$. 
These are two distinct linear forms of the same vector $\boldsymbol{\delta}_j$,
hence correlated. Decomposing $\boldsymbol{\delta}^H\,\boldsymbol{\Gamma}\,
\boldsymbol{\delta} = \displaystyle\sum_{j=0}^{N-1}\boldsymbol{\delta}_j^*\,\boldsymbol{\delta}_j^T\,
\boldsymbol{\Gamma}$ and applying to $\mathbf{h}(\lambda)$:
\begin{equation}
\hat{\gamma}^{\mathrm{sign}}_m(\lambda)
= \frac{1}{N}\,\mathbf{h}^H(\lambda)\,\boldsymbol{\delta}^H\,
  \boldsymbol{\Gamma}\,\boldsymbol{\delta}\,\mathbf{h}(\lambda)
= \frac{1}{N}\sum_{j=0}^{N-1}
  z_j^\ast(\lambda)\,w_j(\lambda)\,.
\label{eq:sign_product}
\end{equation}
Note that $\hat{\gamma}^{\mathrm{sign}}_m(\lambda)\in\mathbb{R}$ since it is a diagonal entry of the Hermitian matrix
$\displaystyle\frac{1}{N}\,\mathbf{d}_m^H(\lambda)\, \mathbf{Y}\,\mathbf{Y}^H\, \mathbf{d}_m(\lambda)$.

Having expressed $\hat\gamma^{\mathrm{sign}}_m(\lambda)$ as an average of i.i.d. complex products~\eqref{eq:sign_product}, we next compute their joint distribution, which will reveal the mean and provide the sub-exponential bound needed for concentration.

\medskip
\noindent\textit{Step 2: joint distribution of $(z_j, w_j)^T$.} Since $z_j$ and $w_j$ are both linear forms of $\boldsymbol{\delta}_j\sim\mathcal{CN}(\mathbf{0},\mathbf{I}_p)$:
\[
\begin{pmatrix}z_j(\lambda)\\ w_j(\lambda)\end{pmatrix}
= \begin{pmatrix}\mathbf{h}^T(\lambda)\\ \mathbf{w}^T(\lambda)\end{pmatrix}
\boldsymbol{\delta}_j
\sim \mathcal{CN}\!\left(\mathbf{0},\,\boldsymbol{\Sigma}_{zw}\right),
\]
where the joint covariance matrix is:
\[
\boldsymbol{\Sigma}_{zw} =
\begin{pmatrix}
\|\mathbf{h}(\lambda)\|^2 & \mathbf{h}^H(\lambda)\,\boldsymbol{\Gamma}\,\mathbf{h}(\lambda) \\
\mathbf{h}^H(\lambda)\,\boldsymbol{\Gamma}^H\,\mathbf{h}(\lambda) &
\|\mathbf{w}(\lambda)\|^2
\end{pmatrix}\,.
\]
In particular, the cross-covariance satisfies:
\begin{align}
\mathbb{E}\!\left[z_j^\ast(\lambda)\,w_j(\lambda)\right]
&=
\mathbf{h}^H(\lambda)\,
\boldsymbol{\Gamma}\,
\mathbf{h}(\lambda)
\nonumber\\
&=
\mathbf{d}_m^H(\lambda)\,
\mathbf{M}\,
\boldsymbol{\Gamma}\,
\mathbf{M}^H\,
\mathbf{d}_m(\lambda).
\label{eq:sign_mean}
\end{align}
We now show this expectation vanishes as $m, N\to\infty$.

\medskip
\noindent\textit{Step 3: the mean tends to zero.} By~\eqref{eq:sign_mean}, we have: $\mathbb{E}\!\left[\hat{\gamma}^{\mathrm{sign}}_m(\lambda)\right]
= \mathbf{d}_m^H(\lambda)\,\mathbf{M}\,\boldsymbol{\Gamma}\,
  \mathbf{M}^H\,\mathbf{d}_m(\lambda)$. 
Since the columns of $\mathbf{M}$ are absolutely summable (Assumption~3), for each $j \in \{1,\ldots,p\}$, we can rewrite $\left|[\mathbf{M}^H\,\mathbf{d}_m(\lambda)]_j\right|^2$ as:
\begin{equation}
\frac{1}{m}\left|\sum_{k=0}^{m-1}[\mathbf{M}]_{k,j}\,e^{-ik\lambda}\right|^2\, 
\leq \frac{1}{m}\left(\sum_{k=0}^{m-1}|[\mathbf{M}]_{k,j}|\right)^2
\xrightarrow{m\to\infty} 0\,,
\end{equation}
since $\displaystyle\sum_k |[\mathbf{M}]_{k,j}| < \infty$ by Assumption~3-(i), so $\bigl(\displaystyle\sum_k |[\mathbf{M}]_{k,j}|\bigr)^2$ is a finite constant independent of $m$. Therefore, $\left|[\mathbf{M}^H\,\mathbf{d}_m(\lambda)]_j\right|^2 \to 0$ uniformly in $\lambda$, and:
\[
\left|\mathbb{E}\!\left[\hat{\gamma}^{\mathrm{sign}}_m(\lambda)\right]\right|
\leq \|\boldsymbol{\Gamma}\|\,\|\mathbf{M}^H\, \mathbf{d}_m(\lambda)\|^2
\xrightarrow{N\to\infty} 0\,,
\]
uniformly in $\lambda$. The mean, therefore, tends to zero, but we must also control the fluctuations $\displaystyle\frac{1}{N}\sum_j \xi_j(\lambda)$ around this vanishing mean.  The key insight is that each $\xi_j(\lambda)$ is sub-exponential (as a centered product of two sub-Gaussian variables), enabling a Bernstein bound.

\medskip
\noindent\textit{Step 4: Sub-exponential concentration.} Recall $\|X\|_{\psi_r}=\inf\{t>0:\mathbb{E}[e^{|X|^r/t^r}]\leq 2\}$ for $r\in\{1,2\}$;
$X$ is sub-Gaussian iff $\|X\|_{\psi_2}<\infty$.

Let $\xi_j(\lambda) \triangleq z_j^\ast(\lambda)\,w_j(\lambda) - \mathbb{E}[z_j(\lambda)^\ast \, w_j(\lambda)]$. Since $z_j$, $w_j$ are Gaussian linear forms (Step~2), they are sub-Gaussian with $\|z_j(\lambda)\|_{\psi_2}\leq C\|\mathbf{M}\|$ and $\|w_j(\lambda)\|_{\psi_2}\leq C\,\|\boldsymbol{\Gamma}\|\|\mathbf{M}\|$ uniformly in $\lambda$ \cite[Ex.~2.5.8]{Vershynin2018}. Their product $\xi_j(\lambda)$ is sub-exponential \cite[Lemma~2.7.7]{Vershynin2018}:
\begin{align}
\|\xi_j(\lambda)\|_{\psi_1}
&\le b
\triangleq
C\,\|\mathbf{M}\|^2\,
\|\boldsymbol{\Gamma}\|,
\\
\mathbb{E}\!\left[|\xi_j(\lambda)|^2\right]
&\le \nu^2
\triangleq
C\,\|\mathbf{M}\|^4\,
\|\boldsymbol{\Gamma}\|^2.
\end{align}
uniformly in $j$ and $\lambda$. Bernstein's inequality \cite[Theorem~2.8.1]{Vershynin2018} then gives for any $t>0$:
\begin{equation}\label{eq:sign_bernstein_final}
P\!\left(\left|\frac{1}{N}\sum_{j=0}^{N-1}\xi_j(\lambda)\right|>t\right)
\leq 2\exp\!\left(-CN\min\!\left(\frac{t^2}{\nu^2},\frac{t}{b}\right)\right)\,,
\end{equation}
which is summable in $N$ for any fixed $t>0$. Recalling that $\hat{\gamma}^{\mathrm{sign}}_m(\lambda)
= \displaystyle \frac{1}{N}\sum_{j=0}^{N-1} \xi_j(\lambda)
+ \mathbb{E}\!\left[\hat{\gamma}^{\mathrm{sign}}_m(\lambda)\right]$, and using Step~3, which ensures that $\mathbb{E}\!\left[\hat{\gamma}^{\mathrm{sign}}_m(\lambda)\right]
\longrightarrow 0$, we obtain that, for any fixed $\lambda$, $\hat{\gamma}^{\mathrm{sign}}_m(\lambda) \xrightarrow{a.s.} 0$ by the Borel--Cantelli lemma. 

Pointwise almost sure convergence is established. We extend it uniformly to $\lambda\in[0,2\pi)$ via a discretisation argument analogous to those used above.

\medskip
\noindent\textit{Step 5: uniform control over $[0,2\pi)$.} For $|\lambda-\lambda'|\leq 2\pi/\lfloor N^\beta\rfloor$ and by setting $A_j = |z_j^\ast(\lambda)|
\,|w_j(\lambda)-w_j(\lambda')|$ and $B =|z_j^\ast(\lambda)-z_j^\ast(\lambda')|
\,|w_j(\lambda')|$, we can rewrite $\left|
\hat{\gamma}^{\mathrm{sign}}_m(\lambda)
-
\hat{\gamma}^{\mathrm{sign}}_m(\lambda')
\right|$ as 
\begin{equation*}
\Bigl|
\frac{1}{N}
\sum_{j=0}^{N-1}
\bigl(
z_j^\ast(\lambda)w_j(\lambda)
-
z_j^\ast(\lambda')w_j(\lambda')
\bigr)
\Bigr| \le \frac{1}{N} \sum_{j=0}^{N-1}(A_j+B_j)\, .
\end{equation*}
Since $z_j(\lambda)-z_j(\lambda') = \boldsymbol{\delta}_j^T\,\mathbf{M}^H \left(\mathbf{d}_m(\lambda) -\mathbf{d}_m(\lambda')\right)$ and similarly for $w_j$, and since $\lambda\mapsto\mathbf{d}_m(\lambda)$ is Lipschitz with constant
$(m-1)/\sqrt{m}$:
\[
\begin{aligned}
&\max\!\left(
\left|z_j(\lambda)-z_j(\lambda')|,
|w_j(\lambda)-w_j(\lambda')\right|
\right)
\\
&\qquad\le
K\,
\|\boldsymbol{\delta}_j\|
\|\mathbf{M}\|
\frac{m-1}{\sqrt m}
\,|\lambda-\lambda'|\,.
\end{aligned}
\]
Bounding $|z_j(\lambda)|\leq\|\boldsymbol{\delta}_j\|\|\mathbf{M}\|$ and $|w_j(\lambda)|\leq\|\boldsymbol{\delta}_j\|K$, one obtains:
\[
\left|\hat{\gamma}^{\mathrm{sign}}_m(\lambda)
-\hat{\gamma}^{\mathrm{sign}}_m(\lambda')\right|
\leq \frac{C\,\|\mathbf{M}\|^2 K}{N}
\,\frac{(m-1)}{\sqrt{m}\,\lfloor N^\beta\rfloor}
\,\frac{1}{N}\sum_{j=0}^{N-1}\|\boldsymbol{\delta}_j\|^2\,.
\]
Since $\sum_j\|\boldsymbol{\delta}_j\|^2/N\overset{a.s.}{\to}p$ (Law of Large Numbers, each $\|\boldsymbol{\delta}_j\|^2$ has mean $p$) and $(m-1)/(\sqrt{m}\lfloor N^\beta\rfloor)=O(N^{1/2-\beta})\to 0$ for $\beta>1/2$, 
$\displaystyle \max_{i\in I}\sup_{\lambda\in[\lambda_i,\lambda_{i+1})}
\left|\hat{\gamma}^{\mathrm{sign}}_m(\lambda)
-\hat{\gamma}^{\mathrm{sign}}_m(\lambda_i)\right|
\overset{a.s.}{\longrightarrow} 0$. Combining with the pointwise almost sure convergence of Step~4 and a union bound over the $\lfloor N^\beta\rfloor$ grid points (valid since the bound in~\eqref{eq:sign_bernstein_final} is summable in $N$), we obtain $\displaystyle \sup_{\lambda\in[0,2\pi)}\hat{\gamma}^{\mathrm{sign}}_m(\lambda)
\overset{a.s.}{\longrightarrow} 0$.

\subsection*{S-A.2\quad Proof of Theorem~3 (Consistency of $\widetilde{\mathbf{C}}_{FP}$ and $\check{\mathbf{C}}_{FP}$)}

The proof follows the same idea. Recall that $\widetilde{\mathbf{C}}_{FP}=\mathcal{T}
\left(\hat{\mathbf{C}}_{FP}\right)$ and 
$\check{\mathbf{C}}_{FP}=\widetilde{\mathbf{C}}_{FP}/
\mathbb{E}[v(\tau\,\xi)\,\tau]$. 
The equation to prove is $\left\|\widetilde{\mathbf{C}}_{FP}
-\mathbb{E}[v(\tau\,\xi)\,\tau]\,\mathbf{C}\right\|
\xrightarrow{a.s.}0$, from which $\|\check{\mathbf{C}}_{FP} - \mathbf{C}\| \xrightarrow{a.s.}0$ follows immediately by dividing by the positive constant $\mathbb{E}[v(\tau\,\xi)\,\tau]$. This splits as:
\begin{align}
\left\|\mathcal{T}\left(\hat{\mathbf{C}}_{FP}\right)
-\mathbb{E}[v(\tau\,\xi)\,\tau]\,\mathbf{C}\right\|
& \leq \left\|\mathcal{T}\left(\hat{\mathbf{C}}_{FP}-\hat{\mathbf{S}}\right)\right\| \nonumber \\
& +\left\|\mathcal{T}\left(\hat{\mathbf{S}}\right)
-\mathbb{E}[v(\tau\,\xi)\,\tau]\,\mathbf{C}\right\| \, , \nonumber \\
 &\triangleq  \, \eta_1+\eta_2\, ,
\label{eq:thm3_split}
\end{align}
where $\hat{\mathbf{S}}=\displaystyle \frac{1}{N}\sum_{i=0}^{N-1} v(\tau_i\, \xi)\,\mathbf{y}_{wi}\, \mathbf{y}_{wi}^H$.

\noindent\textbf{Part 1: convergence of
$\eta_1=\left\|\mathcal{T}\left(\hat{\mathbf{C}}_{FP}
-\hat{\mathbf{S}}\right)\right\|$.}

It is proven in~\cite{Couillet15b} that
$\left\|\hat{\mathbf{C}}_{FP}-\hat{\mathbf{S}}\right\|\xrightarrow{a.s.}0$. Since the Toeplitz rectification satisfies 
$\|\mathcal{T}(\mathbf{A})\| \leq \displaystyle\sup_\lambda |\mathbf{d}_m^H(\lambda)\mathbf{A}
\mathbf{d}_m(\lambda)| \leq \|\mathbf{A}\|$ for any Hermitian matrix $\mathbf{A}$, we have:
\begin{align*}
\left\|\mathcal{T}\left(\hat{\mathbf{C}}_{FP}-\hat{\mathbf{S}}\right)\right\|
&\leq\sup_{\lambda\in[0,2\pi)}\left|
\mathbf{d}_m^H(\lambda)\,\left(\hat{\mathbf{C}}_{FP}-\hat{\mathbf{S}}\right)\,
\mathbf{d}_m(\lambda)\right| \, , \\
& \leq\left\|\hat{\mathbf{C}}_{FP}-\hat{\mathbf{S}}\right\|
\xrightarrow{a.s.}0\, .
\end{align*}

\noindent\textbf{Part 2: convergence of $\eta_2 = \left\|\mathcal{T}\!\left(\hat{\mathbf{S}}\right)
- \mathbb{E}[v(\tau\,\xi)\,\tau]\,\mathbf{C}\right\|$.}

\begin{lemma}
\label{lem:gamma_shat}
One has:
\begin{align}
\hat{\gamma}^{\hat{\mathbf{S}}}_m(\lambda)
&=\mathbf{d}_m^H(\lambda)\,\hat{\mathbf{S}}\,\mathbf{d}_m(\lambda)\, ,
\label{eq:gamma_shat} \\
\mathbb{E}\!\left[\hat{\gamma}^{\hat{\mathbf{S}}}_m(\lambda)\right]
&=\mathbb{E}[v(\tau\,\xi)\,\tau]\,\mathbf{d}_m^H(\lambda)\,\mathbf{C}\,
\mathbf{d}_m(\lambda)\, .
\label{eq:mean_shat}
\end{align}
\end{lemma}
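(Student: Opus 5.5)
The plan has two parts: a purely algebraic identity for \eqref{eq:gamma_shat}, then an expectation computation for \eqref{eq:mean_shat} that uses the independence between the oracle weights and the speckle.

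\textbf{Representation \eqref{eq:gamma_shat}.} We repeat the argument of Lemma~\ref{lem:gamma_hat} with $\frac{1}{N}\mathbf{Y}\mathbf{Y}^H$ replaced by the Hermitian matrix $\hat{\mathbf{S}}$. Expand
\[
\mathbf{d}_m^H(\lambda)\,\hat{\mathbf{S}}\,\mathbf{d}_m(\lambda)=\frac{1}{m}\sum_{l,l'} e^{-i(l'-l)\lambda}\,[\hat{\mathbf{S}}]_{l,l'},
\]
and group the terms by diagonal index $k=l-l'$. Each group averages the $k$-th diagonal of $\hat{\mathbf{S}}$ with weight $1/m$, which is exactly the coefficient $\tilde a_k$ defining $\mathcal{T}(\hat{\mathbf{S}})$. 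Hence $\hat{\gamma}^{\hat{\mathbf{S}}}_m(\lambda)$ is the symbol of the Toeplitz matrix $\mathcal{T}(\hat{\mathbf{S}})$. This is what makes the bound \eqref{eq:toeplitz_bound} usable for $\eta_2$.

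\textbf{Mean \eqref{eq:mean_shat}.} We must decide which $\mathbf{y}_{wi}$ enters $\hat{\mathbf{S}}$. The target involves $\mathbf{C}$, and Theorem~\ref{thm:consist_fp} applies the fixed point to the unwhitened data. So in the source-free term the relevant sample is $\sqrt{\tau_i}\,\mathbf{C}^{1/2}\mathbf{x}_i$, and we take
\[
\hat{\mathbf{S}}=\frac{1}{N}\sum_i v(\tau_i\xi)\,\tau_i\,\mathbf{C}^{1/2}\mathbf{x}_i\mathbf{x}_i^H\mathbf{C}^{1/2}.
\]
The weights $v(\tau_i\xi)$ depend only on $\{\tau_j\}$. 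Through \eqref{xi}, however, $\xi$ depends on all the $\tau_j$, so the summands are not exactly i.i.d. in $\tau$.

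To handle this, we replace $\xi$ by its deterministic limit $\xi_\infty$, the solution of $\mathbb{E}[\psi(\tau\xi)/(1+c\psi(\tau\xi))]=1$; the a.s. convergence $\xi\to\xi_\infty$ comes from the strong LLN. With this convention the expectation is read as $\mathbb{E}[v(\tau\xi)\tau]$ with $\xi$ deterministic, matching the statement's notation. Then, by independence of $\mathbf{X}$ and $\mathbf{T}$, we condition on $\mathbf{T}$ and use $\mathbb{E}[\mathbf{x}_i\mathbf{x}_i^H]=\mathbf{I}_m$ (Assumption~2-(iv)):
\[
\mathbb{E}\big[\mathbf{d}_m^H\hat{\mathbf{S}}\mathbf{d}_m\big]=\frac{1}{N}\sum_i \mathbb{E}[v(\tau_i\xi)\tau_i]\;\mathbf{d}_m^H\mathbf{C}\,\mathbf{d}_m=\mathbb{E}[v(\tau\xi)\tau]\,\gamma_m(\lambda).
\]
This is the same computation as Term~2 in S-A.1. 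Finiteness of the expectation follows from $v(\tau\xi)\tau=\psi(\tau\xi)/\xi\le\Phi_\infty/\xi$.

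\textbf{Signal-bearing part.} If $\hat{\mathbf{S}}$ is meant to contain the sources, the signal and cross terms contribute means $\mathbb{E}[v(\tau\xi)]\,\mathbf{d}_m^H\mathbf{M}\boldsymbol{\Gamma}\mathbf{M}^H\mathbf{d}_m$ and $0$, respectively. These are handled exactly as Terms~3--4 of S-A.1: the cross mean vanishes by conditional centering, and the signal mean is $O\big(\tfrac1m(\sum_k|M_{kj}|)^2\big)$ uniformly in $\lambda$, so it vanishes asymptotically. In that case \eqref{eq:mean_shat} holds up to a uniformly vanishing correction, which is all that the bound on $\eta_2$ requires.

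\textbf{Main obstacle.} The delicate point is the dependence of $\xi$ on the whole texture sample, which breaks exact independence. I would first try the deterministic-$\xi_\infty$ substitution. The resulting error is $\frac1N\sum_i|v(\tau_i\xi)-v(\tau_i\xi_\infty)|\tau_i$ times $\|\mathbf{C}\|$, controlled by continuity of $\psi$ and the LLN. Only afterwards would I take expectations.
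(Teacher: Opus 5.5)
Your argument has the same core as the paper's proof: the first identity is the same diagonal regrouping as in Lemma~\ref{lem:gamma_hat}, and the mean comes from conditioning on $\mathbf{T}$ and using $\mathbb{E}[\mathbf{x}_i\mathbf{x}_i^H]=\mathbf{I}_m$, as for Term~2. Your departures from the written proof are corrections, and they matter.

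First, the paper computes with $\mathbf{Y}_w\mathbf{D}\mathbf{Y}_w^H$. Its source-free mean is $N\,\mathbb{E}[v(\tau\xi)\tau]\,\mathbf{I}_m$, which yields $\mathbb{E}[v(\tau\xi)\tau]$ rather than $\mathbb{E}[v(\tau\xi)\tau]\,\gamma_m(\lambda)$. Your unwhitened surrogate is the one for which the identity holds. It is also the one for which $\eta_1$ (the comparison with Maronna's estimator on the unwhitened $\mathbf{y}_i$) makes sense.

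Second, the paper only disposes of cross terms. If $\hat{\mathbf{S}}$ carries the sources, the signal--signal mean $\mathbb{E}[v(\tau\xi)]\,\mathbf{d}_m^H\mathbf{M}\boldsymbol{\Gamma}\mathbf{M}^H\mathbf{d}_m$ is $O(1/m)$, not zero. Your form, \emph{up to a uniformly vanishing correction}, is therefore the accurate one.

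Third, the paper silently treats $\xi$ as deterministic. Here your $\xi_\infty$ substitution is more than the identity itself needs. Since $\xi$ is a symmetric function of $(\tau_0,\ldots,\tau_{N-1})$, exchangeability gives $\mathbb{E}[v(\tau_i\xi)\tau_i]=\mathbb{E}[v(\tau_0\xi)\tau_0]$ for every $i$. The identity therefore holds exactly with this constant. It is integrable because $v$ is nonincreasing, so $v(\tau\xi)\tau\le u(0)\,\tau$, and it converges to $\mathbb{E}[v(\tau\xi_\infty)\tau]$ by dominated convergence.

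The substitution is genuinely needed downstream, to get $\frac{1}{N}\sum_i v(\tau_i\xi)\tau_i\to\mathbb{E}[v(\tau\xi_\infty)\tau]$ almost surely in $\eta_2$. Because the $\tau_i$ are unbounded, plain continuity of $\psi$ is not enough there. The right tool is $\sup_{x\ge 0}|\psi(rx)-\psi(x)|\to 0$ as $r\to 1$, valid since $\psi$ is continuous, nondecreasing and bounded. With $r=\xi/\xi_\infty$ this gives $\max_i|\psi(\tau_i\xi)-\psi(\tau_i\xi_\infty)|\to 0$.

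One small slip, inherited from the main text: $\sup\psi=\Phi_\infty/(1-c\,\Phi_\infty)$, not $\Phi_\infty$. Finiteness is unaffected.
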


\begin{IEEEproof}
Equation~\eqref{eq:gamma_shat} follows by the same computation as Lemma~\ref{lem:gamma_hat}. For~\eqref{eq:mean_shat}, letting
$\mathbf{D}$ be the diagonal matrix with entries $\{v(\tau_i\,\xi)\}$:
\[
\mathbb{E}\!\left[\hat{\gamma}^{\hat{\mathbf{S}}}_m(\lambda)\right] =\mathbf{d}_m^H(\lambda)\,\mathbb{E}\!\left[ \frac{\mathbf{Y}_w\,\mathbf{D}\,\mathbf{Y}_w^H}{N}\right] \mathbf{d}_m(\lambda)\, .
\]
Since the signal term $\mathbf{C}^{-1/2}\mathbf{M}\,\boldsymbol{\Gamma}^{1/2}\,\boldsymbol{\delta}^H$ 
in $\mathbf{Y}_w$ is independent of $\mathbf{D}$ (which depends only on $\{\tau_i\}$), and since 
$\mathbf{D}$ weights only $\mathbf{X}$ (cf.\ the definition of $\hat{\mathbf{S}}$ in Eq.~(7)), the cross terms vanish and
$\left[\mathbb{E}\left[\mathbf{Y}_w\,\mathbf{D}\,\mathbf{Y}_w^H\right]\right]_{ij}
= \displaystyle \sum_{n=0}^{N-1}\mathbb{E}[v(\tau_n\, \xi)\tau_n]
= N\,\mathbb{E}[v(\tau\xi)\tau]$\,,
so Equation~\eqref{eq:mean_shat} follows. We split $\eta_2$ as
\begin{equation}
\begin{aligned}
\eta_2 &= \underbrace{ \sup_{\lambda}\Bigl| \hat{\gamma}^{\hat{\mathbf S}}_m(\lambda) - \mathbb{E}\left[ \hat{\gamma}^{\hat{\mathbf S}}_m(\lambda)
\right] \Bigr| }_{\eta_{11}}
\\
&\quad+ \underbrace{ \sup_{\lambda}\Bigl| \mathbb{E}\!\left[ \hat{\gamma}^{\hat{\mathbf S}}_m(\lambda) \right] - \mathbb{E}[v(\tau\,\xi)\,\tau]\, \gamma_m(\lambda) \Bigr| }_{\eta_{12}} .
\end{aligned}
\end{equation}

The term $\eta_{11}\xrightarrow{a.s.}0$ follows by the same concentration argument as Lemmas~S-4 and~S-5, with $\|\mathbf{T}\|_\infty$ replaced by $\|\mathbf{T}\|_\infty\,\|\mathbf{D}\|_\infty$, where $\mathbf{D}=\mathrm{diag}(v(\tau_0\,\xi),\ldots,v(\tau_{N-1}\,\xi))$ satisfies $\|\mathbf{D}\|_\infty \leq \Phi_\infty/\xi < \infty$ almost surely. The term $\eta_{12}=0$ follows directly from Lemma~S-9 and Eq.~\eqref{eq:mean_shat}, since the bias of $\hat{\gamma}^{\hat{\mathbf{S}}}_m(\lambda)$ equals $\mathbb{E}[v(\tau\,\xi)\,\tau]\,\gamma_m(\lambda)$, which is exactly the target.
\end{IEEEproof}

Since $\eta_1,\eta_2\xrightarrow{a.s.}0$, Theorem~3 follows.\qed

\section{Proofs of Theorems~2 and~4}
\label{sec:supB}
\begin{IEEEproof}
The proofs for $\check{\boldsymbol{\Sigma}}_{SCM}$ and $\check{\boldsymbol{\Sigma}}_{FP}$ are identical; let $\check{\boldsymbol{\Sigma}}$ and $\check{\mathbf{C}}$ denote either pair.

Since $\check{\mathbf{y}}_{wi}=\check{\mathbf{C}}^{-1/2}\,\mathbf{y}_i$, the estimator $\check{\boldsymbol{\Sigma}}$ satisfies a fixed-point equation in $\check{\mathbf{C}}^{-1/2}\,\mathbf{y}_i$. Define $\widetilde{\boldsymbol{\Sigma}} \triangleq
\mathbf{C}^{-1/2}\,\check{\mathbf{C}}^{1/2}\,\check{\boldsymbol{\Sigma}}
\,\check{\mathbf{C}}^{1/2}\,\mathbf{C}^{-1/2}$.
Using the identity $\check{\mathbf{y}}_{wi} = \check{\mathbf{C}}^{-1/2} \,\mathbf{C}^{1/2}\,\mathbf{y}_{wi}$ (since $\check{\mathbf{y}}_{wi} = \check{\mathbf{C}}^{-1/2}\mathbf{y}_i$ and $\mathbf{y}_{wi} = \mathbf{C}^{-1/2}\mathbf{y}_i$), a direct substitution into the fixed-point equation of $\check{\boldsymbol{\Sigma}}$ shows that $\widetilde{\boldsymbol{\Sigma}}$ satisfies the \emph{same} fixed-point equation as $\hat{\boldsymbol{\Sigma}}$ on $\mathbf{Y}_w$.

Since $\check{\mathbf{C}}$ is positive definite (a.s.), $\widetilde{\boldsymbol{\Sigma}}$ inherits the required positivity and trace normalisation, so uniqueness of the fixed point~\cite{Couillet15b} gives $\widetilde{\boldsymbol{\Sigma}} = \hat{\boldsymbol{\Sigma}}$, which, upon rearranging, gives
\begin{equation}
\check{\boldsymbol{\Sigma}}
= \check{\mathbf{C}}^{-1/2}\,\mathbf{C}^{1/2}\,\hat{\boldsymbol{\Sigma}}
  \,\mathbf{C}^{1/2}\,\check{\mathbf{C}}^{-1/2}\,.
\label{eq:Sigma_check_hat}
\end{equation}

Since $\|\hat{\boldsymbol{\Sigma}}-\hat{\mathbf{S}}\|\xrightarrow{a.s.}0$ by \cite{Couillet15b}, it
suffices to show that $\|\check{\boldsymbol{\Sigma}}-\hat{\boldsymbol{\Sigma}}\|\xrightarrow{a.s.}0$.
From~\eqref{eq:Sigma_check_hat} and the triangle inequality:
\begin{equation}
\begin{aligned}
\left\|
\check{\boldsymbol{\Sigma}}
-
\hat{\boldsymbol{\Sigma}}
\right\|
&\le
\left\|
\check{\mathbf C}^{-1/2}\mathbf C^{1/2}
-\mathbf I
\right\|
\left\|
\hat{\boldsymbol{\Sigma}}
\right\|
\left\|
\mathbf C^{1/2}\check{\mathbf C}^{-1/2}
\right\|
\\
&\quad+
\left\|
\hat{\boldsymbol{\Sigma}}
\right\|
\left\|
\mathbf C^{1/2}\check{\mathbf C}^{-1/2}
-\mathbf I
\right\|.
\end{aligned}
\end{equation}

By Theorems~1/3, $\left\|\check{\mathbf{C}}-\mathbf{C}\right\| \xrightarrow{a.s.} 0$. Since the coefficients $\{c_k\}$ are absolutely summable (Assumption~2-(ii)), $\mathbf{C}$ has spectrum bounded in a compact interval $(0,\infty)$, and for $N$ large enough $\check{\mathbf{C}}$ shares this property a.s. Since the map $\mathbf{A}\mapsto\mathbf{A}^{-1/2}$ is Lipschitz on any compact subset of positive-definite matrices \cite[Theorem~X.1.1]{bhatia1997}, and since $\check{\mathbf{C}}$ is a.s.\ positive definite with spectrum in a fixed compact interval
$(0,\infty)$ for $N$ large enough (by Theorems~S-1 and~S-3 above), we obtain:
\begin{equation}
\left\|\check{\mathbf{C}}^{-1/2}\,\mathbf{C}^{1/2}-\mathbf{I}\right\|
\leq C\,\left\|\check{\mathbf{C}}-\mathbf{C}\right\|
\xrightarrow{a.s.} 0\,,
\end{equation}
for some Lipschitz constant $C < \infty$.

The norms $\left\|\hat{\boldsymbol{\Sigma}}\right\|$ and $\left\|\mathbf{C}^{1/2}\, \check{\mathbf{C}}^{-1/2}\right\|$ are almost surely bounded: the former by \cite{Couillet15b}, the latter because $\left\|\check{\mathbf{C}}^{-1}\right\|$ is bounded as $\check{\mathbf{C}}\to\mathbf{C}$ which has bounded away-from-zero spectrum. Hence $\left\|\check{\boldsymbol{\Sigma}}-\hat{\boldsymbol{\Sigma}}\right\|\xrightarrow{a.s.} 0$, completing the proof. 
\end{IEEEproof}
\putbib[bibliographieeug]
\end{bibunit}

\end{document}